\documentclass{article}

    \PassOptionsToPackage{numbers, compress}{natbib}

\usepackage[preprint]{neurips_2025}




\usepackage[utf8]{inputenc} 
\usepackage[T1]{fontenc}    
\usepackage{hyperref}       
\usepackage{url}            
\usepackage{booktabs}       
\usepackage{amsfonts}       
\usepackage{nicefrac}       
\usepackage{microtype}      
\usepackage{xcolor}         
\usepackage{enumitem}
\usepackage{amsmath}
\usepackage{amssymb}
\usepackage{mathtools}
\usepackage{amsthm}
\usepackage{nccmath}
\usepackage{multirow}
\usepackage[ruled,linesnumbered, vlined]{algorithm2e}
\usepackage{algorithmic}

\theoremstyle{plain}
\newtheorem{theorem}{Theorem}[section]

\newtheorem{lemma}[theorem]{Lemma}

\theoremstyle{definition}
\newtheorem{definition}[theorem]{Definition}
\newtheorem{assumption}[theorem]{Assumption}
\theoremstyle{remark}

\newcommand{\fullset}{D_{\mathbb{N}}}
\newcommand{\datasubset}{\mathcal{S}}
\newcommand{\subsetsize}{|\datasubset|}

\newcommand{\fullmodel}{\Theta_{D_\clientset}}

\newcommand{\subsetmodel}{\Theta_{\datasubset}}
\newcommand{\reportedsubsetmodel}{\Theta_{\reportedsubset}}

\newcommand{\sv}{\phi^{SV}}
\newcommand{\tsv}{\phi^{TSV}}

\newcommand{\empiricaldatavalue}{\widehat{\phi}}
\newcommand{\weightsv}{w^{SV}}
\newcommand{\betasv}{\beta^{SV}_i}

\newcommand{\clientlevelgranddataset}{\mathbb{D}_{\mathbb{N}}}
\newcommand{\clientleveldatasubset}{\mathbb{D}_{\mathcal{C}}}
\newcommand{\stoi}{D^{\mathcal{S}}_{i}}

\newcommand{\reportedstoi}{\widehat{D}^{\mathcal{S}}_{i}}

\newcommand{\reportedstominusi}{\widehat{D}^{\mathcal{S}}_{-i}}
\newcommand{\reportedsubset}{\widehat{\datasubset}}

\newcommand{\splus}{\mathcal{S}^{+}}

\newcommand{\stominusi}{D^{\mathcal{S}}_{-i}}

\newcommand{\clientsins}{\mathbb{N}(\mathcal{S})}

\newcommand{\clientset}{\mathbb{N}}
\newcommand{\clientsubset}{\mathcal{C}}

\newcommand{\realrepresentationi}{\theta^{(t)}_{i|\datasubset}}

\newcommand{\realrepresentationsi}{\mathbf{\theta}_{i|\datasubset}}
\newcommand{\reportedrepresentationsi}{\mathbf{\theta}_{i|\reportedsubset}}

\DeclareRobustCommand{\set}[1][]{\{#1\}}

\title{Data Overvaluation Attack and Truthful Data Valuation in Federated Learning}

%


\newcommand{\osaka}{$^{1}$}
\newcommand{\kyoto}{$^{2}$}
\newcommand{\scityo}{$^{3}$}
\newcommand{\shenzhen}{$^{4}$}
\newcommand{\oseikei}{$^{5}$}

\author{%
  \begin{tabular}{c}
    Shuyuan Zheng\osaka \qquad Sudong Cai\kyoto \qquad Chuan Xiao\osaka \qquad Yang Cao\scityo \\[0.5em]
    Jianbin Qin\shenzhen\qquad Masatoshi Yoshikawa\oseikei\qquad Makoto Onizuka\osaka\\[0.8em]
  \end{tabular}\\
  \begin{tabular}{c}
      \osaka The University of Osaka, \kyoto Kyoto University,\\ 
      \scityo Institute of Science Tokyo, \shenzhen Shenzhen University, \oseikei Osaka Seikei University\\
  \end{tabular}\\
    \texttt{zheng@ist.osaka-u.ac.jp}
}

\begin{document}

\maketitle

\begin{abstract}
In collaborative machine learning (CML), data valuation, i.e., evaluating the contribution of each client's data to the machine learning model, has become a critical task for incentivizing and selecting positive data contributions. However, existing studies often assume that clients engage in data valuation truthfully, overlooking the practical motivation for clients to exaggerate their contributions. To unlock this threat, this paper introduces the \textit{data overvaluation attack}, enabling strategic clients to have their data significantly overvalued in federated learning, a widely adopted paradigm for decentralized CML. Furthermore, we propose a Bayesian truthful data valuation metric, named \textit{Truth-Shapley}. Truth-Shapley is the unique metric that guarantees some promising axioms for data valuation while ensuring that clients' optimal strategy is to perform truthful data valuation under certain conditions. Our experiments demonstrate the vulnerability of existing data valuation metrics to the proposed attack and validate the robustness and effectiveness of Truth-Shapley.
\end{abstract}

\section{Introduction}

As data protection regulations become increasingly stringent, it is becoming more challenging for enterprises to collect sufficient high-quality training data and share the data with each other for machine learning (ML). 
To address this issue, federated learning (FL)~\citep{mcmahan2017communication}, a decentralized paradigm for collaborative machine learning (CML), has emerged as a promising solution, which enables enterprises to train accurate ML models by uploading some transformed representations of data, such as local models and gradients, instead of sharing the raw data. 
Given that enterprises' datasets are often highly heterogeneous, a critical task in CML is data valuation, that is, how to reasonably evaluate the contribution of different heterogeneous datasets to model performance improvement. 
Based on the datasets' data values, i.e., the outcome of data valuation, enterprises can select higher-quality data to further enhance model performance and fairly allocate rewards among themselves, such as the revenue made by deploying the model.

In the literature, \textit{marginal contribution}-based valuation metrics, represented by the leave-one-out (LOO)~\citep{cook1977detection} and the Shapley value (SV)~\citep{shapley1953value}, have been widely adopted for data valuation in CML. 
These metrics evaluate data value by measuring the impact of including or excluding a dataset on model performance. 
For example, the SV requires iterating over all possible combinations of the datasets and computing the model utility improvement contributed by each dataset to each combination, leading to significant computational costs for repeated model retraining. 
Consequently, extensive research efforts (e.g., \citep{ghorbani2019data, jia2019towards, jia2019efficient, kwon2021efficient}) have been devoted to improving the computational efficiency of data valuation to enhance its practicality.
However, existing studies overlook a critical trust vulnerability in data overvaluation within FL scenarios: \textit{Clients may misreport their uploaded data representations to manipulate the utilities of the retrained models, thereby inflating their data value for illicit gains in data selection and reward allocation.}
This gap motivates us to conduct the first exploration of data overvaluation and truthful data valuation in FL.

In this paper, we propose a novel attack method grounded in a strong theoretical foundation, targeting data valuation in FL scenarios: the data overvaluation attack. 
This attack strategically manipulates the attacker's local data during model retraining to selectively enhancing or degrading the utilities of the retrained models, thereby inflating the attacker's data value.
Notably, the attack works against all linear data valuation metrics, which cover most of the state-of-the-arts (SOTAs) including the LOO and the SV. 
Our experimental results demonstrate that the data overvaluation attack can increase the attacker’s SV and LOO value by \textbf{77\% and 97\% on average} across various FL tasks, respectively.

Next, we explore how to ensure \textit{truthful data valuation}.
We theoretically characterize the subclass of \textit{linear} data valuation metrics that can resist the data overvaluation attack under certain conditions. 
This characterization is fundamental since most of mainstream data valuation metrics are linear, inlcuding the LOO, the SV, Beta Shapley~\citep{kwon2022beta}, and Banzhaf value~\cite{wang2023data}.
From this subclass, we identify a novel valuation metric, named \textit{Truth-Shapley}. 
Similar to the SV, Truth-Shapley uniquely satisfies a set of promising axioms for valuation, thereby ensuring effective data selection and fair reward allocation. 
Therefore, regardless of whether a data overvaluation attack occurs, Truth-Shapley serves as an excellent metric for robust and effective data valuation.

We summarize our contributions as follows. 
\textbf{(1)} We propose the data overvaluation attack.
This attack reveals the unexplored vulnerability of existing data valuation metrics to strategic manipulation during the model retraining process and thus opens up a new research direction toward truthful data valuation in FL.
\textbf{(2)} We theoretically analyze and characterize the necessary and sufficient conditions for linear data valuation metrics to ensure \textit{Bayesian truthful} data valuation. 
This characterization facilitates a rigorous assessment of their robustness against data overvaluation.
\textbf{(3)} We propose Truth-Shapley, which is the unique Bayesian truthful data valuation metric that satisfies some key fairness axioms for valuation.
\textbf{(4)} We conduct extensive experiments across various FL scenarios. 
Our results demonstrate the vulnerability of existing data valuation metrics to the proposed attack and validate the robustness and effectiveness of Truth-Shapley in data selection and reward allocation.

\section{Preliminaries}

We consider a \textit{horizontal} CML setting with $N$ clients $\clientset=\set[1,\dots, N]$, whose datasets share the same feature space.
Each client $i\in \clientset$ holds a local dataset $D_i=\{D_{i, 1},\dots, D_{i, M_i}\}$, which is partitioned into $M_i$ data blocks--subsets of data samples with distinct distributions.
The clients pool their data and utilize a CML algorithm $\mathcal{A}$ to train an ML model $\fullmodel =\mathcal{A}(\fullset)$ under the coordination of a central server, where $\fullset = \{D_{i,j} \mid i\in \clientset, j\in[M_i]\}$ denotes the set of all data blocks across the clients.
We refer to $\fullmodel$ as the \textit{grand model} as it is the final product of CML.
Note that our work can be extended to the \textit{vertical} setting where data blocks differ in feature space (see \textbf{Appendix~\ref{sec:discuss}} for a detailed discussion).

\textbf{Data valuation.}
After training the grand model, the server performs data valuation to evaluate each data block $D_{i,j}$'s \textit{data value} $\phi_{i,j}(\fullset, v)$, which reflects the contribution of $D_{i,j}$ to improving the utility $v(\fullset)$ of the grand model $\fullmodel$.
We simply write $\phi_{i,j}(\fullset, v)$ as $\phi_{i,j}$ when there is no ambiguity.
Consequently, the data valuation task is to design a data valuation metric $\phi$ to determine data values for all data blocks involved in the CML (see Definition~\ref{def:data_valuation}).
Data valuation facilitates the following two downstream tasks, which ensure fairness and incentivize collaboration. 
\textbf{(1) Data selection:} The server selects data blocks $D_{i,j}$ with high(er) data values $\phi_{i,j}$ to enhance the performance of CML next time, which is critical when there exist clients who contribute trivial data or outliers~\citep{cohen2005feature, nagalapatti2021game}.
\textbf{(2) Reward allocation:} For each data block $D_{i,j}$, the server allocates a reward $R_{i,j}(\phi)$ to its owner based on its data value $\phi_{i,j}$.
The rewards may be revenue obtained from commercializing the grand model~\citep{nguyen2022trade} or fees collected from the clients for participating in CML~\citep{ohrimenko2019collaborative}.
We follow prior work~\citep{agarwal2019marketplace, ohrimenko2019collaborative, song2019profit, nguyen2022trade} to assume each client's reward $R_i(\phi) = \sum_{j\in[M_i]} R_{i,j}(\phi)$ increases with their own data value $\phi_i$ while decreases with the other clients' data values $\phi_{-i} = \sum_{i'\in \mathbb{N} \setminus \{i\}} \phi_{i'}$ (see \textbf{Appendix~\ref{appendix:reward}} for examples). 
Consequently, a reward-seeking client $i$ aims to maximize $\phi_i$ and minimizes $\phi_{-i}$ to optimize their reward $R(\phi_i)$.

\begin{definition}[Data Valuation]
\label{def:data_valuation}
    A utility metric $v: 2^{\fullset}\rightarrow \mathbb{R}$ maps a set of data blocks $S \subseteq \fullset$ to the utility $v(S)$ of the model $\mathcal{A}(S)$ with $v(\emptyset) = 0$. 
    A data valuation metric $\phi: \fullset \times \mathcal{G}(\fullset)$ allocates data values $\phi(\fullset, v) = \{\phi_{i,j}(\fullset, v) \mid i\in \clientset, j\in [M_i]\}$, where $\mathcal{G}(\fullset) = \{v\mid v:2^{\fullset} \rightarrow \mathbb{R}\}$.
\end{definition}


\textbf{Shapley value.}
As a classic contribution metric in cooperative game theory, the SV has been widely adopted for data valuation in CML because it uniquely satisfies some promising properties, including linearity (LIN), efficiency (EFF), dummy actions (DUM), and symmetry (SYM) (see \textbf{Appendix~\ref{appendix:sv}} for details).
Specifically, the SV computes the data value $\sv_{i,j}$ of each data block $D_{i,j}$ as follows.
\begin{align}
\label{eq:sv}
\sv_{i,j}(\fullset, v) \coloneqq \sum_{\datasubset\subseteq \fullset \setminus \set[D_{i,j}]} \weightsv(\subsetsize) \big(v(\splus) - v(\datasubset)\big)
\end{align}
where $\weightsv(\subsetsize) \coloneqq  \frac{\subsetsize!(|\fullset|- \subsetsize - 1)!}{|\fullset|!}$ and $\splus = \datasubset \cup \{D_{i,j}\}$.
The SV enumerates all possible subsets $\datasubset$ of $\fullset$ excluding the data block $D_{i,j}$ and calculates the utility improvement $\big(v(\splus) - v(\datasubset)\big)$ achieved by adding $D_{i,j}$ to subset $\datasubset$.
$\weightsv(\subsetsize)$ is a coefficient that weights the importance of $\datasubset$.
The data value $\sv_{i,j}$ thus is the weighted aggregation of all utility improvements attributed to  $D_{i,j}$.
Note that for each block subset $\datasubset \subset \fullset$, computing $v(\datasubset)$ requires retraining an ML model $\Theta_{\datasubset} = \mathcal{A}(\datasubset)$, which is referred to as a \textit{subset model}.

\textbf{Federated learning.}
To protect the clients' data privacy/property rights, we consider that Algorithm $\mathcal{A}$ is an FL algorithm, which allows the clients to perform CML without sharing local data.
Given a block set $\datasubset \subseteq \fullset$, let $\stoi$ denote the subset of data blocks in $\datasubset$ that belong to client $i$, let $\stominusi$ denote the other clients' data blocks in $\datasubset$, and let $\clientsins$ denote the set of clients with at least one data block in $\datasubset$. 
An FL algorithm $\mathcal{A}(\datasubset)$ consists of $T$ rounds of decentralized model training. 
In each round $t\in[T]$, given a global model $\Theta^{(t-1)}_{\datasubset}$ from the server, each client $i\in \clientsins$ employs a local training algorithm $\mathcal{A}_i$ to learn a representation $\realrepresentationi=\mathcal{A}_i(\stoi; \Theta^{(t-1)}_{\datasubset})$ from their local data $\stoi$;
the server then collects all learned presentations and calls a fusion algorithm $\mathcal{A}_0$ to derive a new global model $\Theta^{(t)}_{\datasubset} =\mathcal{A}_0(\{\realrepresentationi | i \in \clientsins \})$.
Finally, $\mathcal{A}(\datasubset)$ outputs the global model at the final round $T$ as the grand or subset model, i.e., $\subsetmodel = \Theta^{(T)}_{\datasubset}$.
Note that the above abstraction encompasses most FL algorithms;
the learned data representation $\realrepresentationi$ is a general concept that may be instantiated as a local model/gradient~\citep{mcmahan2017communication}, embeddings of local data~\citep{thapa2022splitfed}, predicted logits~\citep{li2019fedmd}, among others.

\section{Data Overvaluation Attack}

\textbf{Data overvaluation against the SV.}
Although the SV fairly allocates data values to honest clients, a strategic client can manipulate their SVs by misreporting their data representations when retraining subset models in FL.
Specifically, given a subset of data blocks $\datasubset \subset \fullset$, when evaluating the utility $v(\datasubset)$ for calculating $\sv_{i,j}$, attacker $i$ can misreport their data representations $\realrepresentationsi\coloneqq \{\theta^{(t)}_{i|\datasubset}\}_{t\in[T]}$ to alter the subset model $\Theta_{\datasubset} = \mathcal{A}(\datasubset)$.
Consequently, the utility $v(\datasubset)$ of the subset model $\Theta_{\datasubset}$ might be altered, which affects their block-level SVs $\sv_{i, j}$.
Similarly, attacker $i$ can also manipulate their client-level SV $\sv_{i}$ by altering the model utility $v(\datasubset)$, as the client-level SV $\sv_{i}=\sum_{j\in[M_i]}\sv_{i,j}$ can be written in the following form:
\begin{align*}
     &\sv_{i}(\fullset, v) = \sum_{\datasubset\subseteq \fullset} \betasv(\datasubset) \cdot v(\datasubset), \text{ where} \\
      & \betasv(\datasubset) \coloneqq 
   \begin{cases}
            (|\stoi|\fullset - |D_{i}||\datasubset|) \cdot \frac{(|\datasubset |- 1)! (\fullset-|\datasubset| - 1)!}{\fullset!}, & \datasubset \subset \fullset, \datasubset \neq \emptyset, \\
            |D_i|/ \fullset, & \datasubset = \fullset, \\
            - |D_i|/ \fullset, & \datasubset = \emptyset.
   \end{cases}
\end{align*}
Because $\frac{\partial \sv_i}{\partial (v(\datasubset))} \!=\! \betasv(\datasubset)$, when $\betasv(\datasubset)\! > \!0$, increasing $v(\datasubset)$ enhances $\sv_{i}$; when $\betasv(\datasubset) \!<\! 0$, decreasing $v(\datasubset)$ improves $\sv_{i}$; when $\betasv(\datasubset) \!=\! 0$, changing $v(\datasubset)$ has no effect on $\sv_{i}$.

\textbf{Generalization.}
We further generalize the data overvaluation attack in Definition~\ref{def:overvaluation} to manipulate all linear data valuation metrics (i.e., $\phi(\fullset, v_1+v_2) = \phi(\fullset, v_1)  + \phi(\fullset, v_2)$ for any two utility metrics $v_1, v_2$). 
Specifically, Lemma \ref{lem:data_value_form} indicates that any linear data value $\phi_i$ can be expressed as a weighted sum of model utilities $\{v(\datasubset)\}_{\datasubset\subseteq \fullset}$. 
Consequently, similar to the case of the SV, for each subset $\datasubset \subset \fullset$, and for each client $i\in\clientsins$, when $\beta_i(\datasubset)$ is positive (negative), they can increase (decrease) $v(\datasubset)$ to enhance their linear data value $\phi_i$.
However, unlike the SV, some data valuation metrics such as Beta Shapley and Banzhaf value do not satisfy EFF, meaning that an increase in $\phi_{i}$ does not necessarily lead to a decrease in $\phi_{-i}$.
As a result, attacker $i$ may not always receive a higher reward.
Therefore, when $\beta_{i}(\datasubset)$ is positive (negative), we should also ensure that $\beta_{-i}(\datasubset)=\sum_{i' \in \mathbb{N}(\datasubset) \setminus \{i\}} \beta_{i'}(\datasubset)$ is non-positive to prevent $\phi_{-i}$ from increasing.

\begin{lemma}
\label{lem:data_value_form}
    If a data valuation metric $\phi$ satisfies LIN, then there exist functions $\beta_i: 2^{\fullset} \rightarrow \mathbb{R}$ and $\beta_{i,j}: 2^{\fullset} \rightarrow \mathbb{R}$ such that $\phi_{i,j}(\fullset, v) \equiv \sum_{\datasubset \subseteq \fullset} \beta_{i,j}(\datasubset) \cdot v(\datasubset)$ and $\phi_{i}(\fullset, v) \equiv \sum_{\datasubset \subseteq \fullset} \beta_i(\datasubset) \cdot v(\datasubset)$ for all $i \in \clientset$ and $j \in [M_i]$.
\end{lemma}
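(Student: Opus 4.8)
The plan is to exploit the fact that, for a fixed grand dataset $\fullset$, the collection of utility metrics in $\mathcal{G}(\fullset)$ (subject to $v(\emptyset) = 0$) is a finite-dimensional real vector space, and that LIN makes each map $v \mapsto \phi_{i,j}(\fullset, v)$ a linear functional on this space. A linear functional on a finite-dimensional space is determined by its values on a basis, and reading off those values gives exactly the coefficients $\beta_{i,j}(\datasubset)$. So the whole statement reduces to choosing the right basis and renaming quantities.

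First I would fix a convenient basis. For each nonempty $\datasubset \subseteq \fullset$, define the indicator game $e_{\datasubset}$ by $e_{\datasubset}(T) = 1$ if $T = \datasubset$ and $e_{\datasubset}(T) = 0$ otherwise. These $2^{|\fullset|} - 1$ games span the space, and any admissible $v$ (so $v(\emptyset)=0$) admits the expansion $v = \sum_{\emptyset \neq \datasubset \subseteq \fullset} v(\datasubset)\, e_{\datasubset}$, as is seen by evaluating both sides at an arbitrary $T$ and recovering $v(T)$. This basis is preferable to, say, unanimity games because the coefficient of $e_{\datasubset}$ is literally $v(\datasubset)$, which is precisely the quantity the lemma wants to weight.

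Next I would apply linearity term by term. Using LIN to split the finite sum and to pull each scalar coefficient $v(\datasubset)$ out of $\phi_{i,j}$ gives $\phi_{i,j}(\fullset, v) = \sum_{\emptyset \neq \datasubset \subseteq \fullset} v(\datasubset)\, \phi_{i,j}(\fullset, e_{\datasubset})$. Defining $\beta_{i,j}(\datasubset) := \phi_{i,j}(\fullset, e_{\datasubset})$ for $\datasubset \neq \emptyset$ and $\beta_{i,j}(\emptyset) := 0$ then yields $\phi_{i,j}(\fullset, v) \equiv \sum_{\datasubset \subseteq \fullset} \beta_{i,j}(\datasubset)\, v(\datasubset)$, where the $\emptyset$ term is harmless since $v(\emptyset) = 0$. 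The client-level identity follows by summing over blocks: setting $\beta_i(\datasubset) := \sum_{j \in [M_i]} \beta_{i,j}(\datasubset)$ and using $\phi_i = \sum_{j \in [M_i]} \phi_{i,j}$ gives $\phi_i(\fullset, v) \equiv \sum_{\datasubset \subseteq \fullset} \beta_i(\datasubset)\, v(\datasubset)$.

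The only subtlety, and the step I would treat most carefully, is the homogeneity needed to pull the scalars $v(\datasubset)$ outside $\phi_{i,j}$. LIN as written in the text states additivity, $\phi(\fullset, v_1 + v_2) = \phi(\fullset, v_1) + \phi(\fullset, v_2)$; additivity alone forces only $\mathbb{Q}$-homogeneity, so to extract arbitrary real multiples cleanly I would invoke the standard linearity axiom used for the SV, which includes real homogeneity (cf. Appendix~\ref{appendix:sv}). Once homogeneity is available the argument is routine finite-dimensional linear algebra, and the representation is essentially forced the moment the indicator-game basis is in place.
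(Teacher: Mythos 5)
Your proof is correct and takes essentially the same route as the paper's: both expand $v$ in the indicator-game basis ($\delta_{\datasubset}(T)=1$ iff $T=\datasubset$), use linearity of $v \mapsto \phi_{i,j}(\fullset, v)$ to evaluate termwise, define $\beta_{i,j}(\datasubset) \coloneqq \phi_{i,j}(\fullset, \delta_{\datasubset})$, and set $\beta_i(\datasubset) \coloneqq \sum_{j\in[M_i]}\beta_{i,j}(\datasubset)$. Your only substantive addition is flagging that LIN as stated in the text is mere additivity while the argument needs real homogeneity to pull the scalars $v(\datasubset)$ out of $\phi_{i,j}$; the paper's proof silently assumes full linearity (it writes $\phi_{i,j}(\fullset, w_1 v_1 + w_2 v_2) = w_1\phi_{i,j}(\fullset, v_1) + w_2\phi_{i,j}(\fullset, v_2)$ without justification), whereas you identify and patch this gap explicitly.
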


\begin{definition}[Data Overvaluation Attack]
\label{def:overvaluation}
    A data overvaluation attack against a linear data valuation metric $\phi$ is to misreport data representations $\realrepresentationsi$ such that $v(\datasubset)$ increases when $\beta_i(\datasubset) > 0$ and $\beta_{-i}(\datasubset) \leq 0$, and that $v(\datasubset)$ decreases when $\beta_i(\datasubset) < 0$ and $\beta_{-i}(\datasubset) \geq 0$, for all subsets $\datasubset \subset \fullset$.
\end{definition}

\begin{algorithm}[t]
    \caption{Data Overvaluation through Data Augmentation}
    \label{alg:data_overvaluation}
    \For{each subset $\datasubset\subset \fullset$}{
        \If{$i\in \clientsins$}{
            \If{$\beta_i(\datasubset) > 0$ and $\beta_{-i}(\datasubset) \leq 0$}{
                Attacker $i$: Positively augment $\stoi$ to obtain a dataset $\reportedstoi$\;
            }
            \ElseIf{$\beta_i(\datasubset) < 0$ and $\beta_{-i}(\datasubset) \geq 0$}{
                Attacker $i$: Negatively augment $\stoi$ to generate a dataset $\reportedstoi$\;
            }
            \Else{
                Attacker $i$: Honestly use $\reportedstoi = \stoi$\;
            }
        }
        Clients $\clientsins$: Retrain a surrogate model $\reportedsubsetmodel = \mathcal{A}(\reportedsubset)$ of $\subsetmodel$ using the FL algorithm $\mathcal{A}$\;
        Server: Evaluate the utility $v(\reportedsubset)$ of $\reportedsubsetmodel$\;
    }
    Server: Calculate $\empiricaldatavalue_{i,j}(\fullset, v), \forall i, \forall j$ and return them\;
\end{algorithm}

\textbf{Implementation.}
We propose Algorithm \ref{alg:data_overvaluation} to implement data overvaluation through \textit{data augmentation}.
Let $\reportedstoi$ and $\reportedstominusi$ denote the data used by attacker $i$ and the other clients for retraining a subset model $\reportedsubsetmodel = \mathcal{A}(\reportedsubset)$ as a surrogate of $\subsetmodel$, respectively, where $\reportedsubset = \reportedstoi \cup \reportedstominusi$.
That means, instead of truthfully using dataset $\stoi$, attacker $i$ may untruthfully employ dataset $\reportedstoi \neq \stoi$ to train their representations $\realrepresentationsi$, altering the utility of $\datasubset$ from $v(\datasubset)$ to $v(\reportedsubset)$.
\textbf{This untruthful manipulation is hard to observe since the server cannot collect the local data in FL.} 
More specifically, in Algorithm \ref{alg:data_overvaluation}, to compute data values, the server performs model retraining and utility evaluation for every subset $\datasubset \subset \fullset$.
Note that model retraining for the full set $\fullset$ is not needed since the grand model $\fullmodel$ has already been trained before data valuation. 
When a subset $\datasubset$ includes attacker $i$'s data blocks, if $\beta_i(\datasubset)$ is nonzero, attacker $i$ has the incentive to manipulate $v(\datasubset)$; thus, in this case, attacker $i$ positively/negatively augments their dataset $\stoi$ to derive a new dataset $\reportedstoi$ (Lines 4 and 6).
The positive augmentation can be implemented by employing the entire local dataset $D_i$ for retraining, while the negative augmentation can be achieved by perturbing the data $\stoi$ (see \textbf{Appendix~\ref{appendix:overvaluation}} for more details).
Then, the clients $\clientsins$ perform an FL process $\mathcal{A}(\reportedsubset)$ to retrain a surrogate model $\reportedsubsetmodel$, where attacker $i$ misreports their augmented data representations $\reportedrepresentationsi$ as $\realrepresentationsi$, resulting in an enhanced/reduced utility $v(\reportedsubset)$ for subset $\datasubset$ (Lines 9--10).
\textbf{Note that positive augmentation during model retraining is detrimental, as it does not change the final performance of CML (i.e., the utility of the grand model) but merely inflates the contribution of the block subset $\datasubset$.}
After evaluating all the block subsets, the server calculates the \textit{empirical} data values $\empiricaldatavalue_{i,j}$ and $\empiricaldatavalue_{i}$:
\begin{align*}
    & \empiricaldatavalue_{i,j}(\fullset, v) =  \beta_{i,j}(\fullset) \cdot v(\fullset) + \sum_{\datasubset\subset \fullset \setminus \set[D_{i,j}]} \beta_{i,j}(\datasubset) \cdot v(\reportedsubset),\\
    & \empiricaldatavalue_{i}(\fullset, v) = \beta_i(\fullset) \cdot v(\fullset) + \sum_{\datasubset\subset \fullset} \beta_i(\datasubset) \cdot v(\reportedsubset).
\end{align*}

\textbf{Theoretical guarantee.}
Lemma \ref{lem:attack_success} shows that by successfully implementing the attack, attacker $i$ will derive an inflated data value $\empiricaldatavalue_{i}(\fullset, v)$, which exceeds the value $\empiricaldatavalue_{i}(\fullset, v \mid \forall \datasubset \subset \fullset, \reportedstoi =  \stoi)$ obtained when attacker $i$ truthfully using $\reportedstoi = \stoi$ for model retraining.
Meanwhile, the sum of the other clients' data values is non-increasing compared with the truth-telling case, i.e., $\phi_{-i}(\fullset, v) \!\leq \! \phi_{-i}(\fullset, v \!\mid\! \forall \datasubset \subset \fullset, \reportedstoi =  \stoi)$.
Consequently, attacker $i$ receives a higher reward.


\begin{lemma}
\label{lem:attack_success}
    Algorithm \ref{alg:data_overvaluation} ensures that $\empiricaldatavalue_{i}(\fullset, v) \geq \empiricaldatavalue_{i}(\fullset, v \mid \forall \datasubset \subset \fullset, \reportedstoi =  \stoi)$ while $\empiricaldatavalue_{-i}(\fullset, v) \leq \empiricaldatavalue_{-i}(\fullset, v \mid \forall \datasubset \subset \fullset, \reportedstoi =  \stoi)$ under the following condition: If $\forall \datasubset \subset \fullset, i \in \mathbb{N}(\datasubset)$, if $\beta_i(\datasubset) > 0$ and $\beta_{-i}(\datasubset) \leq 0$, we have $v(\reportedstoi\cup \reportedstominusi) > v(\stoi\cup \reportedstominusi)$; if $\beta_i(\datasubset) < 0$ and $\beta_{-i}(\datasubset) \geq 0$, we have $v(\reportedstoi\cup \reportedstominusi) < v(\stoi\cup \reportedstominusi)$; otherwise, $v(\reportedstoi\cup \reportedstominusi) = v(\stoi\cup \reportedstominusi)$.
\end{lemma}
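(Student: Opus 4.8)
The plan is to reduce both inequalities to a single termwise sign analysis over the proper subsets $\datasubset \subset \fullset$. Using the empirical data-value formula for attacker $i$ and the fact that the truthful counterpart $\empiricaldatavalue_{i}(\fullset, v \mid \forall \datasubset \subset \fullset, \reportedstoi = \stoi)$ replaces each reported utility $v(\reportedstoi \cup \reportedstominusi)$ by $v(\stoi \cup \reportedstominusi)$ while leaving the grand-model term $\beta_i(\fullset)\,v(\fullset)$ untouched (the grand model $\fullmodel$ is never retrained), I would subtract the two expressions to obtain
\[
\empiricaldatavalue_{i}(\fullset, v) - \empiricaldatavalue_{i}(\fullset, v \mid \forall \datasubset \subset \fullset, \reportedstoi = \stoi) = \sum_{\datasubset \subset \fullset} \beta_i(\datasubset)\big(v(\reportedstoi \cup \reportedstominusi) - v(\stoi \cup \reportedstominusi)\big).
\]
The entire first claim then hinges on showing that each summand is non-negative.

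To do so, I would partition the subsets. For $\datasubset$ with $i \notin \clientsins$, attacker $i$ owns no block in $\datasubset$, so $\reportedstoi = \stoi = \emptyset$ and the two utilities coincide, making the summand vanish. For $\datasubset$ with $i \in \clientsins$, I would invoke the hypothesis case by case: when $\beta_i(\datasubset) > 0$ (so $\beta_{-i}(\datasubset) \leq 0$) the condition gives $v(\reportedstoi \cup \reportedstominusi) > v(\stoi \cup \reportedstominusi)$ and the product is positive; when $\beta_i(\datasubset) < 0$ (so $\beta_{-i}(\datasubset) \geq 0$) the condition gives the reverse strict inequality and the product of two negative factors is again positive; in every remaining sign pattern the condition forces equality of the two utilities and the summand is zero. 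Hence every term is non-negative and the first inequality follows.

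For the second inequality I would run the identical computation with the aggregate coefficient $\beta_{-i}(\datasubset) = \sum_{i' \in \clientset \setminus \set[i]} \beta_{i'}(\datasubset)$, obtained by summing the empirical formulas over $i' \neq i$ and using linearity of the sum so that $\empiricaldatavalue_{-i}(\fullset, v) = \beta_{-i}(\fullset)\,v(\fullset) + \sum_{\datasubset \subset \fullset} \beta_{-i}(\datasubset)\,v(\reportedstoi \cup \reportedstominusi)$. The key observation is that the hypothesis pairs the sign of the utility change with an opposite-or-zero constraint on $\beta_{-i}(\datasubset)$: whenever the utility increases we are guaranteed $\beta_{-i}(\datasubset) \leq 0$, and whenever it decreases we are guaranteed $\beta_{-i}(\datasubset) \geq 0$. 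In either case $\beta_{-i}(\datasubset)\big(v(\reportedstoi \cup \reportedstominusi) - v(\stoi \cup \reportedstominusi)\big) \leq 0$, the untouched subsets again contribute zero, and summing yields $\empiricaldatavalue_{-i}(\fullset, v) \leq \empiricaldatavalue_{-i}(\fullset, v \mid \forall \datasubset \subset \fullset, \reportedstoi = \stoi)$.

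The argument is essentially pure sign bookkeeping, so I do not expect a genuine analytical obstacle; the one place that demands care is the alignment between the three-way branching of Algorithm \ref{alg:data_overvaluation} and the three-way hypothesis on the utilities. I must confirm that the hypothesis's ``otherwise'' clause exactly covers the sign patterns on which the algorithm stays honest (for instance $\beta_i(\datasubset) > 0$ together with $\beta_{-i}(\datasubset) > 0$), so that no summand is left with an indeterminate sign, and that the boundary case $i \notin \clientsins$ is handled separately as above. Verifying this exhaustive case coverage is the only step where an error could realistically creep in.
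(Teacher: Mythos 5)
Your proposal is correct and follows essentially the same argument as the paper's proof: both reduce each inequality to a termwise sign comparison over the proper subsets $\datasubset \subset \fullset$, splitting on the sign pattern of $\big(\beta_i(\datasubset), \beta_{-i}(\datasubset)\big)$ so that the attack cases pair a positive (resp.\ negative) coefficient with a strictly increased (resp.\ decreased) utility and all remaining patterns contribute zero. The only cosmetic difference is that you work with the difference of the two empirical values while the paper compares the four sign-grouped partial sums directly, and you make the $i \notin \clientsins$ boundary case explicit where the paper leaves it implicit.
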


\textbf{Defenses.}
In centralized CML scenarios, the server can defend data overvaluation by simply collecting all local datasets and retraining models based on the collected datasets. 
However, in FL scenarios, this defense is infeasible since the server is not allowed to collect the clients' raw data.

Moreover, since implementing the proposed attack requires knowing the signs of $\beta_i(\datasubset)$ and $\beta_{-i}(\datasubset)$, a possible defense is to prevent the clients from knowing the current subset $\datasubset$ during model retraining.
However, attacker $i$ still can estimate the expected values $\mathbb{E}[\beta_i(\datasubset)]$ and $\mathbb{E}[\beta_{-i}(\datasubset)]$ and successfully perform data overvaluation based on the signs of $\mathbb{E}[\beta_i(\datasubset)]$ and $\mathbb{E}[\beta_{-i}(\datasubset)]$.
See \textbf{Appendices~\ref{appendix:incomplete_knowledge} and~\ref{appendix:exp_incomplete_knowledge}}  for technical details and experimental results.

\section{Truthful Data Valuation for CML}
In this section, we characterize the subclass of data valuation metrics that can prevent the data overvaluation attack and select a special metric from this class, named \textit{Truth-Shapley}. 

\subsection{Characterization of Truthful Data Valuation}
The issue of data overvaluation arises from strategic clients misreporting their data representations through data augmentation, which is highly analogous to the problem of untruthful bidding in auctions. 
Specifically, for each client $i$, if we regard their data $\reportedstoi$ used for model retraining as their bid, which is reported in the form of the data representations, and the empirical data value $\empiricaldatavalue_i$ as their payoff, the problem of preventing data overvaluation can be viewed as ensuring a truthful auction. 

\begin{definition}[Bayesian Incentive Compatibility for Truthful Data Valuation]
\label{def:truthfulness}
    A data valuation metric $\phi$ is Bayesian incentive compatible (BIC) if for any game $(\fullset, v)$, for any client $i$, and for any reported data $\{\reportedstoi\mid \datasubset \subset \fullset, i \in \mathbb{N}(\datasubset) \}$, we have
    \begin{align}
    \label{formula:expected_data_value_i}
        &\mathbb{E}_{\reportedstominusi \sim \sigma_i(\cdot \mid \datasubset), \forall \datasubset \subset \fullset}[\empiricaldatavalue_{i}(\fullset, v)] \leq \mathbb{E}_{\reportedstominusi \sim \sigma_i(\cdot \mid \datasubset), \forall \datasubset \subset \fullset}[\empiricaldatavalue_{i}(\fullset, v \!\mid\! \forall \datasubset \!\subset \!\fullset,\! \reportedstoi\! =  \!\stoi)],\\
        \label{formula:expected_data_value_minusi}
        &\mathbb{E}_{\reportedstominusi \sim \sigma_i(\cdot \mid \datasubset), \forall \datasubset \subset \fullset}[\empiricaldatavalue_{-i}(\fullset, v)] \geq \medmath{\mathbb{E}_{\reportedstominusi \sim \sigma_i(\cdot \mid \datasubset), \forall \datasubset \subset \fullset}[\empiricaldatavalue_{-i}(\fullset, v \!\mid \!\forall \datasubset \subset \fullset, \reportedstoi \!= \! \stoi)]},
    \end{align}
    where $\sigma_i(\cdot \mid \datasubset)$ denotes the distribution of $\reportedstominusi$ estimated by client $i$ in their belief.
\end{definition}

Accordingly, we draw on the concept of \textit{Bayesian incentive compatibility} (BIC) from auction theory~\cite{d1982bayesian}, defined in Definition \ref{def:truthfulness}, to ensure truthful data valuation.
Intuitively, BIC ensures that for each client $i$, truthfully reporting their data $\reportedstoi =  \stoi$ is the optimal strategy that not only maximizes their expected data value in Inequality (\ref{formula:expected_data_value_i}) but also minimizes the sum of the other clients' data values in Inequality (\ref{formula:expected_data_value_minusi}). 
Note that the expected data values in Inequalities (\ref{formula:expected_data_value_i})-(\ref{formula:expected_data_value_minusi}) are based on client 
$i$’s prior beliefs $\{\sigma_i(\cdot \mid \datasubset)\}_{\forall \datasubset \subset \fullset}$ about the other clients' reported data. 
This implies that truthful reporting is subjectively optimal based on their beliefs, rather than objectively optimal.
For simplicity, we will omit the conditional subscript of the expectation operator $\mathbb{E}$ where there is no ambiguity.

\begin{assumption}[Subjectively Optimal Data]
\label{ass:optimal_dataset}
    For any data subset $\datasubset \subset \fullset$ with $\stoi = D_i$, and for any $\reportedstoi$, we have $\mathbb{E}_{\reportedstominusi \sim \sigma_i(\cdot \mid \datasubset), \forall \datasubset \subset \fullset} [v(\stoi \cup \reportedstominusi)] \geq \mathbb{E}_{\reportedstominusi \sim \sigma_i(\cdot \mid \datasubset), \forall \datasubset \subset \fullset} [v(\reportedstoi \cup \reportedstominusi)]$.
\end{assumption}

Next, we characterize the subclass of linear data valuation metrics that ensure BIC, based on Assumption \ref{ass:optimal_dataset}.
This assumption means that according to client $i$’s belief, $D_i$ is the dataset known to them that can best optimize the grand model’s utility.
\textbf{Note that it does not require clients to have complete knowledge of their own dataset's potential; it only requires them to submit the data they subjectively believe to be the best for CML.} 
This is because clients are assumed to be rational and self-interested—if they are subjectively aware of better data, they would directly use it in CML to obtain greater rewards, rather than holding it back for data overvaluation.

Subsequently, according to Lemma \ref{lem:data_value_form}, for any linear data valuation metric $\phi$, the expected data values in Inequalities (\ref{formula:expected_data_value_i})-(\ref{formula:expected_data_value_minusi}) can be expressed in the following form:
\begin{align*}
    & \medmath{\mathbb{E}[\empiricaldatavalue_{i}(\fullset, v)] \!=\!  \beta_i(\fullset) \!\cdot\! v(\fullset) \!+\! \sum_{\datasubset\subset \fullset} \!\beta_i(\datasubset) \!\cdot\! \mathbb{E}[v(\reportedstoi \!\cup\! \reportedstominusi)]},\\
    & \medmath{\mathbb{E}[\empiricaldatavalue_{-i}(\fullset, v)] \!=\!  \beta_{-i}(\fullset) \!\cdot\! v(\fullset) \!+\! \sum_{\datasubset\subset \fullset} \!\beta_{-i}(\datasubset) \!\cdot\! \mathbb{E}[v(\reportedstoi \!\cup\! \reportedstominusi)]}.
\end{align*}
Therefore, for any data subset $\datasubset \subset \fullset$, if $\stoi \neq D_i$, we should ensure that $\beta_i(\datasubset) = \beta_{-i}(\datasubset)= 0$ such that varying $\reportedstoi$ has no impact on client $i$'s (the other clients') expected data value $\mathbb{E}[\empiricaldatavalue_{i}(\fullset, v)]$ ($\mathbb{E}[\empiricaldatavalue_{-i}(\fullset, v)]$), thereby ensuring no incentive to report $\reportedstoi \neq \stoi$.
If $\stoi = D_i$, we can set $\beta_i(\datasubset) \geq 0$ and $\beta_{-i}(\datasubset) \leq 0$ to incentivize client $i$ to optimize the utility $\mathbb{E}[v(\widehat{D}_i \cup \reportedstominusi)]$;
because of Assumption \ref{ass:optimal_dataset}, reporting their subjectively optimal data $\widehat{D}_i = D_i$ is their best strategy.
Consequently, we derive the following characterization of linear and BIC data valuation metrics.
\begin{theorem}[Characterization 1]
\label{thm:characterization1}
    Consider a linear data valuation metric $\phi_i(\fullset, v) \coloneqq \sum_{\datasubset \subseteq \fullset} \beta_i(\datasubset) \cdot v(\datasubset)$ where $\beta_i: 2^{\fullset} \rightarrow \mathbb{R}$.
    Under Assumption \ref{ass:optimal_dataset}, $\phi$ satisfies BIC iff $\beta_i$ satisfies that: $\forall \datasubset \subset \fullset$, if $\stoi = D_i$, $\beta_i(\datasubset) \geq 0$ and $\beta_{-i}(\datasubset) \leq 0$;
    otherwise, $\beta_i(\datasubset) = \beta_{-i}(\datasubset) = 0$.
\end{theorem}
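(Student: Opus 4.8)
The plan is to build everything on the closed forms for $\mathbb{E}[\empiricaldatavalue_i]$ and $\mathbb{E}[\empiricaldatavalue_{-i}]$ stated immediately above the theorem (themselves consequences of Lemma~\ref{lem:data_value_form}). In both, the term for $\datasubset = \fullset$ equals $\beta_i(\fullset)\cdot v(\fullset)$ and is independent of any reported data, while every term with $\datasubset \subset \fullset$ factors as $\beta_i(\datasubset)\cdot\mathbb{E}[v(\reportedstoi\cup\reportedstominusi)]$ (resp.\ $\beta_{-i}(\datasubset)\cdot\mathbb{E}[v(\reportedstoi\cup\reportedstominusi)]$). Since the reported data $\{\reportedstoi\}_{\datasubset\subset\fullset}$ is indexed independently across subsets, the first step is to reduce BIC to a per-subset condition: fixing one $\datasubset\subset\fullset$ and deviating only there while keeping $\reportedstoi=\stoi$ at every other subset, the change in $\mathbb{E}[\empiricaldatavalue_i]$ is exactly $\beta_i(\datasubset)\cdot\big(\mathbb{E}[v(\reportedstoi\cup\reportedstominusi)] - \mathbb{E}[v(\stoi\cup\reportedstominusi)]\big)$, and similarly for $\empiricaldatavalue_{-i}$ with $\beta_{-i}(\datasubset)$. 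Hence both BIC inequalities hold globally iff the matching sign condition holds term by term, and it suffices to analyze a single subset.

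For sufficiency ($\Leftarrow$), I would assume the stated conditions and check each subset. When $\stoi\neq D_i$ both $\beta_i(\datasubset)$ and $\beta_{-i}(\datasubset)$ vanish, so that term is unaffected by the report. When $\stoi = D_i$, Assumption~\ref{ass:optimal_dataset} gives $\mathbb{E}[v(\stoi\cup\reportedstominusi)] \geq \mathbb{E}[v(\reportedstoi\cup\reportedstominusi)]$ for all $\reportedstoi$; combined with $\beta_i(\datasubset)\geq 0$ this makes the $\empiricaldatavalue_i$-term maximal at the truthful report, giving Inequality~(\ref{formula:expected_data_value_i}), whereas $\beta_{-i}(\datasubset)\leq 0$ flips the direction and makes the $\empiricaldatavalue_{-i}$-term minimal at the truthful report, giving Inequality~(\ref{formula:expected_data_value_minusi}). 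Summing over subsets yields BIC.

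For necessity ($\Rightarrow$), I would argue contrapositively, one subset at a time, using that BIC must hold for \emph{every} game $(\fullset,v)$. If $\stoi\neq D_i$, Assumption~\ref{ass:optimal_dataset} imposes no relation between the truthful and alternative reports, so I would choose a game in which client $i$ can report data raising the expected utility (e.g.\ reporting $D_i$) and also data lowering it (e.g.\ corrupted data), making $\Delta\coloneqq\mathbb{E}[v(\reportedstoi\cup\reportedstominusi)] - \mathbb{E}[v(\stoi\cup\reportedstominusi)]$ attain both signs; then $\beta_i(\datasubset)\neq 0$ forces a strictly positive change in $\mathbb{E}[\empiricaldatavalue_i]$ for one of the two reports, violating~(\ref{formula:expected_data_value_i}), so $\beta_i(\datasubset)=0$, and the same reasoning against~(\ref{formula:expected_data_value_minusi}) forces $\beta_{-i}(\datasubset)=0$. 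If $\stoi = D_i$, Assumption~\ref{ass:optimal_dataset} makes $\Delta$ non-positive, and in a game admitting a strictly worse report, strictly negative; then $\beta_i(\datasubset)<0$ would make $\beta_i(\datasubset)\cdot\Delta>0$, contradicting~(\ref{formula:expected_data_value_i}), so $\beta_i(\datasubset)\geq 0$, and symmetrically $\beta_{-i}(\datasubset)>0$ would violate~(\ref{formula:expected_data_value_minusi}), so $\beta_{-i}(\datasubset)\leq 0$.

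The main obstacle I anticipate lies entirely in the necessity direction: I must justify that the required sign patterns of $\Delta$ are actually realizable. Concretely, for each fixed subset I need to exhibit a game $(\fullset,v)$ together with beliefs $\sigma_i$ consistent with Assumption~\ref{ass:optimal_dataset} in which the single deviating report strictly moves $\mathbb{E}[v(\reportedstoi\cup\reportedstominusi)]$ in the needed direction (both signs when $\stoi\neq D_i$, a strict decrease when $\stoi=D_i$) while leaving all other subset terms at their truthful values. This requires verifying that such utility functions are compatible with Assumption~\ref{ass:optimal_dataset}, which only constrains subsets with $\stoi=D_i$, and that the per-subset decoupling from the first step is genuine; once this realizability is in hand, the sign deductions above are immediate.
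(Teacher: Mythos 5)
Your proposal takes essentially the same route as the paper's proof: the sufficiency direction is the identical decomposition of $\mathbb{E}[\empiricaldatavalue_{i}]$ and $\mathbb{E}[\empiricaldatavalue_{-i}]$ into the $\stoi = D_i$ and $\stoi \neq D_i$ terms, with the latter annihilated by the zero coefficients and the former bounded via Assumption~\ref{ass:optimal_dataset} together with the sign conditions, and the necessity direction is the same contrapositive construction of a game violating one of the two BIC inequalities. Note that the paper's necessity argument is only a one-sentence assertion that such a utility function ``can always be constructed,'' so the realizability obligation you flag at the end (exhibiting games, consistent with Assumption~\ref{ass:optimal_dataset}, in which the deviation $\Delta$ attains the needed signs) is precisely the step the paper itself leaves unverified; your write-up is, if anything, more explicit about where the remaining work lies.
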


\subsection{Truth-Shapley}
Next, we attempt to select a strong member from the subclass of linear and BIC data valuation metrics. 
Our idea is to satisfy the four axioms enjoyed by the SV as much as possible, even though full compliance is impossible since the SV uniquely satisfies all four axioms. 
The first axiom we prioritize is \textit{EFF} (i.e., $\sum_{i\in \clientset}\sum_{j\in [M_i]} \phi_{i,j} = v(\fullset)$), as it ensures that the model utility $v(\fullset)$ is fully attributed to all data blocks.
We write $\clientsubset \subseteq \clientset$ to denote a subset of clients and $D_{\clientsubset}$ to denote the set of data blocks possessed by clients $\clientsubset$, i.e., $D_{\clientsubset} = \cup_{i \in \clientsubset} D_i = \{D_{i,j} \mid i\in \clientsubset, j\in[M_i]\}$.
Consequently, we propose Theorem 4.4 to characterize linear, efficient, and BIC valuation metrics.

\begin{theorem}[Characterization 2]
\label{thm:characterization2}
    Consider a linear, efficient data valuation metric $\phi_i(\fullset, v) \coloneqq \sum_{\datasubset \subseteq \fullset} \beta_i(\datasubset) \cdot v(\datasubset)$ where $\beta_i: 2^{\fullset} \rightarrow \mathbb{R}$.
    Under Assumption \ref{ass:optimal_dataset}, $\phi$ satisfies BIC iff: $\phi_i(\fullset, v) \equiv \sum_{\mathcal{C} \subseteq \mathbb{N}} \beta_i(D_{\mathcal{C}}) \cdot v(D_{\mathcal{C}})$ where $\beta_i(D_{\mathcal{C}}) \geq 0$ for all $ \mathcal{C} \subset \mathbb{N}$ with $i\in \mathcal{C}$.
\end{theorem}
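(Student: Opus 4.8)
The plan is to derive Characterization~2 from Characterization~1 (Theorem~\ref{thm:characterization1}) by feeding in efficiency, which will collapse the two-sided coefficient conditions there into the single client-aligned form stated here. Write $\beta_{-i}(\datasubset) = \sum_{i' \in \clientset \setminus \{i\}} \beta_{i'}(\datasubset)$. The first thing I would extract is what EFF says about coefficients: since $\sum_{i \in \clientset} \phi_i(\fullset, v) = v(\fullset)$ holds for every utility $v$, and the point evaluations $v \mapsto v(\datasubset)$ are linearly independent for distinct nonempty $\datasubset$, substituting the linear form of Lemma~\ref{lem:data_value_form} and matching coefficients gives $\sum_{i \in \clientset} \beta_i(\datasubset) = 0$ for every proper $\datasubset \subset \fullset$ (and $=1$ for $\datasubset = \fullset$). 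Equivalently, $\beta_{-i}(\datasubset) = -\beta_i(\datasubset)$ for every client $i$ and every $\datasubset \subset \fullset$.

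With this identity the $\beta_{-i}$ constraints of Theorem~\ref{thm:characterization1} become redundant, since $\beta_{-i}(\datasubset) \le 0 \Leftrightarrow \beta_i(\datasubset) \ge 0$ and $\beta_{-i}(\datasubset) = 0 \Leftrightarrow \beta_i(\datasubset) = 0$. Thus, under EFF and Assumption~\ref{ass:optimal_dataset}, BIC is equivalent to the one-sided condition: for every $i$ and every $\datasubset \subset \fullset$, $\beta_i(\datasubset) \ge 0$ when $\stoi = D_i$ and $\beta_i(\datasubset) = 0$ otherwise. What remains is to show that this one-sided condition is exactly the client-aligned representation in the statement.

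The hard part will be the support-pruning step: showing that $\beta_i(\datasubset) = 0$ for every $i$ whenever $\datasubset$ is not client-aligned, i.e.\ not of the form $D_{\clientsubset}$. Fix such a non-aligned $\datasubset \subset \fullset$. Non-alignment means some client $k$ is strictly partially present, so $D^{\datasubset}_k \notin \{\emptyset, D_k\}$; in particular $D^{\datasubset}_k \neq D_k$, which gives $\beta_k(\datasubset) = 0$ and hence $\beta_{-k}(\datasubset) = -\beta_k(\datasubset) = 0$. Let $F$ be the clients fully present in $\datasubset$. Every client $i'$ outside $F$ has $D^{\datasubset}_{i'} \neq D_{i'}$, so $\beta_{i'}(\datasubset) = 0$ by the one-sided condition; since $k \notin F$, this leaves $0 = \beta_{-k}(\datasubset) = \sum_{i' \in F} \beta_{i'}(\datasubset)$. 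But each summand satisfies $\beta_{i'}(\datasubset) \ge 0$ because those clients are fully present, so a nonnegative sum equal to zero forces every term to vanish, and therefore $\beta_i(\datasubset) = 0$ for all $i$. This is the step I would flag as the crux: the one-sided sign condition by itself would still let a fully present client carry positive weight on a ragged subset, and it is only the combination of a partial client (whose $\beta_{-k}$ vanishes through efficiency) with the nonnegativity of the fully present clients that annihilates the whole subset.

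Once the support is pruned to client-aligned subsets, $\phi_i(\fullset, v) = \sum_{\clientsubset \subseteq \clientset} \beta_i(D_{\clientsubset}) v(D_{\clientsubset})$ is immediate, and evaluating the one-sided condition at $\datasubset = D_{\clientsubset}$ for proper $\clientsubset$ yields $\beta_i(D_{\clientsubset}) \ge 0$ when $i \in \clientsubset$ and $\beta_i(D_{\clientsubset}) = 0$ when $i \notin \clientsubset$ (the coalition $\clientsubset = \clientset$ is left unconstrained, matching the restriction of the sign claim to $\clientsubset \subset \clientset$). For the converse I would run the same implications backwards: given the client-aligned form with these signs, efficiency restores $\beta_{-i} = -\beta_i$, so both the $\beta_i$ and $\beta_{-i}$ conditions of Theorem~\ref{thm:characterization1} hold verbatim and that theorem returns BIC. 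Everything after the pruning step is bookkeeping, so I would put the weight of the writeup there.
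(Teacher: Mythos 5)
Your necessity direction (BIC together with EFF implies the client-aligned form) is essentially the paper's own argument: the paper likewise starts from Theorem~\ref{thm:characterization1}, uses efficiency to write $\beta_i(\datasubset) = -\beta_{-i}(\datasubset)$ on proper subsets, and then asserts in a single unjustified line that the sum over non-aligned subsets vanishes. Your pruning argument (the strictly partial client $k$, the fully present set $F$, and the fact that a zero sum of nonnegative terms forces every term to vanish) is precisely the justification the paper leaves implicit, so on this half you are not only correct but more complete than the paper.

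The gap is in your converse. You claim that the client-aligned form with $\beta_i(D_{\clientsubset}) \ge 0$ for $i \in \clientsubset$ makes the conditions of Theorem~\ref{thm:characterization1} hold \emph{verbatim}. It does not: that theorem demands $\beta_i(\datasubset) = 0$ for every proper $\datasubset$ with $\stoi \neq D_i$, which includes the aligned subsets $D_{\clientsubset}$ with $i \notin \clientsubset$ (there $\stoi = \emptyset$), whereas the hypothesis of Theorem~\ref{thm:characterization2} places no constraint at all on $\beta_i(D_{\clientsubset})$ when $i \notin \clientsubset$. Those coefficients are generically nonzero; indeed Truth-Shapley itself, the metric this theorem exists to certify, has $\beta^{TSV}_i(D_{\clientsubset}) = \pm\,\weightsv(\clientsubset \mid \clientset) \neq 0$ on every nonempty proper coalition, including coalitions with $i \notin \clientsubset$ (see the representation in the proof of Theorem~\ref{thm:tsv_bic}). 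So your reduction proves BIC only for the strictly smaller class of metrics whose non-member coefficients vanish, and it fails on the paper's flagship example. Note also the internal tension: your forward direction \emph{needs} the ``otherwise $\Rightarrow 0$'' clause of Theorem~\ref{thm:characterization1} to bind on absent clients (you use it to zero out every client outside $F$), while your converse needs that clause \emph{not} to bind on absent clients; you cannot read the theorem both ways. The paper escapes this because its sufficiency proof never routes through Theorem~\ref{thm:characterization1}: it verifies the two BIC inequalities directly, splitting $\mathbb{E}[\empiricaldatavalue_{i}]$ into coalitions with $i \in \clientsubset$ and $i \notin \clientsubset$. The latter terms are identical under truthful and untruthful reporting, since client $i$ cannot influence $v(\widehat{D}_{\clientsubset})$ when $i \notin \clientsubset$, so they cancel from both sides regardless of their coefficients' signs; only the $i \in \clientsubset$ terms matter, where nonnegativity and Assumption~\ref{ass:optimal_dataset} give Inequality~(\ref{formula:expected_data_value_i}), and efficiency ($\beta_{-i}(D_{\clientsubset}) = -\beta_i(D_{\clientsubset})$) gives Inequality~(\ref{formula:expected_data_value_minusi}). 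Your converse should be replaced by this direct verification rather than a citation of Theorem~\ref{thm:characterization1}.
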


Based on Theorem \ref{thm:characterization2}, we know that under a data valuation metric satisfying LIN, EFF, and BIC, each client's client-level data value $\phi_i$ should be determined only by the utilities $v(D_\mathcal{C})$ of all combinations $D_\mathcal{C}$ of client's full datasets $D_1,\dots,D_N$. 
Accordingly, we propose Truth-Shapley (simply TSV) $\tsv$, which uses an SV-style approach to (1) compute the client-level data value $\tsv_i$ based on the clients' full datasets and then (2) divide $\tsv_i$ among individual data blocks to derive $\tsv_{i,1}, \dots, \tsv_{i, M_i}$.

Specifically, let $\clientleveldatasubset = \{D_i\}_{\forall i \in \mathcal{C}}$ for all $\mathcal{C} \subseteq \mathbb{N}$ and $\mathbb{D}_{-i} = \mathbb{D}_{\mathbb{N}} \setminus \{D_i\}$.
Note that $\clientleveldatasubset$ is mathematically distinct from $D_{\mathcal{C}}$, but it holds the same physical meaning and thus corresponds to the same utility $v(\clientleveldatasubset) = v(D_{\mathcal{C}})$.
Then, we apply the approach of the SV to calculate the client-level TSV:
\begin{align*}
    &\tsv_i(\fullset, v) \coloneqq \sv_i(\clientlevelgranddataset, v) =  \sum_{\mathcal{C} \subseteq \mathbb{N} \setminus \{i\}} \weightsv(\mathcal{C} \mid \mathbb{N})\big(v(\clientleveldatasubset \cup \{D_i\}) - v(\clientleveldatasubset)\big),
\end{align*}
where $\weightsv(\mathcal{C} \mid \mathbb{N}) \coloneqq  \frac{|\mathcal{C}|!(|\mathbb{N}|- |\mathcal{C}| - 1)!}{|\mathbb{N}|!}$.
Next, we employ the SV to calculate the block-level TSV:
\begin{align*}
    &\tsv_{i,j}(\fullset, v) \coloneqq \sv_j(D_i, v^{\tsv_i}) =  \! \sum_{\datasubset \subseteq D_i \setminus \{D_{i,j}\}} \weightsv(\datasubset \!\mid\! D_i)\big(v^{\tsv_i}(\datasubset \cup \{D_{i,j}\}) \!-\! v^{\tsv_i}(\datasubset )\big),
\end{align*}
where $\weightsv(\mathcal{S} \mid D_i) \coloneqq  \frac{|\mathcal{S}|!(|D_i|- |\mathcal{S}| - 1)!}{|D_i|!}$ and $v^{\tsv_i}(\datasubset )\coloneqq \tsv_i(\mathbb{D}_{-i} \cup \{\stoi\}, v)$.
Intuitively, the utility $v^{\tsv_i}(\datasubset)$ represents the client-level TSV $\tsv_i$ when client $i$ contributes dataset $\stoi$.
Consequently, the block-level TSV $\tsv_{i,j}$ measures the expected marginal contribution of the data block $D_{i,j}$ to improving the client-level TSV $\tsv_i$.
Due to the use of the SV-style approach for defining both $\tsv_i$ and $\tsv_{i, j}$, we ensure that Truth-Shapley is linear, efficient, and perfectly complies with the characterization in Theorem \ref{thm:characterization2}.
Therefore, we conclude that it satisfies BIC.

\begin{theorem}
\label{thm:tsv_bic}
    Truth-Shapley $\tsv$ satisfies BIC.
\end{theorem}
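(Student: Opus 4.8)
The plan is to reduce the claim entirely to Characterization~2 (Theorem~\ref{thm:characterization2}): since that result is an exact iff-condition for a linear, efficient valuation metric to be BIC, it suffices to verify that $\tsv$ (i) is linear, (ii) is efficient, and (iii) has client-level values of the form $\tsv_i(\fullset, v) \equiv \sum_{\clientsubset \subseteq \clientset} \beta_i(D_{\clientsubset}) \cdot v(D_{\clientsubset})$ with $\beta_i(D_{\clientsubset}) \geq 0$ whenever $i \in \clientsubset \subset \clientset$. None of these requires re-deriving the incentive inequalities of Definition~\ref{def:truthfulness} by hand; they all follow from standard properties of the Shapley value applied at the client and block levels, and Theorem~\ref{thm:characterization2} (which already carries Assumption~\ref{ass:optimal_dataset}) then delivers BIC.

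First I would establish linearity and efficiency. Linearity is immediate: $\tsv_i = \sv_i(\clientlevelgranddataset, v)$ is a fixed linear combination of the utilities $\{v(D_\clientsubset)\}$, the auxiliary utility $v^{\tsv_i}$ is itself linear in $v$, and $\tsv_{i,j} = \sv_j(D_i, v^{\tsv_i})$ is linear in $v^{\tsv_i}$, so composing linear maps preserves LIN. For efficiency I would use the two-stage structure. At the client level the efficiency axiom of the SV gives $\sum_{i \in \clientset} \tsv_i = v(\clientlevelgranddataset) = v(\fullset)$. At the block level I must check the normalization $v^{\tsv_i}(\emptyset) = 0$: when client $i$ contributes the empty dataset, every client-level marginal contribution $v(\clientleveldatasubset \cup \{\emptyset\}) - v(\clientleveldatasubset)$ vanishes, so $i$ is a dummy player and $v^{\tsv_i}(\emptyset) = \tsv_i(\mathbb{D}_{-i}\cup\{\emptyset\}, v) = 0$; together with $v^{\tsv_i}(D_i) = \tsv_i$, the efficiency of the block-level SV yields $\sum_{j \in [M_i]} \tsv_{i,j} = \tsv_i$. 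Summing over $i$ gives $\sum_{i}\sum_{j} \tsv_{i,j} = v(\fullset)$, which is EFF, and simultaneously confirms the consistency $\tsv_i = \sum_j \tsv_{i,j}$ needed for the client-level characterization to apply.

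Next I would read off the coefficients $\beta_i(D_\clientsubset)$ from the client-level definition. Because $v(\clientleveldatasubset) = v(D_\clientsubset)$, the formula for $\tsv_i$ is exactly the standard client-level SV, so $\tsv_i$ is already a weighted sum of the full-dataset coalition utilities $\{v(D_\clientsubset)\}_{\clientsubset \subseteq \clientset}$; no utility $v(\datasubset)$ of a partial (block-splitting) subset ever appears, which is precisely the structural form demanded by Theorem~\ref{thm:characterization2}. Each $v(D_\clientsubset)$ arises from exactly one pair in the SV sum, so collecting coefficients gives $\beta_i(D_\clientsubset) = +\weightsv(\clientsubset \setminus \{i\} \mid \clientset) \geq 0$ when $i \in \clientsubset$ (from the $v(\clientleveldatasubset \cup \{D_i\})$ term) and $\beta_i(D_\clientsubset) = -\weightsv(\clientsubset \mid \clientset) \leq 0$ when $i \notin \clientsubset$. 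In particular the required sign condition $\beta_i(D_\clientsubset) \geq 0$ holds for every $\clientsubset \subset \clientset$ with $i \in \clientsubset$, so all hypotheses of Theorem~\ref{thm:characterization2} are met and BIC follows.

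The routine parts are the coefficient bookkeeping and LIN, which are inherited verbatim from the Shapley value. I expect the one genuinely load-bearing step to be the efficiency verification at the block level, specifically the normalization $v^{\tsv_i}(\emptyset) = 0$. This is what makes the inner SV partition $\tsv_i$ exactly, rather than up to an additive constant, and thereby guarantees both global EFF and the identity $\tsv_i = \sum_j \tsv_{i,j}$; without it the client-level form could not be legitimately plugged into Characterization~2. Everything else is a direct appeal to the already-proven theorem.
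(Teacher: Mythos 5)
Your proof is correct and takes essentially the same route as the paper: both reduce the claim to Characterization~2 (Theorem~\ref{thm:characterization2}) by expanding $\tsv_i$ as a weighted sum of coalition utilities $\{v(D_{\mathcal{C}})\}_{\mathcal{C}\subseteq \mathbb{N}}$ and checking the sign condition on the coefficients. Your version is in fact slightly more careful than the paper's: you verify LIN and EFF explicitly (including the normalization $v^{\tsv_i}(\emptyset)=0$ that makes the block-level SV partition $\tsv_i$ exactly), whereas the paper invokes these from the surrounding text, and your coefficient bookkeeping, $\beta_i(D_{\mathcal{C}}) = +\weightsv(\mathcal{C}\setminus\{i\}\mid\mathbb{N})$ for $i\in\mathcal{C}$ and $\beta_i(D_{\mathcal{C}}) = -\weightsv(\mathcal{C}\mid\mathbb{N})$ for $i\notin\mathcal{C}$, gives the correct signs, while the paper's written case split has the two cases transposed (evidently a typo, since its stated conclusion requires exactly the signs you derived).
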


Moreover, we find that 
apart from LIN and EFF, Truth-Shapley satisfies the following axioms. 
\textbf{(1)} Client-level dummy actions (DUM-C): If for all $\mathcal{C}\subseteq \mathbb{N} \setminus \set[i]$, we have $v(D_{\mathcal{C}}\cup D_{i}) - v(D_{\mathcal{C}}) = v(D_{i})$, then $\phi_{i}(\fullset, v) = v(D_{i})$. 
\textbf{(2)} Inner block-level dummy actions (DUM-IB): If for all $\mathcal{S}\subseteq D_{i} \setminus \set[D_{i,j}]$, we have $v(\mathcal{S}\cup \set[D_{i,j}]) - v(\mathcal{S}) = v(\set[D_{i,j}])$, then $\phi_{i,j}(\fullset, v) = v(\set[D_{i,j}])$. 
\textbf{(3)} Client-level symmetry (SYM-C): For any two clients $i_1, i_2$, if for any subset of the other clients $\mathcal{C} \subseteq \mathbb{N} \setminus \{i_1, i_2 \}$, we have $v(D_{\mathcal{C}} \cup D_{i_1}) = v(D_{\mathcal{C}} \cup D_{i_2})$, then we have $\phi_{i_1}(\fullset, v) =\phi_{i_2}(\fullset, v)$. 
\textbf{(4)} Inner block-level symmetry (SYM-IB): For any client $i$, if for any two data blocks $D_{i, j_1}, D_{i, j_2}$, and for any subset of their other blocks $\mathcal{S} \subseteq D_i \setminus \{D_{i, j_1}, D_{i, j_2} \}$, we have $v(\mathcal{S} \cup \set[D_{i, j_1}]) = v(\mathcal{S} \cup \set[D_{i, j_2}])$, then we have $\phi_{i, j_1}(\fullset, v) =\phi_{i, j_2}(\fullset, v)$.
DUM-C and SYM-C are client-level adaptations of the SV's axioms DUM and SYM (see definitions in \textbf{Appendix~\ref{appendix:sv})}, tailored for a fair allocation of data value across clients.
Similarly, DUM-IB and SYM-IB are block-level variants of these axioms, designed to ensure a fair distribution of data value among the data blocks of a single client.
More importantly, \textbf{Truth-Shapley uniquely satisfies EFF, LIN, DUM-C, DUM-IB, SYM-C, and SYM-IB, highlighting its distinctiveness among all BIC data valuation metrics.}

\begin{theorem}
\label{thm:tsv_unique}
    Truth-Shapley is the unique data valuation metric that satisfies EFF, LIN, DUM-C, DUM-IB, SYM-C, and SYM-IB.
\end{theorem}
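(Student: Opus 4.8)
The plan is to prove the two standard halves of an axiomatic characterization: existence, that Truth-Shapley $\tsv$ lies in the class of metrics satisfying EFF, LIN, DUM-C, DUM-IB, SYM-C, and SYM-IB; and uniqueness, that these six axioms admit at most one metric. For existence I would argue that $\tsv$ inherits each axiom from the Shapley value used at the corresponding stage of its construction. The outer stage $\tsv_i = \sv_i(\clientlevelgranddataset, v)$ is literally a Shapley value on the client game $\mathcal{C} \mapsto v(D_{\mathcal{C}})$, so it satisfies LIN and EFF in $v$ together with the client-level axioms DUM-C and SYM-C. The inner stage $\tsv_{i,j} = \sv_j(D_i, v^{\tsv_i})$ is a Shapley value splitting $\tsv_i$ among client $i$'s blocks, hence satisfies inner efficiency $\sum_j \tsv_{i,j} = \tsv_i$, LIN, and the block-level axioms DUM-IB and SYM-IB. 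Here the inner Shapley value operates on the reduced utility $v^{\tsv_i}(\datasubset) = \tsv_i(\mathbb{D}_{-i}\cup\{\datasubset\}, v)$ rather than on $v$ directly, so the verification amounts to checking that DUM-IB and SYM-IB, as applied to $\tsv$, reduce to the dummy and symmetry properties of the Shapley value for the game $v^{\tsv_i}$.

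For uniqueness I would use the Harsanyi/unanimity basis, as in the classical Shapley uniqueness proof. Since every utility $v$ with $v(\emptyset)=0$ is spanned by the unanimity games $u_T(\datasubset) \coloneqq \mathbf{1}[T \subseteq \datasubset]$ for nonempty $T \subseteq \fullset$, writing $v = \sum_T c_T u_T$ and applying LIN reduces the task to determining $\phi_{i,j}(u_T)$ for every carrier set $T$. Let $\mathbb{N}(T)$ be the clients owning at least one block of $T$ and $T_i = T \cap D_i$. I would then pin down each $\phi_{i,j}(u_T)$ by three observations: any block $D_{i,j} \notin T$ is a dummy for $u_T$ (adding it never completes $T$), so DUM-IB forces $\phi_{i,j}(u_T) = u_T(\set[D_{i,j}]) = 0$; any two carrier blocks of the same client are symmetric for $u_T$ (neither alone can complete $T$ inside $D_i$), so SYM-IB equalizes them; and EFF fixes the total $\sum_{i,j}\phi_{i,j}(u_T) = u_T(\fullset) = 1$. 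This pins down each client's aggregate share and its equal split over the $|T_i|$ carrier blocks, leaving only the cross-client weights undetermined.

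The cross-client weights are settled by SYM-C together with EFF, and this is the step I expect to be the main obstacle. The subtlety is that different clients in $\mathbb{N}(T)$ may own different numbers of carrier blocks, so it is not a priori clear they deserve equal aggregate shares. The resolution — the linchpin of the argument — is that client-level symmetry is tested only on full-dataset coalitions $D_{\mathcal{C}}$: under $u_T$ a coalition has value one exactly when it contains every owner in $\mathbb{N}(T)$ simultaneously, so for any two owners $i_1, i_2$ and any $\mathcal{C} \subseteq \mathbb{N} \setminus \{i_1, i_2\}$ the quantities $u_T(D_{\mathcal{C}} \cup D_{i_1})$ and $u_T(D_{\mathcal{C}} \cup D_{i_2})$ are both zero whenever $|\mathbb{N}(T)| \geq 2$. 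Hence all owners are client-symmetric irrespective of how many carrier blocks each holds, and SYM-C forces equal aggregate shares $1/|\mathbb{N}(T)|$; clients outside $\mathbb{N}(T)$ already receive zero from the block-level dummy step (equivalently from DUM-C). Consequently each carrier block $D_{i,j} \in T_i$ gets $\phi_{i,j}(u_T) = \frac{1}{|\mathbb{N}(T)|\,|T_i|}$ and every other block gets $0$, so $\phi$ is uniquely determined on the basis and, by LIN, everywhere.

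Finally I would close by combining the halves: the existence argument places $\tsv$ in the class, while the uniqueness argument shows the six axioms leave no freedom, so $\tsv$ is the unique such metric. If a fully explicit identification is wanted, one can expand the nested Shapley formula for $\tsv_{i,j}$ and check it reproduces $\sum_{T \ni D_{i,j}} c_T /(|\mathbb{N}(T)|\,|T_i|)$, but this is a routine verification once the unique basis values above are in hand.
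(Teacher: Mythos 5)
Your uniqueness argument has a genuine gap, and it sits exactly at the step you call the linchpin's neighbor: the block-level use of DUM-IB and SYM-IB. As stated in the paper (and as you use them), both axioms test only \emph{within-client} coalitions $\mathcal{S} \subseteq D_i$. Now take a unanimity game $u_T$ whose carrier spans at least two clients, say $T = \{D_{1,1}, D_{2,1}\}$. No subset of $D_1$ can ever contain $D_{2,1}$, so $u_T$ restricted to within-client coalitions is identically zero; hence the DUM-IB hypothesis holds for \emph{every} block of every client, carrier blocks included, and the SYM-IB hypothesis holds for \emph{every} pair of same-client blocks, including a carrier block paired with a non-carrier one. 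The very deductions you make (a block outside $T$ is a within-client dummy, so it gets $0$; same-client carrier blocks are within-client symmetric, so they are equal) therefore also force the carrier blocks to $0$ — either directly by DUM-IB, or by SYM-IB equating them with the non-carrier blocks you already zeroed — which contradicts EFF, since the block values must sum to $u_T(\fullset) = 1$. So under your reading the six axioms have no model at all; in particular Truth-Shapley itself violates them, as it assigns $1/\bigl(|\mathbb{N}(T)|\,|T\cap D_i|\bigr) \neq 0$ to carrier blocks. Your derivation is thus internally inconsistent: the values you end up assigning contradict the axioms you invoked to derive them, and the existence half cannot be completed either.

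The theorem is only coherent if DUM-IB and SYM-IB are read relative to the induced within-client game $v^{\phi_i}(\datasubset) = \phi_i(\mathbb{D}_{-i} \cup \{\datasubset\}, v)$, i.e., dummies and symmetries of blocks are judged in the game mapping client $i$'s contribution to client $i$'s own client-level value; on the example above, for an owner $i$ this induced game equals $\tfrac{1}{|\mathbb{N}(T)|}\,u_{T\cap D_i}$, which is not identically zero, so carrier blocks are no longer dummies. This is precisely the reduction you flag in your existence paragraph (checking that the inner axioms reduce to dummy/symmetry of the Shapley value for $v^{\tsv_i}$) but never carry out — and it is where the within-$v$ reading breaks, since neither the hypotheses nor the conclusions ($v(\{D_{i,j}\})$ versus $v^{\tsv_i}(\{D_{i,j}\})$) match. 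The paper's own proof is a two-stage application of Theorem~\ref{thm:sv_unique}: LIN, EFF, DUM-C, SYM-C pin down $\phi_i = \sv_i(\clientlevelgranddataset, v)$ via the client game $\mathcal{C} \mapsto v(D_{\mathcal{C}})$ on every instance; this determines $v^{\phi_i}$, and a second application inside each client (LIN, DUM-IB, SYM-IB, plus $\sum_j \phi_{i,j} = \phi_i$) pins down $\phi_{i,j} = \sv_j(D_i, v^{\phi_i})$. Your unanimity-basis technique is a perfectly good substitute for that black box, and your client-level step (DUM-C zeroing non-owners, SYM-C plus EFF giving owners $1/|\mathbb{N}(T)|$) survives unchanged; but the within-client step must be run on unanimity decompositions of the induced games $v^{\phi_i}$ — which requires first pinning down $\phi_i$ on the modified instances $\mathbb{D}_{-i} \cup \{\datasubset\}$ — not, as you do, in a single pass over block-level unanimity games of $v$ itself.
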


\begin{table*}[t]
\centering
\caption{Reward allocation results under the data overvaluation attack. The percentages report the Symmetric Percentage Change in the attacker $i$'s data value $\empiricaldatavalue_i$ and the other clients' total data value $\empiricaldatavalue_{-i}$ after performing data overvaluation. The client-level valuation error is defined as the normalized MAE between the vectors $[\empiricaldatavalue_1,\dots,\empiricaldatavalue_N]$ and $[\phi_1,\dots,\phi_N]$.}
\label{tab:reward_alloc_full_knowledge}
\resizebox{\textwidth}{!}
{
\begin{tabular}{cl|ccccc|ccccc|ccccc}
\hline
\multicolumn{2}{c|}{\multirow{2}{*}{\textbf{\begin{tabular}[c]{@{}c@{}}CML \\ Setting\end{tabular}}}} & \multicolumn{5}{c|}{\textbf{Change in $\empiricaldatavalue_i$ (\%)}}     & \multicolumn{5}{c|}{\textbf{Change in $\empiricaldatavalue_{-i}$ (\%)}}  & \multicolumn{5}{c}{\textbf{Client-level valuation error}}                    \\ \cline{3-17} 
\multicolumn{2}{c|}{}                                                                                 & \textbf{SV}  & \textbf{TSV} & \textbf{LOO} & \textbf{BSV} & \textbf{BV}  & \textbf{SV}  & \textbf{TSV} & \textbf{LOO} & \textbf{BSV} & \textbf{BV}  & \textbf{SV}   & \textbf{TSV} & \textbf{LOO}  & \textbf{BSV}  & \textbf{BV}   \\ \hline
\multirow{3}{*}{\textit{Bank}}                            & \textit{FedAvg}                           & +82          & 0.0          & +119         & +63          & +48          & -66          & 0.0          & 0.0          & -13          & -31          & 0.83          & 0.0          & 3.18          & 0.54          & 0.44          \\
                                                          & \textit{SplitFed}                         & +89          & 0.0          & +95          & +81          & +73          & -66          & 0.0          & 0.0          & -16          & -38          & 0.85          & 0.0          & 1.79          & 0.67          & 0.43          \\
                                                          & \textit{FedMD}                            & +73          & 0.0          & +6.5         & +66          & +46          & -37          & 0.0          & 0.0          & -13          & -18          & 0.57          & 0.0          & 0.41          & 0.50          & 0.30          \\ \hline
\multirow{3}{*}{\textit{Rent}}                            & \textit{FedAvg}                           & +69          & 0.0          & +177         & +43          & +72          & -78          & 0.0          & 0.0          & -7.9         & -112         & 4.34          & 0.0          & 15.4          & 1.63          & 3.45          \\
                                                          & \textit{SplitFed}                         & +83          & 0.0          & +183         & +52          & +95          & -122         & 0.0          & 0.0          & -10          & -160         & 5.56          & 0.0          & 9.09          & 2.12          & 5.03          \\
                                                          & \textit{FedMD}                            & +80          & 0.0          & +187         & +45          & +79          & -108         & 0.0          & 0.0          & -8.3         & -154         & 6.22          & 0.0          & 17.5          & 1.56          & 4.72          \\ \hline
\multirow{3}{*}{\textit{MNIST}}                           & \textit{FedAvg}                           & +48          & 0.0          & +58          & +63          & +22          & -37          & 0.0          & 0.0          & -14          & -20          & 2.31          & 0.0          & 0.70          & 2.38          & 0.90          \\
                                                          & \textit{SplitFed}                         & +68          & 0.0          & +66          & +77          & +48          & -68          & 0.0          & 0.0          & -19          & -50          & 3.35          & 0.0          & 0.56          & 3.29          & 1.87          \\
                                                          & \textit{FedMD}                            & +53          & 0.0          & +29          & +77          & +21          & -43          & 0.0          & 0.0          & -19          & -20          & 2.49          & 0.0          & 0.87          & 2.14          & 1.12          \\ \hline
\multirow{3}{*}{\textit{FMNIST}}                          & \textit{FedAvg}                           & +107         & 0.0          & +44          & +116         & +22          & -132         & 0.0          & 0.0          & -50          & -130         & 0.85          & 0.0          & 0.36          & 3.35          & 0.28          \\
                                                          & \textit{SplitFed}                         & +120         & 0.0          & +44          & +138         & +102         & -175         & 0.0          & 0.0          & -102         & -159         & 1.72          & 0.0          & 0.38          & 4.75          & 0.80          \\
                                                          & \textit{FedMD}                            & +67          & 0.0          & +86          & +123         & +30          & -71          & 0.0          & 0.0          & -63          & -31          & 3.04          & 0.0          & 1.99          & 3.57          & 1.88          \\ \hline
\multirow{3}{*}{\textit{CIFAR10}}                         & \textit{FedAvg}                           & +68          & 0.0          & +62          & +75          & +33          & -123         & 0.0          & +0.0         & -46          & -87          & 1.00          & 0.0          & 0.36          & 2.82          & 0.39          \\
                                                          & \textit{SplitFed}                         & +109         & 0.0          & +65          & +98          & +104         & -142         & 0.0          & +0.0         & -46          & -131         & 1.51          & 0.0          & 0.38          & 3.83          & 0.77          \\
                                                          & \textit{FedMD}                            & +62          & 0.0          & +107         & +98          & +28          & -70          & 0.0          & 0.0          & -37          & -33          & 1.33          & 0.0          & 0.80          & 3.60          & 0.53          \\ \hline
\multirow{3}{*}{\textit{AGNews}}                          & \textit{FedAvg}                           & +63          & 0.0          & +153         & +79          & +35          & -60          & 0.0          & 0.0          & -20          & -34          & 10.4          & 0.0          & 7.23          & 6.85          & 4.67          \\
                                                          & \textit{SplitFed}                         & +73          & 0.0          & +154         & +90          & +46          & -81          & 0.0          & 0.0          & -26          & -53          & 11.8          & 0.0          & 5.53          & 6.15          & 6.55          \\
                                                          & \textit{FedMD}                            & +60          & 0.0          & +58          & +84          & +41          & -55          & 0.0          & 0.0          & -24          & -37          & 5.43          & 0.0          & 0.62          & 2.76          & 5.81          \\ \hline
\multirow{3}{*}{\textit{Yahoo}}                           & \textit{FedAvg}                           & +61          & 0.0          & +110         & +83          & +25          & -78          & 0.0          & 0.0          & -29          & -22          & 2.58          & 0.0          & 5.42          & 2.88          & 0.88          \\
                                                          & \textit{SplitFed}                         & +110         & 0.0          & +110         & +147         & +76          & -196         & 0.0          & 0.0          & -142         & -135         & 6.22          & 0.0          & 4.06          & 9.20          & 3.32          \\
                                                          & \textit{FedMD}                            & +67          & 0.0          & +129         & +92          & +29          & -89          & 0.0          & 0.0          & -37          & -29          & 2.54          & 0.0          & 4.21          & 2.31          & 1.07          \\ \hline
\multicolumn{2}{c|}{\textbf{Average}}                                                                 & \textbf{+77} & \textbf{0.0} & \textbf{+97} & \textbf{+85} & \textbf{+51} & \textbf{-90} & \textbf{0.0} & \textbf{0.0} & \textbf{-35} & \textbf{-71} & \textbf{3.57} & \textbf{0.0} & \textbf{3.85} & \textbf{3.19} & \textbf{2.15} \\ \hline
\end{tabular}
}
\end{table*}

\section{Experiments}
\subsection{Setup}
\textbf{Research questions.}
Our experiments answer the following questions:
How do Truth-Shapley and existing data valuation metrics perform in reward allocation and data selection under data valuation attacks?
In scenarios without attacks, is Truth-Shapley an effective metric for data selection?

\textbf{Baselines.}
We include four SOTA \textbf{linear} data valuation metrics as baselines: the SV, the LOO, Beta Shapley (BSV)~\citep{kwon2022beta}, and Banzhaf value (BV)~\citep{wang2023data}.
Since they satisfy linearity, the data overvaluation attack can be applied to them. 
For BSV, we select the Beta(16, 1) distribution to set up its parameters, which demonstrates the best performance in the original paper.

\textbf{CML settings.}
We consider seven CML tasks, including one tabular data classification task (\textit{Bank}~\citep{bank_marketing_222}), one tabular data regression task (\textit{Rent}~\citep{apartment_for_rent_classified_555}), three image classification tasks (\textit{MNIST}~\citep{lecun1998gradient}, \textit{FMNIST}~\citep{xiao2017fashion}, \textit{CIFAR10}~\citep{krizhevsky2009learning}), and two text classification tasks (AGNews and Yahoo).
For each task, we partition the data into \textbf{non-IID} data blocks and distribute the blocks to three clients to simulate an FL scenario.
Additionally, since clients report different types of data representations under various FL algorithms, we implement three representative FL algorithms to evaluate the performance of data overvaluation under three common data representation settings. 
These algorithms include FedAvg~\citep{mcmahan2017communication}, SplitFed~\citep{thapa2022splitfed}, and FedMD~\citep{li2019fedmd}, where the attacker performs data overvaluation by misreporting local models, embeddings of local data, and predicted logits, respectively.
We conduct 15 repeated experiments, resampling local data for the clients and randomly selecting one client as the attacker in each run.
More details can be found in \textbf{Appendix \ref{appendix:cml_settings}}.

\textbf{Utility metric.}
The utility metric for classification tasks is classification accuracy, while for regression tasks it is R-squared. 
Additional experiments using other utility metrics are reported in \textbf{Appendix~\ref{appendix:exp_utility_metrics}}.


\begin{table*}[t]
\centering
\caption{Data selection results under the data overvaluation attack. The reported values indicate the decline rate in model utility when data blocks are selected based on inflated data values. }
\label{tab:data_sel_decline_full}
\small
\resizebox{\textwidth}{!}
{
\begin{tabular}{cl|ccccccccccccccc}
\hline
\multicolumn{2}{c|}{\multirow{3}{*}{\textbf{\begin{tabular}[c]{@{}c@{}}CML \\ setting\end{tabular}}}} & \multicolumn{15}{c}{\textbf{Decline (\%) in model utility due to data overvaluation}}                                                                                                                                                                                                                                                                                                    \\ \cline{3-17} 
\multicolumn{2}{c|}{}                                                                                 & \multicolumn{5}{c|}{\textbf{Top-k Selection}}                                                     & \multicolumn{5}{c|}{\textbf{Above-average Selection}}                                             & \multicolumn{5}{c}{\textbf{Above-median Selection}}                                                                                                                              \\ \cline{3-17} 
\multicolumn{2}{c|}{}                                                                                 & \textbf{SV}   & \textbf{TSV} & \textbf{LOO}  & \textbf{BSV}  & \multicolumn{1}{c|}{\textbf{BV}}   & \textbf{SV}   & \textbf{TSV} & \textbf{LOO}  & \textbf{BSV}  & \multicolumn{1}{c|}{\textbf{BV}}   & \multicolumn{1}{c}{\textbf{SV}}   & \multicolumn{1}{c}{\textbf{TSV}} & \multicolumn{1}{c}{\textbf{LOO}}  & \multicolumn{1}{c}{\textbf{BSV}}  & \multicolumn{1}{c}{\textbf{BV}}   \\ \hline
\multirow{3}{*}{\textit{Bank}}                            & \textit{FedAvg}                           & 2.29          & 0.0          & 2.61          & 0.19          & \multicolumn{1}{c|}{0.19}          & 0.19          & 0.0          & 1.03          & 0.18          & \multicolumn{1}{c|}{0.02}          & 0.10                              & 0.0                              & 0.35                              & 0.08                              & 0.07                              \\
                                                          & \textit{SplitFed}                         & 1.97          & 0.0          & 1.94          & 0.34          & \multicolumn{1}{c|}{0.14}          & 0.31          & 0.0          & 1.26          & 0.33          & \multicolumn{1}{c|}{0.03}          & 0.35                              & 0.0                              & 0.08                              & 0.39                              & 0.12                              \\
                                                          & \textit{FedMD}                            & 0.95          & 0.0          & 0.40          & 0.17          & \multicolumn{1}{c|}{0.12}          & 0.15          & 0.0          & 0.49          & 0.11          & \multicolumn{1}{c|}{0.04}          & 0.12                              & 0.0                              & 0.21                              & 0.32                              & 0.04                              \\ \hline
\multirow{3}{*}{\textit{Rent}}                            & \textit{FedAvg}                           & 10.3          & 0.0          & 11.0          & 1.08          & \multicolumn{1}{c|}{5.45}          & 2.27          & 0.0          & 1.05          & 1.80          & \multicolumn{1}{c|}{2.59}          & 2.75                              & 0.0                              & 1.28                              & 2.75                              & 1.77                              \\
                                                          & \textit{SplitFed}                         & 10.0          & 0.0          & 11.1          & 0.96          & \multicolumn{1}{c|}{9.20}          & 1.82          & 0.0          & 2.92          & 1.73          & \multicolumn{1}{c|}{1.04}          & 1.69                              & 0.0                              & 1.73                              & 1.69                              & 1.69                              \\
                                                          & \textit{FedMD}                            & 12.8          & 0.0          & 12.7          & 1.16          & \multicolumn{1}{c|}{9.41}          & 3.14          & 0.0          & 4.56          & 2.64          & \multicolumn{1}{c|}{2.52}          & 3.19                              & 0.0                              & 2.20                              & 2.39                              & 3.12                              \\ \hline
\multirow{3}{*}{\textit{MNIST}}                           & \textit{FedAvg}                           & 3.53          & 0.0          & 1.56          & 2.48          & \multicolumn{1}{c|}{2.11}          & 1.45          & 0.0          & 1.95          & 1.13          & \multicolumn{1}{c|}{2.47}          & 2.40                              & 0.0                              & 1.38                              & 0.49                              & 1.14                              \\
                                                          & \textit{SplitFed}                         & 3.23          & 0.0          & 2.39          & 2.52          & \multicolumn{1}{c|}{4.43}          & 0.65          & 0.0          & 5.52          & 1.24          & \multicolumn{1}{c|}{0.52}          & 2.58                              & 0.0                              & 2.75                              & 0.04                              & 3.37                              \\
                                                          & \textit{FedMD}                            & 3.78          & 0.0          & 2.00          & 2.44          & \multicolumn{1}{c|}{1.93}          & 1.52          & 0.0          & 2.52          & 1.34          & \multicolumn{1}{c|}{2.54}          & 5.47                              & 0.0                              & 1.70                              & 2.08                              & 3.05                              \\ \hline
\multirow{3}{*}{\textit{FMNIST}}                          & \textit{FedAvg}                           & 4.60          & 0.0          & 6.34          & 12.6          & \multicolumn{1}{c|}{3.51}          & 0.06          & 0.0          & 0.06          & 21.0          & \multicolumn{1}{c|}{0.06}          & 0.25                              & 0.0                              & 0.08                              & 25.4                              & 0.10                              \\
                                                          & \textit{SplitFed}                         & 14.8          & 0.0          & 6.70          & 12.5          & \multicolumn{1}{c|}{7.80}          & 19.1          & 0.0          & 4.77          & 0.0           & \multicolumn{1}{c|}{4.91}          & 17.4                              & 0.0                              & 0.11                              & 26.2                              & 3.05                              \\
                                                          & \textit{FedMD}                            & 5.34          & 0.0          & 3.55          & 7.56          & \multicolumn{1}{c|}{2.27}          & 1.53          & 0.0          & 1.93          & 0.0           & \multicolumn{1}{c|}{5.29}          & 13.2                              & 0.0                              & 8.33                              & 6.45                              & 1.14                              \\ \hline
\multirow{3}{*}{\textit{CIFAR10}}                         & \textit{FedAvg}                           & 1.06          & 0.0          & 0.32          & 1.77          & \multicolumn{1}{c|}{3.15}          & 1.28          & 0.0          & 0.0           & 0.69          & \multicolumn{1}{c|}{1.60}          & 1.22                              & 0.0                              & 0.0                               & 11.6                              & 0.0                               \\
                                                          & \textit{SplitFed}                         & 6.83          & 0.0          & 0.22          & 2.19          & \multicolumn{1}{c|}{3.25}          & 12.7          & 0.0          & 0.0           & 0.0           & \multicolumn{1}{c|}{9.79}          & 11.7                              & 0.0                              & 0.0                               & 12.8                              & 1.51                              \\
                                                          & \textit{FedMD}                            & 3.84          & 0.0          & 5.45          & 5.35          & \multicolumn{1}{c|}{0.86}          & 8.62          & 0.0          & 15.8          & 0.0           & \multicolumn{1}{c|}{8.08}          & 8.05                              & 0.0                              & 8.18                              & 11.2                              & 4.56                              \\ \hline
\multirow{3}{*}{\textit{AGNews}}                          & \textit{FedAvg}                           & 0.75          & 0.0          & 0.85          & 0.85          & \multicolumn{1}{c|}{0.54}          & 3.73          & 0.0          & 2.33          & 0.20          & \multicolumn{1}{c|}{2.88}          & 2.16                              & 0.0                              & 0.08                              & 0.79                              & 2.05                              \\
                                                          & \textit{SplitFed}                         & 1.69          & 0.0          & 0.58          & 0.57          & \multicolumn{1}{c|}{1.02}          & 3.63          & 0.0          & 0.23          & 1.03          & \multicolumn{1}{c|}{4.17}          & 4.03                              & 0.0                              & 0.06                              & 1.03                              & 4.44                              \\
                                                          & \textit{FedMD}                            & 1.78          & 0.0          & 1.33          & 1.18          & \multicolumn{1}{c|}{1.26}          & 4.27          & 0.0          & 0.0           & 3.06          & \multicolumn{1}{c|}{6.35}          & 5.01                              & 0.0                              & 0.0                               & 1.99                              & 6.05                              \\ \hline
\multirow{3}{*}{\textit{Yahoo}}                           & \textit{FedAvg}                           & 5.16          & 0.0          & 5.50          & 4.37          & \multicolumn{1}{c|}{2.06}          & 33.5          & 0.0          & 19.7          & 31.1          & \multicolumn{1}{c|}{8.03}          & 5.32                              & 0.0                              & 9.33                              & 5.78                              & 1.63                              \\
                                                          & \textit{SplitFed}                         & 5.88          & 0.0          & 5.93          & 4.47          & \multicolumn{1}{c|}{4.46}          & 35.1          & 0.0          & 25.8          & 33.3          & \multicolumn{1}{c|}{32.9}          & 6.81                              & 0.0                              & 6.63                              & 5.46                              & 6.09                              \\
                                                          & \textit{FedMD}                            & 4.81          & 0.0          & 4.51          & 4.50          & \multicolumn{1}{c|}{4.07}          & 38.7          & 0.0          & 41.4          & 29.4          & \multicolumn{1}{c|}{14.9}          & 5.55                              & 0.0                              & 6.26                              & 5.25                              & 5.56                              \\ \hline
\multicolumn{2}{c|}{\textbf{Average}}                                                                 & \textbf{5.02} & \textbf{0.0} & \textbf{4.14} & \textbf{3.30} & \multicolumn{1}{c|}{\textbf{3.20}} & \textbf{8.27} & \textbf{0.0} & \textbf{6.35} & \textbf{6.20} & \multicolumn{1}{c|}{\textbf{5.27}} & \multicolumn{1}{c}{\textbf{4.73}} & \multicolumn{1}{c}{\textbf{0.0}} & \multicolumn{1}{c}{\textbf{2.42}} & \multicolumn{1}{c}{\textbf{5.91}} & \multicolumn{1}{c}{\textbf{2.41}} \\ \hline
\end{tabular}
}
\end{table*}

\begin{table*}[t]
\centering
\caption{Data selection results without data overvaluation.}
\label{tab:data_sel_full}
\small
\resizebox{\textwidth}{!}
{
\begin{tabular}{cl|ccccccccccccccc}
\hline
\multicolumn{2}{c|}{\multirow{3}{*}{\textbf{\begin{tabular}[c]{@{}c@{}}CML \\ setting\end{tabular}}}} & \multicolumn{15}{c}{\textbf{Model utility without data overvaluation}}                                                                                                                                                                                                                  \\ \cline{3-17} 
\multicolumn{2}{c|}{}                                                                                 & \multicolumn{5}{c|}{\textbf{Top-k Selection}}                                                      & \multicolumn{5}{c|}{\textbf{Above-average Selection}}                                              & \multicolumn{5}{c}{\textbf{Above-median Selection}}                           \\ \cline{3-17} 
\multicolumn{2}{c|}{}                                                                                 & \textbf{SV}   & \textbf{TSV}  & \textbf{LOO}  & \textbf{BSV}  & \multicolumn{1}{c|}{\textbf{BV}}   & \textbf{SV}   & \textbf{TSV}  & \textbf{LOO}  & \textbf{BSV}  & \multicolumn{1}{c|}{\textbf{BV}}   & \textbf{SV}   & \textbf{TSV}  & \textbf{LOO}  & \textbf{BSV}  & \textbf{BV}   \\ \hline
\multirow{3}{*}{\textit{Bank}}                            & \textit{FedAvg}                           & 82.1          & 82.0          & 82.0          & 82.1          & \multicolumn{1}{c|}{82.1}          & 82.7          & 82.7          & 82.4          & 82.7          & \multicolumn{1}{c|}{82.7}          & 82.7          & 82.6          & 82.9          & 82.7          & 82.7          \\
                                                          & \textit{SplitFed}                         & 82.1          & 82.1          & 82.0          & 82.1          & \multicolumn{1}{c|}{82.0}          & 82.5          & 82.6          & 82.5          & 82.6          & \multicolumn{1}{c|}{82.5}          & 82.7          & 82.6          & 82.6          & 82.7          & 82.8          \\
                                                          & \textit{FedMD}                            & 82.2          & 82.2          & 82.0          & 82.2          & \multicolumn{1}{c|}{82.2}          & 82.7          & 82.5          & 82.8          & 82.7          & \multicolumn{1}{c|}{82.4}          & 82.9          & 82.8          & 82.9          & 82.9          & 82.9          \\ \hline
\multirow{3}{*}{\textit{Rent}}                            & \textit{FedAvg}                           & 83.8          & 83.0          & 84.4          & 83.7          & \multicolumn{1}{c|}{83.8}          & 84.1          & 83.7          & 82.3          & 84.8          & \multicolumn{1}{c|}{83.7}          & 84.5          & 83.5          & 85.8          & 84.5          & 84.4          \\
                                                          & \textit{SplitFed}                         & 83.8          & 83.1          & 85.6          & 83.8          & \multicolumn{1}{c|}{83.5}          & 84.0          & 83.3          & 85.6          & 84.9          & \multicolumn{1}{c|}{82.8}          & 84.7          & 84.0          & 86.8          & 84.7          & 84.7          \\
                                                          & \textit{FedMD}                            & 86.5          & 86.1          & 87.0          & 87.0          & \multicolumn{1}{c|}{86.1}          & 89.4          & 88.8          & 87.3          & 89.4          & \multicolumn{1}{c|}{89.2}          & 89.4          & 89.4          & 89.5          & 88.4          & 89.4          \\ \hline
\multirow{3}{*}{\textit{MNIST}}                           & \textit{FedAvg}                           & 76.7          & 75.2          & 75.9          & 76.2          & \multicolumn{1}{c|}{77.4}          & 77.9          & 67.7          & 71.2          & 78.5          & \multicolumn{1}{c|}{78.9}          & 82.8          & 80.6          & 83.5          & 76.3          & 82.2          \\
                                                          & \textit{SplitFed}                         & 76.5          & 75.0          & 77.0          & 74.5          & \multicolumn{1}{c|}{78.3}          & 78.6          & 64.4          & 76.0          & 78.4          & \multicolumn{1}{c|}{77.6}          & 81.2          & 79.2          & 83.5          & 74.0          & 82.9          \\
                                                          & \textit{FedMD}                            & 77.5          & 78.3          & 80.2          & 77.8          & \multicolumn{1}{c|}{77.9}          & 84.1          & 78.0          & 82.0          & 83.7          & \multicolumn{1}{c|}{81.5}          & 84.1          & 85.8          & 85.6          & 82.2          & 85.0          \\ \hline
\multirow{3}{*}{\textit{FMNIST}}                          & \textit{FedAvg}                           & 52.9          & 52.0          & 49.2          & 46.5          & \multicolumn{1}{c|}{53.8}          & 52.4          & 52.4          & 52.4          & 39.8          & \multicolumn{1}{c|}{52.4}          & 52.5          & 52.5          & 52.5          & 53.4          & 52.5          \\
                                                          & \textit{SplitFed}                         & 52.1          & 49.9          & 49.2          & 46.3          & \multicolumn{1}{c|}{54.0}          & 52.4          & 52.4          & 53.3          & 39.2          & \multicolumn{1}{c|}{52.4}          & 52.3          & 52.3          & 52.3          & 53.6          & 52.2          \\
                                                          & \textit{FedMD}                            & 59.0          & 57.0          & 54.7          & 59.2          & \multicolumn{1}{c|}{55.8}          & 66.7          & 68.0          & 42.0          & 63.4          & \multicolumn{1}{c|}{65.9}          & 68.9          & 68.9          & 62.5          & 59.2          & 62.2          \\ \hline
\multirow{3}{*}{\textit{CIFAR10}}                         & \textit{FedAvg}                           & 30.0          & 30.4          & 25.4          & 25.0          & \multicolumn{1}{c|}{31.9}          & 37.1          & 34.5          & 27.3          & 18.1          & \multicolumn{1}{c|}{42.2}          & 32.4          & 32.4          & 27.3          & 22.3          & 31.9          \\
                                                          & \textit{SplitFed}                         & 29.3          & 30.5          & 25.2          & 25.1          & \multicolumn{1}{c|}{32.0}          & 36.7          & 36.7          & 27.2          & 18.0          & \multicolumn{1}{c|}{36.7}          & 32.0          & 32.0          & 27.2          & 22.9          & 32.0          \\
                                                          & \textit{FedMD}                            & 39.4          & 39.6          & 37.9          & 38.8          & \multicolumn{1}{c|}{40.6}          & 38.5          & 39.6          & 42.1          & 30.3          & \multicolumn{1}{c|}{40.1}          & 41.9          & 43.4          & 43.9          & 39.2          & 44.3          \\ \hline
\multirow{3}{*}{\textit{AGNews}}                          & \textit{FedAvg}                           & 69.5          & 73.3          & 69.3          & 69.1          & \multicolumn{1}{c|}{70.0}          & 78.1          & 80.3          & 70.5          & 66.9          & \multicolumn{1}{c|}{77.9}          & 79.6          & 80.9          & 80.1          & 77.3          & 80.2          \\
                                                          & \textit{SplitFed}                         & 70.6          & 75.3          & 69.3          & 69.9          & \multicolumn{1}{c|}{69.6}          & 77.8          & 81.4          & 64.2          & 75.4          & \multicolumn{1}{c|}{79.1}          & 80.3          & 81.6          & 80.3          & 76.8          & 80.7          \\
                                                          & \textit{FedMD}                            & 73.6          & 75.0          & 72.1          & 71.7          & \multicolumn{1}{c|}{71.1}          & 81.8          & 80.4          & 64.9          & 72.9          & \multicolumn{1}{c|}{82.6}          & 80.9          & 81.3          & 82.1          & 81.5          & 82.2          \\ \hline
\multirow{3}{*}{\textit{Yahoo}}                           & \textit{FedAvg}                           & 43.1          & 43.0          & 43.1          & 43.1          & \multicolumn{1}{c|}{43.1}          & 50.3          & 49.5          & 41.7          & 46.9          & \multicolumn{1}{c|}{50.0}          & 44.6          & 44.6          & 45.5          & 44.5          & 44.6          \\
                                                          & \textit{SplitFed}                         & 43.2          & 43.1          & 43.3          & 43.0          & \multicolumn{1}{c|}{43.2}          & 48.2          & 46.4          & 44.4          & 47.2          & \multicolumn{1}{c|}{46.7}          & 45.0          & 44.4          & 44.6          & 44.6          & 44.6          \\
                                                          & \textit{FedMD}                            & 44.0          & 44.1          & 43.8          & 44.0          & \multicolumn{1}{c|}{44.1}          & 51.2          & 52.4          & 53.5          & 45.5          & \multicolumn{1}{c|}{49.9}          & 45.8          & 46.0          & 45.6          & 45.5          & 45.8          \\ \hline
\multicolumn{2}{c|}{\textbf{Average}}                                                                 & \textbf{63.7} & \textbf{63.8} & \textbf{62.8} & \textbf{62.4} & \multicolumn{1}{c|}{\textbf{63.9}} & \textbf{67.5} & \textbf{66.1} & \textbf{62.6} & \textbf{62.4} & \multicolumn{1}{c|}{\textbf{67.5}} & \textbf{67.2} & \textbf{67.2} & \textbf{67.0} & \textbf{64.7} & \textbf{67.2} \\ \hline
\end{tabular}
}
\end{table*}

\subsection{Main Results}
\label{sec:reward_alloc}

\textbf{Reward allocation.}
Table~\ref{tab:reward_alloc_full_knowledge} presents the performance of the data overvaluation attack against various data valuation metrics in the reward allocation task. 
Since the reward of attacker $i$ depends on both their own data value $\empiricaldatavalue_i$ and the other clients' data values $\empiricaldatavalue_{-i}$, we measure the impact of the data overvaluation attack on reward allocation by assessing the changes in both $\empiricaldatavalue_i$ and $\empiricaldatavalue_{-i}$ relative to the truthful values.
Additionally, we compute the \textit{client-level valuation error}, which measures the absolute change in each client's data value, to evaluate the robustness of different data valuation metrics against data overvaluation.

As shown in Table \ref{tab:reward_alloc_full_knowledge}, Truth-Shapley is completely robust against the data overvaluation attack, meaning that such attacks have no impact on client-level TSVs. 
In contrast, the attack significantly affects other data valuation metrics, leading to a substantial increase in the attacker’s data value. 
For SV, BSV, and BV, data overvaluation even results in a notable decrease in the other clients' data values, benefiting the attacker at their expense. 
Moreover, the attack takes effect across all three FL algorithms, indicating that attackers can misreport the three representative types of data presentation to overvalue their local data.

\textbf{Data selection.}
Tables~\ref{tab:data_sel_decline_full} and~\ref{tab:data_sel_full} report the performance of the data overvaluation attack against various data valuation metrics in the data selection task.
Three data selection strategies are considered: Top-k Selection, Above-average Selection, and Above-median Selection, which select data blocks with the highest $k$ data values, data values above the average, and data values above the median, respectively.
For the Top-k Selection, $k$ is randomly selected for times, and the average result is reported.

As shown in Table~\ref{tab:data_sel_decline_full}, when the data overvaluation attack occurs, all data valuation metrics except Truth-Shapley suffer a significant decline in model utility. 
This is because the attacker’s data becomes overvalued, thereby distorting the ranking of the block-level data values and reducing their effectiveness in guiding data selection.
Moreover, Table~\ref{tab:data_sel_full} shows that even without data overvaluation, the three data selection strategies based on Truth-Shapley still demonstrate the best or near-best performance. 
This indicates that Truth-Shapley is not only highly robust against the data overvaluation attack but also serves as an effective metric for data selection in non-adversarial settings.

\section{Related Work}

Most of existing studies designed data valuation methods for CML based on two valuation metrics: LOO~\cite{cook1977detection} and the SV~\cite{shapley1953value}. 
Since computing these metrics requires evaluating the model utilities for a large number of data subsets, substantial research efforts have been devoted to improving the efficiency of the computation. 
Their approaches include downsampling data subsets~\cite{ghorbani2019data, jia2019efficient, jia2019towards, luo2024fast, luo2022shapley, lin2022measuring, kwon2021efficient}, designing training-free utility functions~\cite{wang2024helpful, pruthi2020estimating, koh2017understanding}, and approximating retrained models~\cite{wu2022davinz, just2023lava, nohyun2022data}.

Another line of research focuses on enhancing the robustness and reliability of data valuation. 
Xu et al.~\cite{xu2021validation} designed a new utility function that is more robust to clients’ data replication behavior. 
Lin et al.~\cite{lin2024distributionally} provided a validation-free utility function for clients without a joinly-agreed validation dataset. 
Some studies~\cite{schoch2022cs, xu2024model, xia2024p} have designed utility functions that capture a model’s predictive capability at a finer granularity than prediction accuracy. 
Tian et al.~\cite{tian2024derdava} and Xia et al.~\cite{xia2023equitable} proposed methods to accelerate recomputing data values in machine unlearning scenarios. 
Zheng et al.~\cite{zheng2023secure} and Wang et al.~\cite{wang2024privacy} proposed methods to ensure privacy and security in data valuation.
Wang et al.~\cite{wang2023data} introduced the Banzhaf value as a data valuation metric, which is robust to the randomness of model retraining. 
Kwon et al.~\cite{kwon2022beta} extended the SV to Beta Shapley, improving the detection of noisy data points.
\textit{Our work reveals a new vulnerability in data valuation, i.e., data overvaluation, and proposes Truth-Shapley to enhance robustness/reliability against data overvaluation.}

A research question that also addresses data misreporting but focuses on a significantly different scenario is truthful data acquisition~\cite{chen2020truthful}.
Their focus is on designing data pricing mechanisms to incentivize sellers to provide their true and accurate datasets, whereas our focus is on preventing sellers from inflating the value of their data during the data valuation stage.

\section{Conclusion}
This paper introduces the first data overvaluation attack in CML scenarios.
We characterized the subclass of linear and BIC data valuation metrics that can resist this attack and selected Truth-Shapley from the subclass as a promising solution to truthful data valuation. 
Through both theoretical analysis and empirical experiments, we demonstrated the vulnerability of existing linear data valuation metrics to data overvaluation and the robustness and effectiveness of Truth-Shapley. 
Research opportunities in this direction (see \textbf{Appendix~\ref{sec:discuss}}) include implementing data overvaluation in other CML scenarios, such as vertical FL, and designing defense mechanisms for those vulnerable data valuation metrics.


\bibliography{ref}
\bibliographystyle{plainnat}

\appendix
\onecolumn

\section*{Broader Impact}
For the industry, this paper identifies a new attack method that poses a trust crisis for data valuation in FL. 
For the academia, this paper opens up a new research direction: truthful data valuation for FL.

\section{Technical Details}
\label{appendix:technical_details}

\subsection{Axioms of Shapley Value}
\label{appendix:sv}
The SV is considered an ideal solution to data valuation because it has been proven be to the unique valuation metric that satisfies the following axioms~\citep{shapley1953value}. \textbf{(1)} Linearity (LIN): The server can linearly combine the data values evaluated on any two utility metrics $v_1$ and $v_2$, i.e., $\phi_{i,j}(\fullset, v_1+v_2) = \phi_{i,j}(\fullset, v_1)  + \phi_{i,j}(\fullset, v_2)$. 
\textbf{(2)} Efficiency (EFF): The sum of all data blocks' data values equals the utility improved by the grand dataset $\fullset$, i.e., $\sum_{i\in \clientset} \sum_{j\in[M_i]} \phi_{i,j}(\fullset, v) = v(\fullset)$.
\textbf{(3)} Dummy actions (DUM): If a data block $D_{i,j}$ does not have any synergy with the other blocks, its data value $\sv_{i,j}$ equals the utility $v(D_{i,j})$ of the model trained only on itself.
That is, if for all $\datasubset\subseteq \fullset \setminus \set[D_{i,j}]$, we have $v(\datasubset\cup D_{i, j}) - v(\datasubset) = v(D_{i, j})$, then $\phi_{i,j}(\fullset, v) = v(D_{i,j})$.
\textbf{(4)} Symmetry (SYM): If two data blocks have the same effect on the model utility, they should obtain the same block-level data values.
In other words, for two data blocks $D_{i_1,j_1}, D_{i_2,j_2} \in \fullset$, if for any subset of data blocks $\datasubset \subseteq \fullset \setminus \set[D_{i_1,j_1}, D_{i_2,j_2}]$, we have $v(\datasubset \cup D_{i_1,j_1}) = v(\datasubset \cup D_{i_2,j_2})$, then we have $\phi_{i_1,j_1}(\fullset, v) =\phi_{i_2, j_2}(\fullset, v)$.

\begin{theorem}[Uniqueness of SV~\citep{shapley1953value}]
\label{thm:sv_unique}
    The SV $\sv$ is the unique data valuation metric that satisfies DUM, SYM, LIN, and EFF.
\end{theorem}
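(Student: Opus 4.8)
The plan is to prove the two standard halves: that $\sv$ satisfies the four axioms (existence), and that no other metric does (uniqueness). For existence I would read the required properties off the closed form~(\ref{eq:sv}). LIN is immediate since each $\sv_{i,j}(\fullset,v)$ is a fixed linear functional of the utilities $\{v(\datasubset)\}$, so it splits additively over $v_1+v_2$. EFF follows from the usual telescoping over orderings of the blocks, in which every intermediate utility cancels and only $v(\fullset)$ survives. DUM holds because, for a dummy $D_{i,j}$, each marginal contribution equals $v(\{D_{i,j}\})$ while the weights $\weightsv(\subsetsize)$ sum to one. SYM holds because exchanging two symmetric blocks fixes every marginal contribution term and hence the weighted sum.

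For uniqueness, let $\phi$ be any metric satisfying the four axioms. By LIN and Lemma~\ref{lem:data_value_form}, both $\phi$ and $\sv$ are linear functionals of the utility vector $(v(\datasubset))_{\datasubset \subseteq \fullset}$ on the space $\mathcal{U}$ of utility functions with $v(\emptyset)=0$. Hence it suffices to show that $\phi$ and $\sv$ agree on a spanning set of $\mathcal{U}$. I would take the unanimity games: for each nonempty $T \subseteq \fullset$, define $u_T(\datasubset) = 1$ when $T \subseteq \datasubset$ and $u_T(\datasubset) = 0$ otherwise. These $2^{|\fullset|}-1$ functions span $\mathcal{U}$ via M\"obius inversion, since every $v \in \mathcal{U}$ can be written as $v = \sum_{\emptyset \neq T \subseteq \fullset} c_T\, u_T$ with coefficients $c_T = \sum_{\datasubset \subseteq T} (-1)^{|T|-|\datasubset|} v(\datasubset)$, which simultaneously establishes spanning and, by a dimension count, linear independence.

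The crux is to pin down any axiom-satisfying metric on a single $u_T$. Each block $D_{i,j} \notin T$ is a dummy for $u_T$: adding it to any $\datasubset$ never changes whether $T \subseteq \datasubset$, and $u_T(\{D_{i,j}\}) = 0$, so DUM forces $\phi_{i,j}(\fullset, u_T)=0$. Any two blocks inside $T$ are symmetric for $u_T$, since adding either one alone still leaves the remaining block of $T$ missing, so SYM forces them to share a common value $x$. Finally $u_T(\fullset)=1$, so EFF gives $|T|\,x = 1$ and hence $x = 1/|T|$. Thus the value on $u_T$ is forced, so $\phi(\fullset, u_T) = \sv(\fullset, u_T)$ for every $T$; by linearity the two metrics then agree on all of $\mathcal{U}$, giving $\phi \equiv \sv$. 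I expect the main obstacle to be the spanning/independence step---verifying through inclusion--exclusion that the unanimity games form a basis of $\mathcal{U}$---because once that decomposition is in hand, the remaining deductions from DUM, SYM, and EFF are entirely mechanical.
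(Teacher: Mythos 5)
Your proposal is correct and is essentially the classical unanimity-basis argument of Shapley (1953), which the paper cites for Theorem~\ref{thm:sv_unique} rather than reproducing a proof of its own; existence via the closed form and uniqueness by pinning down the value on each unanimity game $u_T$ through DUM, SYM, and EFF is exactly the cited route. Your use of Lemma~\ref{lem:data_value_form} to get a genuine linear-functional representation (rather than bare additivity) also correctly licenses the real-scalar coefficients $c_T$ in the M\"obius decomposition, so there is no gap.
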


\subsection{Examples of Reward Functions}
\label{appendix:reward}

In the literature, most existing works on data valuation and data marketplaces (e.g.,~\citep{agarwal2019marketplace, song2019profit, ohrimenko2019collaborative}) adopt the normalized data value $\frac{\phi_{i, j}}{\sum_{i'\in \clientset, j' \in [M_{i'}]} \phi_{i', j'}}$ to allocate rewards among data blocks. Specifically, the reward function is defined as $R_{i, j}(\phi) = \frac{\phi_{i, j}}{\sum_{i'\in \clientset, j' \in [M_{i'}]} \phi_{i', j'}} \cdot C$, where $C$ denotes the total reward to be distributed, such as the revenue generated from monetizing the grand model.
An alternative approach~\citep{nguyen2022trade} is the reward function $R_{i, j}(\phi) = |\fullset| \cdot \phi_{i,j} - \sum_{i'\in \clientset, j' \in [M_{i'}]} \phi_{i', j'}$, which ensures reward balance, i.e., $\sum_{i \in \clientset, j \in [M_i]} R_{i,j} = 0$. Under this scheme, if the data value $\phi_{i, j}$ exceeds the average value $\frac{\sum_{i'\in \clientset, j' \in [M_{i'}]} \phi_{i', j'}}{|\fullset|}$, client $i$ receives a positive reward for contributing $D_{i,j}$. Conversely, if $\phi_{i, j}$ falls below the average, client $i$ incurs a negative reward (i.e., a payment), as $D_{i,j}$ is considered of low value.
We refer to these two reward functions as the proportional reward and the balanced reward, respectively, and provide experiment results on them in Appendix~\ref{appendix:exp_reward}.

\subsection{Data Overvaluation with Incomplete Knowledge}
\label{appendix:incomplete_knowledge}
When retraining a model on a block subset $\datasubset \subset \datasubset$, the server has to inform the attacker $i$ the blocks $\stoi = o \in \datasubset$ they should retrain data representations on.  
Then, attacker $i$ can estimate the expected value of $\beta_i(\datasubset)$ by computing $\mathbb{E}_{\datasubset, \stoi = o}[\beta_i(\datasubset)] = Pr[\datasubset = s \mid \stoi = o] \cdot \beta_i(s)$, as well as the expected value of $\beta_{-i}(\datasubset)$ by computing $\mathbb{E}_{\datasubset, \stoi = o}[\beta_{-i}(\datasubset)] = Pr[\datasubset = s \mid \stoi = o] \cdot \beta_{-i}(s)$, where $Pr[\datasubset = s \mid \stoi = o] \cdot \beta_i(s)$ is the probability distribution of $\datasubset$ when $\stoi$ is given.
Then, in Algorithm~\ref{alg:data_overvaluation}, if $\mathbb{E}_{\datasubset, \stoi = o}[\beta_i(\datasubset)] > 0$ and $\mathbb{E}_{\datasubset, \stoi = o}[\beta_i(\datasubset)] \leq 0$, attacker $i$ positively augments $\stoi$ to obtain a dataset $\reportedstoi$;
if $\mathbb{E}_{\datasubset, \stoi = o}[\beta_i(\datasubset)] < 0$ and $\mathbb{E}_{\datasubset, \stoi = o}[\beta_i(\datasubset)] \geq 0$, attacker $i$ negatively augments $\stoi$ to derive $\reportedstoi$.
The augmented dataset $\reportedstoi$ is then used for model retraining.
Both the reward functions satisfy that $R_{i}(\phi)$ increases with $\phi_{i}$ while decreasing with $\phi_{-i}$.

\section{Experimental Implementation Details}
\label{appendix:exp_details}

\subsection{Computational Resources}
\label{appendix:compute_resources}
Our experiments were conducted on a workstation equipped with two NVIDIA RTX 4090 GPUs, an Intel Core i9-13900K CPU, and 192 GB RAM. 
Since computing data values requires extensive model retraining, we parallelized the model retraining of different block subsets using multiprocessing.
Since all linear data valuation metrics are functions of model utilities $\{v(\datasubset)\}_{\datasubset \subseteq \fullset}$, we can first compute and record the utilities of all block subsets, and then compute all data valuation metrics together to reduce the computation time.

\subsection{CML Settings}
\label{appendix:cml_settings}
Tables~\ref{tab:common_settings},~\ref{tab:specific_settings},~\ref{tab:data_blocks} and~\ref{tab:model} describe our CML setting for different datasets. 
For each dataset, we randomly sample training, validation, and distillation subsets from the raw datasets. 
The training set is further divided into 9 to 11 data blocks, which are then allocated to three clients. 
The data blocks are partitioned based on one of the following strategies: grouping by the value of a specific attribute, grouping by labels, or clustering via Principal Component Analysis (PCA)~\citep{pearson1901liii}. 
The specific partitioning details are as follows.
\begin{itemize}[leftmargin=*]
    \item \textbf{Attribute-based division:} For the Bank and Rent datasets, we partition the training data based on the attribute "job" and the attribute "state", respectively, so that different clients possess bank marketing data associated with different jobs and housing rental data from different regions. Each data block may contain multiple states for the Rent dataset.
    \item \textbf{PCA clustering:} For the MNIST dataset, we apply PCA clustering to partition the training data into nine groups, with each group forming a data block.
    \item \textbf{Label-based division:} For the FMNIST, CIFAR10, AGNews, and Yahoo datasets, we partition the data into data blocks based on their labels. Each data block may contain multiple classes, but the samples are non-overlapping across blocks.
\end{itemize}

\begin{table}[h]
\caption{Common settings used for all datasets.}
\label{tab:common_settings}
\resizebox{\textwidth}{!}
{
\begin{tabular}{l|ccccccc}
\hline
 & \textbf{Train size} & \textbf{Validation size} & \textbf{Distillation size} & \textbf{FL rounds $T$} & \textbf{Batch size} & \textbf{Learning rate} & \textbf{Optimizer} \\ \hline
\textbf{All datasets}     & 8000                & 2000                     & 2000                       & 3                      & 64                  & 0.001                 & Adam               \\ \hline
\end{tabular}
}
\end{table}

\begin{table}[h]
\caption{Dataset-specific settings used in our experiments.}
\label{tab:specific_settings}
\resizebox{\textwidth}{!}
{
\begin{tabular}{l|cccccc}
\hline
\textbf{Dataset} & \textbf{Task}                                                    & \textbf{Data blocks} & \textbf{Block division}    & \textbf{Model architecture}                                            & \textbf{Local epochs} & \textbf{Loss function}                                  \\ \hline
\textbf{Bank}    & \begin{tabular}[c]{@{}c@{}}Tabular\\ classification\end{tabular} & 11                   & Attribute-based ("job")  & MLP                                                                    & 3                     & \begin{tabular}[c]{@{}c@{}}Cross\\ Entropy\end{tabular} \\
\textbf{Rent}    & \begin{tabular}[c]{@{}c@{}}Tabular\\ regression\end{tabular}     & 9                    & Attribute-based ("state") & MLP                                                                    & 10                    & MSE                                                     \\
\textbf{MNIST}   & \begin{tabular}[c]{@{}c@{}}Image\\ classification\end{tabular}   & 9                    & PCA clustering             & CNN                                                                    & 5                     & \begin{tabular}[c]{@{}c@{}}Cross\\ Entropy\end{tabular} \\
\textbf{FMNIST}  & \begin{tabular}[c]{@{}c@{}}Image\\ classification\end{tabular}   & 9                    & Label-based            & CNN                                                                    & 10                    & \begin{tabular}[c]{@{}c@{}}Cross\\ Entropy\end{tabular} \\
\textbf{CIFAR10} & \begin{tabular}[c]{@{}c@{}}Image\\ classification\end{tabular}   & 9                    & Label-based            & CNN                                                                    & 10                    & \begin{tabular}[c]{@{}c@{}}Cross\\ Entropy\end{tabular} \\
\textbf{AGNews}  & \begin{tabular}[c]{@{}c@{}}Text\\ classification\end{tabular}    & 9                    & Label-based            & \begin{tabular}[c]{@{}c@{}}Transformer-based\\ classifier\end{tabular} & 5                     & \begin{tabular}[c]{@{}c@{}}Cross\\ Entropy\end{tabular} \\
\textbf{Yahoo}   & \begin{tabular}[c]{@{}c@{}}Text\\ classification\end{tabular}    & 10                   & Label-based            & \begin{tabular}[c]{@{}c@{}}Transformer-based\\ classifier\end{tabular} & 10                    & \begin{tabular}[c]{@{}c@{}}Cross\\ Entropy\end{tabular} \\ \hline
\end{tabular}
}
\end{table}

\begin{table}[h]
\caption{Distribution of data blocks across three clients.}
\label{tab:data_blocks}
\resizebox{\textwidth}{!}
{
\begin{tabular}{|c|clll|cll|clll|}
\hline
\multirow{2}{*}{\textbf{\begin{tabular}[c]{@{}c@{}}Dataset\\ (Division Method)\end{tabular}}} & \multicolumn{4}{c|}{\multirow{2}{*}{\textbf{Blocks of Client 1}}}                                                                                                                                                         & \multicolumn{3}{c|}{\multirow{2}{*}{\textbf{Blocks of Client 2}}}                                                                                                                                                                     & \multicolumn{4}{c|}{\multirow{2}{*}{\textbf{Blocks of Client 3}}}                                                                                                                        \\
                                                                                              & \multicolumn{4}{c|}{}                                                                                                                                                                                                     & \multicolumn{3}{c|}{}                                                                                                                                                                                                                 & \multicolumn{4}{c|}{}                                                                                                                                                                    \\ \hline
\textbf{\begin{tabular}[c]{@{}c@{}}Bank\\ (Attribute-based)\end{tabular}}                     & \multicolumn{4}{c|}{\begin{tabular}[c]{@{}c@{}}Block({[}"management"{]}),\\ Block({[}"entrepreneur"{]}),\\ Block({[}"admin."{]}),\\ Block({[}"self-employed"{]})\end{tabular}}                                            & \multicolumn{3}{c|}{\begin{tabular}[c]{@{}c@{}}Block({[}"technician"{]}),\\ Block({[}"blue-collar"{]}),\\ Block({[}"services"{]})\end{tabular}}                                                                                       & \multicolumn{4}{c|}{\begin{tabular}[c]{@{}c@{}}Block({[}"retired"{]}),\\ Block({[}"unemployed"{]}),\\ Block({[}"housemaid"{]}),\\ Block({[}"student"{]})\end{tabular}}                   \\ \hline
\textbf{\begin{tabular}[c]{@{}c@{}}Rent\\ (Attribute-based)\end{tabular}}                     & \multicolumn{4}{c|}{\begin{tabular}[c]{@{}c@{}}Block({[}"ME", "NH", "VT",\\ "MA", "RI", "CT"{]}),\\ Block({[}"NY", "NJ", "PA", \\ "DE", "MD", "DC"{]}),\\ Block({[}"VA", "WV", "NC",\\ "SC", "GA", "FL"{]})\end{tabular}} & \multicolumn{3}{c|}{\begin{tabular}[c]{@{}c@{}}Block({[}"OH", "IN", "IL",\\ "MI", "WI"{]}),\\ Block({[}"MN", "IA", "MO",\\ "ND", "SD", "NE", "KS"{]}),\\ Block({[}"KY", "TN",\\ "AL", "MS", "AR",\\ "LA", "OK", "TX"{]}\end{tabular}} & \multicolumn{4}{c|}{\begin{tabular}[c]{@{}c@{}}Block({[}"MT", "WY", "ID"{]}),\\ Block({[}"CO", "UT",\\ "NV", "NM", "AZ"{]}),\\ Block({[}"WA", "OR",\\ "CA", "AK", "HI"{]})\end{tabular}} \\ \hline
\textbf{\begin{tabular}[c]{@{}c@{}}MNIST\\ (PCA Clustering)\end{tabular}}                     & \multicolumn{4}{c|}{3 clusters}                                                                                                                                                                                           & \multicolumn{3}{c|}{3 clusters}                                                                                                                                                                                                       & \multicolumn{4}{c|}{3 clusters}                                                                                                                                                          \\ \hline
\textbf{\begin{tabular}[c]{@{}c@{}}FMNIST\\ (Label-based)\end{tabular}}                       & \multicolumn{4}{c|}{\begin{tabular}[c]{@{}c@{}}Block({[}0, 2, 3, 4, 6{]}),\\ Block({[}1{]}),\\ Block({[}5, 9{]})\end{tabular}}                                                                                            & \multicolumn{3}{c|}{\begin{tabular}[c]{@{}c@{}}Block({[}0, 3, 4, 6{]}),\\ Block({[}1{]}),\\ Block({[}5, 8, 9{]})\end{tabular}}                                                                                                        & \multicolumn{4}{c|}{\begin{tabular}[c]{@{}c@{}}Block({[}0, 2{]}),\\ Block({[}1{]}),\\ Block({[}7, 8{]})\end{tabular}}                                                                    \\ \hline
\textbf{\begin{tabular}[c]{@{}c@{}}CIFAR10\\ (Label-based)\end{tabular}}                      & \multicolumn{4}{c|}{\begin{tabular}[c]{@{}c@{}}Block({[}0, 8{]}),\\ Block({[}1, 9{]}),\\ Block({[}2, 3, 4, 5, 6{]})\end{tabular}}                                                                                         & \multicolumn{3}{c|}{\begin{tabular}[c]{@{}c@{}}Block({[}0, 8{]}),\\ Block({[}1, 9{]}),\\ Block({[}2, 3, 4, 5, 7{]})\end{tabular}}                                                                                                     & \multicolumn{4}{c|}{\begin{tabular}[c]{@{}c@{}}Block({[}0, 8{]}),\\ Block({[}1, 9{]}),\\ {[}2, 3, 5, 6, 7{]}\end{tabular}}                                                               \\ \hline
\textbf{\begin{tabular}[c]{@{}c@{}}AGNews\\ (Label-based)\end{tabular}}                       & \multicolumn{4}{c|}{\begin{tabular}[c]{@{}c@{}}Block({[}0{]}),\\ Block({[}2{]}),\\ Block({[}3{]})\end{tabular}}                                                                                                           & \multicolumn{3}{c|}{\begin{tabular}[c]{@{}c@{}}Block({[}0{]}),\\ Block({[}1{]}),\\ Block({[}3{]})\end{tabular}}                                                                                                                       & \multicolumn{4}{c|}{\begin{tabular}[c]{@{}c@{}}Block({[}0{]}),\\ Block({[}2{]}),\\ Block({[}3{]})\end{tabular}}                                                                          \\ \hline
\textbf{\begin{tabular}[c]{@{}c@{}}Yahoo\\ (Label-based)\end{tabular}}                        & \multicolumn{4}{c|}{\begin{tabular}[c]{@{}c@{}}Block({[}0{]}),\\ Block({[}3{]}),\\ Block({[}9{]})\end{tabular}}                                                                                                           & \multicolumn{3}{c|}{\begin{tabular}[c]{@{}c@{}}Block({[}1{]}),\\ Block({[}4{]}),\\ Block({[}6{]})\end{tabular}}                                                                                                                       & \multicolumn{4}{c|}{\begin{tabular}[c]{@{}c@{}}Block({[}2{]}),\\ Block({[}5{]}),\\ Block({[}7{]}),\\ Block({[}8{]})\end{tabular}}                                                        \\ \hline
\end{tabular}
}
\end{table}

\begin{table}[h]
\caption{Model architectures used in our experiments. The implementations are based on PyTorch 2.5.1. We adopt the Sentence Transformer named "all-MinLM-L6-v2" from the sentence-transformers Python package~\cite{reimers-2019-sentence-bert} to implement the transformer-based classifier. Note that in our implementations, the clients fix the Sentence Transformer and only update the Linear layer through FL.}
\label{tab:model}
\resizebox{\textwidth}{!}
{
\begin{tabular}{l|ccc}
\hline
   & \textbf{MLP}                                                                                                                      & \textbf{CNN}                                                                                                                                                                                                                                                             & \textbf{\begin{tabular}[c]{@{}c@{}}Transformer-based\\ classifier\end{tabular}}                        \\ \hline
\textbf{Architecture} & \begin{tabular}[c]{@{}c@{}}Linear(input\_size, 32),\\ ReLU(),\\ Linear(32, 32),\\ ReLU(),\\ Linear(32, output\_size)\end{tabular} & \begin{tabular}[c]{@{}c@{}}Conv2d(in\_channels, 32, kernel\_size=3, padding=1),\\ ReLU(),\\ MaxPool2d(2, 2)\\ Conv2d(32, 64, kernel\_size=3, padding=1),\\ ReLU(),\\ MaxPool2d(2, 2)\\ Flatten(),\\ Linear(4$\times$input\_size/in\_channels, output\_size)\end{tabular} & \begin{tabular}[c]{@{}c@{}}SentenceTransformer(),\\ Linear(embedding\_size, output\_size)\end{tabular} \\ \hline
\end{tabular}
}
\end{table}

\subsection{FL ALgorithms and Implementations of Data Overvaluation}
\label{appendix:overvaluation}
In our experiments, we consider three representative FL algorithms—FedAvg, SplitFed, and FedMD—in which data overvaluation can be achieved by misreporting three key types of data representations in FL: local models, embeddings of training data, and predicted logits over distillation data, respectively.
We introduce the three algorithms and the corresponding implementations of the data overvaluation attack as follows.
\begin{itemize}[leftmargin=*]
    \item \textbf{FedAvg (see Algorithm~\ref{alg:positive_aug_fedavg_fedmd}):} In each training round of FedAvg, the clients train the global model using their local data to derive their new local models.
    Then, the server collects the new local models and aggregates them into a new global model. 
    Therefore, when performing the data overvaluation attack, for each block subset $\datasubset \subset \fullset$, attacker $i$ can use the entire dataset $D_i$ for positive augmentation, i.e., $\reportedstoi = D_i$, and flips the features of the real data $\stoi$ for negative augmentation.
    Since the data representations are misreported in the form of local models, it is hard for the server to observe the augmentation.
    \item \textbf{SplitFed (see Algorithm~\ref{alg:positive_aug_splitfed}):} In SplitFed, a global model is split into two parts: the embedding sub-model resides on the client side, while the classification sub-model stays on the server side. 
    In each training round, each client generates embeddings of their local data using the local embedding sub-model and uploads them to the server. 
    The server then inputs these embeddings into the classification sub-model for forward propagation and backpropagates the gradients to the client side to update both the embedding and classification sub-models. 
    At the end of each round, all clients' local models are aggregated into a new global model.
    For data overvaluation, the attacker $i$ still can perform negative augmentation by flipping the features of the real data $\stoi$. 
    However, in the case of positive augmentation, using the entire dataset $D_i$, which contains more samples than the real data $\stoi$ in the size of samples, leads to an inconsistent number of embeddings uploaded to the server, thereby facilitating detection of the attack. 
    To address this issue, attacker $i$ can randomly redivide their dataset $D_i$ into $M_i$ data blocks $\widetilde{D}_i = \{\widetilde{D}_{i, 1}, \dots, \widetilde{D}_{i, M_i}\}$ such that the new block $|\widetilde{D}_{i, j}|$ contains the same number of samples as the original block $|D_{i, j}|$ for all $j\in[M_i]$.
    Then, attacker $i$ can use $\reportedstoi = \widetilde{D}^{\datasubset}_i$ for positive augmentation, where $\widetilde{D}^{\datasubset}_i$ denotes the new data blocks corresponding to $\stoi$.
    Since the randomly redivided data $\widetilde{D}^{\datasubset}_i$ approximates $D_i$ in distribution, it typically yields better model utility.
    \item \textbf{FedMD (see Algorithm~\ref{alg:positive_aug_fedavg_fedmd}):} In each training round of FedMD, each client uses its local model to perform inference on the public distillation dataset. The server then collects the predicted logits, averages them to construct a consensus dataset, and broadcasts it to the clients for updating their local models.
    Subsequently, each client updates its local model using both the consensus dataset and its own local dataset.
    The implementation of positive and negative augmentation in FedMD is identical to that in FedAvg.
    Notably, when performing data overvaluation in FedMD, attacker $i$ trains the local model on the augmented dataset to improve the accuracy of the predicted logits.
    As a result, the use of the augmented dataset remains difficult for the server to detect.
\end{itemize}

\begin{algorithm}[t]
    \caption{Data Augmentation for Data Overvaluation in FedAvg and FedMD}
    \label{alg:positive_aug_fedavg_fedmd}
   \KwIn{block subset $\datasubset$}
   \If{$\beta_i(\datasubset) > 0$ and $\beta_{-i}(\datasubset) \leq 0$ (positive augmentation)}{
        Attacker $i$: Use the whole local dataset as the positively augmented dataset, i.e, $\reportedstoi = D_i$\; 
   }
    \ElseIf{$\beta_i(\datasubset) < 0$ and $\beta_{-i}(\datasubset) \geq 0$ (negative augmentation)}{
        Attacker $i$: Flip the features of $\stoi$ to generate the negatively augmented dataset $\reportedstoi$\;
    }
\end{algorithm}

\begin{algorithm}[t]
    \caption{Data Augmentation for Data Overvaluation in SplitFed}
    \label{alg:positive_aug_splitfed}
   \KwIn{block subset $\datasubset$}
   \If{$\beta_i(\datasubset) > 0$ and $\beta_{-i}(\datasubset) \leq 0$ (positive augmentation)}{
        Attacker $i$: Randomly redivide the local dataset $D_i$ into $M_i$ data blocks $\widetilde{D}_i = \{\widetilde{D}_{i, 1}, \dots, \widetilde{D}_{i, M_i}\}$ such that $|\widetilde{D}_{i, j}|=|D_{i, j}|$ for all $j\in[M_i]$\;
        Attacker $i$: Use $\widetilde{D}^{\datasubset}_i$ as the positively augmented dataset, i.e., $\reportedstoi = \widetilde{D}^{\datasubset}_i$\;
   }
    \ElseIf{$\beta_i(\datasubset) < 0$ and $\beta_{-i}(\datasubset) \geq 0$ (negative augmentation)}{
        Attacker $i$: Flip the features of $\stoi$ to generate the negatively augmented dataset $\reportedstoi$\;
    }
\end{algorithm}

\section{Limitations and Opportunities}
\label{sec:discuss}

\textbf{Data overvaluation in vertical federated learning.}
The data overvaluation attack proposed in this paper, along with Truth-Shapley, is also applicable to vertical federated learning (VFL), where different data blocks possess distinct attributes or feature spaces in VFL. 
In the VFL setting, data overvaluation can also be achieved through the negative augmentation in Algorithm~\ref{alg:data_overvaluation}. 
However, since the input space of a subset model $\Theta_{\datasubset}$ is defined as the feature space of the given block subset $\datasubset$, it is difficult to enhance the model utility $v(\datasubset)$ by positively augmenting the data representations $\realrepresentationsi$ with other blocks. 
This poses a challenge for implementing the positive augmentation in Algorithm~\ref{alg:data_overvaluation}. 
To address this issue, future work may explore the implementation of data overvaluation against VFL.

\textbf{Computational efficiency.}
Similar to computing the SV, computing Truth-Shapley is time-consuming, as it requires $O(2^{N+\max_i M_i})$ times of model retraining. 
Since Truth-Shapley utilizes the SV-style approach to define both its client-level data value and block-level data value, existing techniques for accelerating SV computation can be applied to computing these two levels of data value.
Also, designing more efficient acceleration methods specifically for Truth-Shapley is a promising direction for future research.

\textbf{Poisoning attacker.}
Assumption 4.2 is the core assumption of this paper, which implicitly assumes that the attacker's dataset $D_i$ is not a poisoning dataset. 
This assumption is based on the premise that client $i$ aims to maximize their reward and thus will not poison the grand model $\mathcal{A}(\fullset)$, as doing so would reduce the reward derived from monetizing/utilizing $\mathcal{A}(\fullset)$.
However, in certain scenarios, client $i$ may pursue dual objectives: both attacking the grand model and conducting data overvaluation. Addressing this dual-objective scenario requires further exploration.
Note that in privacy-preserving CML scenarios~\cite{zheng2022fl, zheng2023secure}, if clients use differential privacy techniques~\cite{dwork2006calibrating, ge2024privbench} to add noise to their local datasets, it does not necessarily mean they are poisoning attackers, as their goal remains to maximize their reward while preserving privacy.


\section{Proofs}
\label{appendix:proofs}

\begin{proof}[Proof of Lemma \ref{lem:data_value_form}]
    For every data subset $\datasubset \subseteq \fullset$, we define the basis game $\delta_\datasubset(T) =$ by 
    \begin{align*}
        \delta_\datasubset(T) =
        \begin{cases}
            1, & T=S, \\
            0, & T\neq S.
   \end{cases}
\end{align*}
The set $\{\delta_\datasubset \mid \datasubset \subseteq \fullset\}$ is a natural basis for $\mathcal{G}(\fullset)$.
Then, consider a linear data valuation metric $\phi(\fullset, v)$.
For any two utility functions $v_1$ and $v_2$ and scalars $w_1, w_2 \in \mathbb{R}$, we have:
\begin{align*}
    \phi_{i, j}(\fullset, w_1v_1+w_2v_2) = w_1\phi_{i,j}(\fullset, v_1) + w_2\phi_{i,j}(\fullset, v_2).
\end{align*}
Therefore, $\phi_{i,j}(\fullset, \cdot): \mathcal{G}(\fullset) \rightarrow \mathbb{R}$ is a linear functional on the vector space $\mathcal{G}(\fullset)$.
Then, because any utility function $v\in \mathcal{G}(\fullset)$ can be written as $v=\sum_{\datasubset \subseteq \fullset} v(\datasubset)\cdot \delta_\datasubset$, we have:
\begin{align*}
    \phi_{i,j}(\fullset, v) = \phi_{i,j}(\fullset, \sum_{\datasubset \subseteq \fullset} v(\datasubset)\cdot \delta_\datasubset) = \sum_{\datasubset \subseteq \fullset} v(\datasubset)\cdot \phi_{i,j}(\fullset, \delta_\datasubset).
\end{align*}
Defining $\beta_{i,j}(\datasubset) \coloneqq \phi_{i,j}(\fullset, \delta_\datasubset)$ and $\beta_{i}(\datasubset) \coloneqq \sum_{j\in[M_i]}\beta_{i,j}(\datasubset)$, we conclude the proof.

\end{proof}

\begin{proof}[Proof of Lemma \ref{lem:attack_success}]
According to Definition \ref{def:overvaluation}, under a data overvaluation attack, we have
\begin{align*}
    & \empiricaldatavalue_{i}(\fullset, v) = \beta_i(\fullset) \cdot v(\fullset) + \sum_{\datasubset\subset \fullset} \beta_i(\datasubset) \cdot v(\reportedsubset) \\
    = & \beta_i(\fullset) \cdot v(\fullset) + \sum_{\datasubset\subset \fullset, \beta_i(\mathcal{S}) > 0, \beta_{-i}(\mathcal{S}) \leq 0} \beta_i(\datasubset) \cdot v(\reportedsubset) + \sum_{\datasubset\subset \fullset, \beta_i(\mathcal{S}) \leq 0, \beta_{-i}(\mathcal{S}) > 0} \beta_i(\datasubset) \cdot v(\reportedsubset) \\
   & + \sum_{\datasubset\subset \fullset, \beta_i(\mathcal{S}) > 0, \beta_{-i}(\mathcal{S}) > 0} \beta_i(\datasubset) \cdot v(\reportedsubset) + \sum_{\datasubset\subset \fullset, \beta_i(\mathcal{S}) \leq 0, \beta_{-i}(\mathcal{S}) \leq 0} \beta_i(\datasubset) \cdot v(\reportedsubset)\\
   \geq & \beta_i(\fullset) \cdot v(\fullset) + \sum_{\datasubset\subset \fullset, \beta_i(\mathcal{S}) > 0, \beta_{-i}(\mathcal{S}) \leq 0} \beta_i(\datasubset) \cdot v(\stoi\cup \reportedstominusi) + \sum_{\datasubset\subset \fullset, \beta_i(\mathcal{S}) \leq 0, \beta_{-i}(\mathcal{S}) > 0} \beta_i(\datasubset) \cdot v(\stoi\cup \reportedstominusi) \\
   & + \sum_{\datasubset\subset \fullset, \beta_i(\mathcal{S}) > 0, \beta_{-i}(\mathcal{S}) > 0} \beta_i(\datasubset) \cdot v(\stoi\cup \reportedstominusi) + \sum_{\datasubset\subset \fullset, \beta_i(\mathcal{S}) \leq 0, \beta_{-i}(\mathcal{S}) \leq 0} \beta_i(\datasubset) \cdot v(\stoi\cup \reportedstominusi)\\
   = & \beta_i(\fullset) \cdot v(\fullset) + \sum_{\datasubset\subset \fullset} \beta_i(\datasubset) \cdot v(\stoi \cup \reportedstominusi) = \empiricaldatavalue_{i}(\fullset, v \mid \forall \datasubset \subset \fullset, \reportedstoi =  \stoi).
\end{align*}
Similarly, under a data overvaluation attack, we have
\begin{align*}
    & \empiricaldatavalue_{-i}(\fullset, v) = \beta_{-i}(\fullset) \cdot v(\fullset) + \sum_{\datasubset\subset \fullset} \beta_{-i}(\datasubset) \cdot v(\reportedsubset) \\
    = & \beta_{-i}(\fullset) \cdot v(\fullset) + \sum_{\datasubset\subset \fullset, \beta_i(\mathcal{S}) > 0, \beta_{-i}(\mathcal{S}) \leq 0} \beta_{-i}(\datasubset) \cdot v(\reportedsubset) + \sum_{\datasubset\subset \fullset, \beta_i(\mathcal{S}) \leq 0, \beta_{-i}(\mathcal{S}) > 0} \beta_{-i}(\datasubset) \cdot v(\reportedsubset) \\
   & + \sum_{\datasubset\subset \fullset, \beta_{i}(\mathcal{S}) > 0, \beta_{-i}(\mathcal{S}) > 0} \beta_{-i}(\datasubset) \cdot v(\reportedsubset) + \sum_{\datasubset\subset \fullset, \beta_i(\mathcal{S}) \leq 0, \beta_{-i}(\mathcal{S}) \leq 0} \beta_{-i}(\datasubset) \cdot v(\reportedsubset)\\
   \leq & \beta_{-i}(\fullset) \cdot v(\fullset) + \sum_{\datasubset\subset \fullset, \beta_i(\mathcal{S}) > 0, \beta_{-i}(\mathcal{S}) \leq 0} \beta_{-i}(\datasubset) \cdot v(\stoi\cup \reportedstominusi) + \sum_{\datasubset\subset \fullset, \beta_i(\mathcal{S}) \leq 0, \beta_{-i}(\mathcal{S}) > 0} \beta_{-i}(\datasubset) \cdot v(\stoi\cup \reportedstominusi) \\
   & + \sum_{\datasubset\subset \fullset, \beta_i(\mathcal{S}) > 0, \beta_{-i}(\mathcal{S}) > 0} \beta_{-i}(\datasubset) \cdot v(\stoi\cup \reportedstominusi) + \sum_{\datasubset\subset \fullset, \beta_i(\mathcal{S}) \leq 0, \beta_{-i}(\mathcal{S}) \leq 0} \beta_{-i}(\datasubset) \cdot v(\stoi\cup \reportedstominusi)\\
   = & \beta_{-i}(\fullset) \cdot v(\fullset) + \sum_{\datasubset\subset \fullset} \beta_{-i}(\datasubset) \cdot v(\stoi \cup \reportedstominusi) = \empiricaldatavalue_{-i}(\fullset, v \mid \forall \datasubset \subset \fullset, \reportedstoi =  \stoi).
\end{align*}
\end{proof}

\begin{proof}[Proof of Theorem \ref{thm:characterization1}]
    $\Rightarrow$: Under Assumption \ref{ass:optimal_dataset}, for any game $(\fullset, v)$, for any client $i$, and for any reported data subsets $\{\reportedstoi\mid \datasubset \subset \fullset, i \in \mathbb{N}(\datasubset) \}$, we have
    \begin{align*}
        &\mathbb{E}[\empiricaldatavalue_{i}(\fullset, v)] = \beta_i(\fullset) \cdot v(\fullset) +\sum_{\datasubset\subset \fullset} \beta_i(\datasubset) \cdot\! \mathbb{E}[v(\reportedstoi \cup \reportedstominusi)]\\
    = & \beta_i(\fullset) \cdot v(\fullset) +\sum_{\datasubset\subset \fullset, \stoi = D_i} \beta_i(\datasubset) \cdot\! \mathbb{E}[v(\reportedstoi \cup \reportedstominusi)] +\sum_{\datasubset\subset \fullset, \stoi \neq D_i} \beta_i(\datasubset) \cdot\! \mathbb{E}[v(\reportedstoi \cup \reportedstominusi)] \\
    = & \beta_i(\fullset) \cdot v(\fullset) +\sum_{\datasubset\subset \fullset, \stoi = D_i} \beta_i(\datasubset) \cdot\! \mathbb{E}[v(\reportedstoi \cup \reportedstominusi)] \\
    \leq &\beta_i(\fullset) \cdot v(\fullset) +\sum_{\datasubset\subset \fullset, \stoi = D_i} \beta_i(\datasubset) \cdot\! \mathbb{E}[v(\stoi \cup \reportedstominusi)] \\
    = &\beta_i(\fullset) \cdot v(\fullset) +\sum_{\datasubset\subset \fullset, \stoi = D_i} \beta_i(\datasubset) \cdot\! \mathbb{E}[v(\stoi \cup \reportedstominusi)] +\sum_{\datasubset\subset \fullset, \stoi \neq D_i} \beta_i(\datasubset) \cdot\! \mathbb{E}[v(\stoi \cup \reportedstominusi)]\\
    = & \mathbb{E}[\empiricaldatavalue_{i}(\fullset, v \mid \forall \datasubset \subset \fullset, \reportedstoi = \stoi)].
    \end{align*}
Similarly, under Assumption \ref{ass:optimal_dataset}, for any game $(\fullset, v)$, for any client $i$, and for any reported data subsets $\{\reportedstoi\mid \datasubset \subset \fullset, i \in \mathbb{N}(\datasubset) \}$, we have
    \begin{align*}
        &\mathbb{E}[\empiricaldatavalue_{-i}(\fullset, v)] = \beta_{-i}(\fullset) \cdot v(\fullset) +\sum_{\datasubset\subset \fullset} \beta_{-i}(\datasubset) \cdot\! \mathbb{E}[v(\reportedstoi \cup \reportedstominusi)]\\
    = & \beta_{-i}(\fullset) \cdot v(\fullset) +\sum_{\datasubset\subset \fullset, \stoi = D_i} \beta_{-i}(\datasubset) \cdot\! \mathbb{E}[v(\reportedstoi \cup \reportedstominusi)] +\sum_{\datasubset\subset \fullset, \stoi \neq D_i} \beta_{-i}(\datasubset) \cdot\! \mathbb{E}[v(\reportedstoi \cup \reportedstominusi)] \\
    = & \beta_{-i}(\fullset) \cdot v(\fullset) +\sum_{\datasubset\subset \fullset, \stoi = D_i} \beta_{-i}(\datasubset) \cdot\! \mathbb{E}[v(\reportedstoi \cup \reportedstominusi)] \\
    \geq &\beta_{-i}(\fullset) \cdot v(\fullset) +\sum_{\datasubset\subset \fullset, \stoi = D_i} \beta_{-i}(\datasubset) \cdot\! \mathbb{E}[v(\stoi \cup \reportedstominusi)] \\
    = &\beta_{-i}(\fullset) \cdot v(\fullset) +\sum_{\datasubset\subset \fullset, \stoi = D_i} \beta_{-i}(\datasubset) \cdot\! \mathbb{E}[v(\stoi \cup \reportedstominusi)] +\sum_{\datasubset\subset \fullset, \stoi \neq D_i} \beta_{-i}(\datasubset) \cdot\! \mathbb{E}[v(\stoi \cup \reportedstominusi)]\\
    = & \mathbb{E}[\empiricaldatavalue_{-i}(\fullset, v \mid \forall \datasubset \subset \fullset, \reportedstoi = \stoi)].
    \end{align*}
$\Leftarrow$: If $\exists \datasubset \subset \fullset$ such that if $\stoi = D_i$, we have $\beta_i(\datasubset) < 0$, or such that if $\stoi \neq D_i$, we have $\beta_i(\datasubset) \neq 0$, we can always construct a utility function $v$ such that $\mathbb{E}[\empiricaldatavalue_{i}(\fullset, v)] > \mathbb{E}[\empiricaldatavalue_{i}(\fullset, v \mid \forall \datasubset \subset \fullset, \reportedstoi = \stoi)]$. 
Also, if $\exists \datasubset \subset \fullset$ such that if $\stoi = D_i$, we have $\beta_{-i}(\datasubset) > 0$, or such that if $\stoi \neq D_i$, we have $\beta_{-i}(\datasubset) \neq 0$, we can always construct a utility function $v$ such that $\mathbb{E}[\empiricaldatavalue_{-i}(\fullset, v)] < \mathbb{E}[\empiricaldatavalue_{-i}(\fullset, v \mid \forall \datasubset \subset \fullset, \reportedstoi = \stoi)]$.
\end{proof}

\begin{proof}[Proof of Theorem \ref{thm:characterization2}]
    $\Rightarrow$: Under Assumption \ref{ass:optimal_dataset}, for any game $(\fullset, v)$, for any client $i$, and for any reported data subsets $\{\reportedstoi\mid \datasubset \subset \fullset, i \in \mathbb{N}(\datasubset) \}$, we have
    \begin{align*}
        &\mathbb{E}[\empiricaldatavalue_{i}(\fullset, v)] = \beta_i(\fullset) \cdot v(\fullset) +\sum_{\mathcal{C}\subset \mathbb{N}} \beta_i(D_{\mathcal{C}}) \cdot\! \mathbb{E}[v(\widehat{D}_{\mathcal{C}})]\\
        = & \beta_i(\fullset) \cdot v(\fullset) +\sum_{\mathcal{C}\subset \mathbb{N}, i\in \mathcal{C}} \beta_i(D_{\mathcal{C}}) \cdot\! \mathbb{E}[v(\widehat{D}_{\mathcal{C}})] +\sum_{\mathcal{C}\subset \mathbb{N}, i\notin \mathcal{C}} \beta_i(D_{\mathcal{C}}) \cdot\! \mathbb{E}[v(\widehat{D}_{\mathcal{C}})] \\
        \leq & \beta_i(\fullset) \cdot v(\fullset) +\sum_{\mathcal{C}\subset \mathbb{N}, i\in \mathcal{C}} \beta_i(D_{\mathcal{C}}) \cdot\! \mathbb{E}[v(D_i \cup (\cup_{i'\in \mathcal{C} \setminus \{i\}} \widehat{D}_{i'}))] +\sum_{\mathcal{C}\subset \mathbb{N}, i\notin \mathcal{C}} \beta_i(D_{\mathcal{C}}) \cdot\! \mathbb{E}[v(\widehat{D}_{\mathcal{C}})]\\
        = & \mathbb{E}[\empiricaldatavalue_{i}(\fullset, v \!\mid\! \forall \datasubset \!\subset \!\fullset,\! \reportedstoi\! =  \!\stoi)]
    \end{align*}
Because $\phi$ is efficient, we have $\beta_{-i}(D_{\mathcal{C}}) = - \beta_{i}(D_{\mathcal{C}})$ for all $\mathcal{C} \subset \mathbb{N}$.
Therefore, under Assumption \ref{ass:optimal_dataset}, for any game $(\fullset, v)$, for any client $i$, and for any reported data subsets $\{\reportedstoi\mid \datasubset \subset \fullset, i \in \mathbb{N}(\datasubset) \}$, we have
    \begin{align*}
        &\mathbb{E}[\empiricaldatavalue_{-i}(\fullset, v)] = \beta_{-i}(\fullset) \cdot v(\fullset) +\sum_{\mathcal{C}\subset \mathbb{N}} \beta_{-i}(D_{\mathcal{C}}) \cdot\! \mathbb{E}[v(\widehat{D}_{\mathcal{C}})]\\
        = & -\beta_i(\fullset) \cdot v(\fullset) -\sum_{\mathcal{C}\subset \mathbb{N}, i\in \mathcal{C}} \beta_i(D_{\mathcal{C}}) \cdot\! \mathbb{E}[v(\widehat{D}_{\mathcal{C}})] -\sum_{\mathcal{C}\subset \mathbb{N}, i\notin \mathcal{C}} \beta_i(D_{\mathcal{C}}) \cdot\! \mathbb{E}[v(\widehat{D}_{\mathcal{C}})] \\
        \geq & -\beta_i(\fullset) \cdot v(\fullset) -\sum_{\mathcal{C}\subset \mathbb{N}, i\in \mathcal{C}} \beta_i(D_{\mathcal{C}}) \cdot\! \mathbb{E}[v(D_i \cup (\cup_{i'\in \mathcal{C} \setminus \{i\}} \widehat{D}_{i'}))] -\sum_{\mathcal{C}\subset \mathbb{N}, i\notin \mathcal{C}} \beta_i(D_{\mathcal{C}}) \cdot\! \mathbb{E}[v(\widehat{D}_{\mathcal{C}})]\\
        = & \beta_{-i}(\fullset) \cdot v(\fullset) +\sum_{\mathcal{C}\subset \mathbb{N}, i\in \mathcal{C}} \beta_{-i}(D_{\mathcal{C}}) \cdot\! \mathbb{E}[v(D_i \cup (\cup_{i'\in \mathcal{C} \setminus \{i\}} \widehat{D}_{i'}))] +\sum_{\mathcal{C}\subset \mathbb{N}, i\notin \mathcal{C}} \beta_{-i}(D_{\mathcal{C}}) \cdot\! \mathbb{E}[v(\widehat{D}_{\mathcal{C}})]\\
        = & \mathbb{E}[\empiricaldatavalue_{-i}(\fullset, v \!\mid\! \forall \datasubset \!\subset \!\fullset,\! \reportedstoi\! =  \!\stoi)]
    \end{align*}
$\Leftarrow$: According to Theorem \ref{thm:characterization1}, if a data valuation metric $\phi$ is linear and BIC, we have
\begin{align*}
    \phi_i(\fullset, v) = \sum_{\datasubset \subseteq \fullset} \beta_i(\datasubset) \cdot v(\datasubset) = \sum_{\datasubset \subseteq \fullset, \stoi = D_i} \beta_i(\datasubset) \cdot v(\datasubset)
\end{align*}
Then, because $\phi$ is efficient, we have 
\begin{align*}
    &\phi_i(\fullset, v) = \sum_{\datasubset \subseteq \fullset, \stoi = D_i} \beta_i(\datasubset) \cdot v(\datasubset) \\
    = & \sum_{\mathcal{C} \subseteq \mathbb{N}} \beta_i(D_{\mathcal{C}}) \cdot v(D_{\mathcal{C}}) + \sum_{\datasubset \subseteq \fullset, \stoi = D_i, \exists i' \in \mathbb{N}(\datasubset)\setminus \{i\}, D^{\datasubset}_{i'} \neq D_{i'}} \beta_i(\datasubset) \cdot v(\datasubset) \\
    = &\sum_{\mathcal{C} \subseteq \mathbb{N}} \beta_i(D_{\mathcal{C}}) \cdot v(D_{\mathcal{C}}) - \sum_{\datasubset \subseteq \fullset, \stoi = D_i, \exists i' \in \mathbb{N}(\datasubset)\setminus \{i\}, D^{\datasubset}_{i'} \neq D_{i'}} \beta_{-i}(\datasubset) \cdot v(\datasubset) \\
    = &\sum_{\mathcal{C} \subseteq \mathbb{N}} \beta_i(D_{\mathcal{C}}) \cdot v(D_{\mathcal{C}}),
\end{align*}
where $\beta_i(D_{\mathcal{C}})$ must be non-negative for all $\forall \mathcal{C} \subset \mathbb{N}$ with $i\in \mathcal{C}$ to ensure BIC.
\end{proof}

\begin{proof}[Proof of Theorem \ref{thm:tsv_bic}]
    The client-level TSV can be written as:
    \begin{align*}
        & \tsv_i(\fullset, v) = \sum_{\mathcal{C} \subseteq \mathbb{N} \setminus \{i\}} \weightsv(\mathcal{C} \mid \mathbb{N})\big(v(\clientleveldatasubset \cup \{D_i\}) - v(\clientleveldatasubset)\big) =  \sum_{\mathcal{C} \subseteq \mathbb{N}} \beta^{TSV}_i(D_\mathcal{C}) \cdot v(D_\mathcal{C}),
    \end{align*}
    where $\beta^{TSV}_i(D_\mathcal{C}) = \begin{cases}
        \weightsv(\mathcal{C} \mid \mathbb{N}), & i\notin \mathcal{C},\\
        - \weightsv(\mathcal{C} \mid \mathbb{N}), & i\in \mathcal{C}.
    \end{cases}$. 
    Because $\beta^{TSV}_i(D_\mathcal{C}) \geq 0$ for all $\mathcal{C} \subset \mathbb{N}$ with $i\in \mathcal{C}$, according to Theorem \ref{thm:characterization2}, $\tsv$ satisfies BIC.
\end{proof}

\begin{proof}[Proof of Theorem \ref{thm:tsv_unique}]
    Because $\tsv_i(\fullset, v) = \sv_i(\clientlevelgranddataset, v)$, according to Theorem \ref{thm:sv_unique}, $\tsv_i$ uniquely satisfies LIN, DUM-C, SYM-C and EFF-C, where EFF-C means $\sum_{i\in \mathbb{N}} \phi_i(\fullset, v) = v(\fullset)$.
    Then, because $\tsv_{i,j}(\fullset, v) = \sv_j(D_i, v^{\tsv_i})$, according to Theorem \ref{thm:sv_unique}, $\tsv_{i,j}$ uniquely satisfies LIN, DUM-IB, SYM-IB and EFF-IB, where EFF-IB means $\forall i\in \mathbb{N}, \sum_{j\in [M_i]} \phi_{i,j}(\fullset, v) = \phi_i(\fullset, v)$.
    Therefore, $\tsv$ uniquely satisfies EFF, LIN, DUM-C, DUM-IB, SYM-C, and SYM-IB.
\end{proof}

\clearpage
\section{Additional Experiments}
\label{appendix:additional_exp}

\subsection{Statistical Significance of Main Results}
\label{appendix:exp_se}

\begin{table*}[h]
\centering
\caption{\textbf{Standard Errors for Results in Table~\ref{tab:reward_alloc_full_knowledge}.}}
\label{tab:reward_alloc_full_knowledge_se}
\resizebox{\textwidth}{!}
{
\begin{tabular}{cl|ccccc|ccccc}
\hline
\multicolumn{2}{c|}{\multirow{2}{*}{\textbf{\begin{tabular}[c]{@{}c@{}}CML \\ Setting\end{tabular}}}} & \multicolumn{5}{c|}{\textbf{Change in $\empiricaldatavalue_i$ (\%)}}   & \multicolumn{5}{c}{\textbf{Change in $\empiricaldatavalue_{-i}$ (\%)}} \\ \cline{3-12} 
\multicolumn{2}{c|}{}                                                                                 & \textbf{SV} & \textbf{TSV} & \textbf{LOO} & \textbf{BSV} & \textbf{BV} & \textbf{SV} & \textbf{TSV} & \textbf{LOO} & \textbf{BSV} & \textbf{BV} \\ \hline
\multirow{3}{*}{\textit{Bank}}                            & \textit{FedAvg}                           & 14.4        & 0.0          & 25.7         & 14.4         & 11.2        & 10.5        & 0.0          & 0.0          & 1.68         & 5.62        \\
                                                          & \textit{SplitFed}                         & 18.0        & 0.0          & 26.7         & 17.5         & 19.9        & 6.17        & 0.0          & 0.0          & 1.39         & 6.83        \\
                                                          & \textit{FedMD}                            & 15.6        & 0.0          & 35.0         & 13.9         & 13.0        & 4.11        & 0.0          & 0.0          & 1.67         & 2.98        \\ \hline
\multirow{3}{*}{\textit{Rent}}                            & \textit{FedAvg}                           & 1.29        & 0.0          & 1.99         & 2.90         & 2.64        & 8.48        & 0.0          & 0.00         & 0.43         & 16.2        \\
                                                          & \textit{SplitFed}                         & 1.29        & 0.0          & 2.33         & 2.93         & 2.78        & 13.9        & 0.0          & 0.0          & 0.27         & 8.89        \\
                                                          & \textit{FedMD}                            & 0.93        & 0.0          & 1.25         & 3.27         & 2.50        & 9.33        & 0.0          & 0.00         & 0.51         & 13.7        \\ \hline
\multirow{3}{*}{\textit{MNIST}}                           & \textit{FedAvg}                           & 2.13        & 0.0          & 10.9         & 3.29         & 1.18        & 1.54        & 0.0          & 0.0          & 0.50         & 1.28        \\
                                                          & \textit{SplitFed}                         & 3.48        & 0.0          & 12.3         & 3.32         & 4.08        & 3.82        & 0.0          & 0.0          & 0.29         & 4.28        \\
                                                          & \textit{FedMD}                            & 2.22        & 0.0          & 22.5         & 4.42         & 0.97        & 2.34        & 0.0          & 0.0          & 1.01         & 1.82        \\ \hline
\multirow{3}{*}{\textit{FMNIST}}                          & \textit{FedAvg}                           & 17.0        & 0.0          & 9.38         & 2.99         & 2.83        & 20.8        & 0.0          & 0.00         & 4.68         & 23.5        \\
                                                          & \textit{SplitFed}                         & 14.6        & 0.0          & 10.1         & 5.23         & 18.1        & 9.12        & 0.0          & 0.00         & 11.1         & 15.1        \\
                                                          & \textit{FedMD}                            & 1.34        & 0.0          & 4.68         & 3.62         & 1.23        & 5.51        & 0.0          & 0.00         & 7.80         & 2.03        \\ \hline
\multirow{3}{*}{\textit{CIFAR10}}                         & \textit{FedAvg}                           & 5.69        & 0.0          & 8.75         & 13.9         & 1.50        & 22.6        & 0.0          & 0.00         & 11.7         & 17.0        \\
                                                          & \textit{SplitFed}                         & 12.2        & 0.0          & 13.2         & 6.43         & 17.7        & 21.2        & 0.0          & 0.00         & 7.36         & 23.1        \\
                                                          & \textit{FedMD}                            & 1.25        & 0.0          & 10.1         & 3.65         & 0.67        & 10.2        & 0.0          & 0.00         & 4.47         & 5.08        \\ \hline
\multirow{3}{*}{\textit{AGNews}}                          & \textit{FedAvg}                           & 0.84        & 0.0          & 3.66         & 1.86         & 0.31        & 1.84        & 0.0          & 0.0          & 0.66         & 1.79        \\
                                                          & \textit{SplitFed}                         & 0.81        & 0.0          & 4.24         & 1.45         & 1.71        & 1.33        & 0.0          & 0.0          & 0.67         & 0.69        \\
                                                          & \textit{FedMD}                            & 1.16        & 0.0          & 14.0         & 3.53         & 0.32        & 2.26        & 0.0          & 0.0          & 1.70         & 1.59        \\ \hline
\multirow{3}{*}{\textit{Yahoo}}                           & \textit{FedAvg}                           & 2.30        & 0.0          & 5.99         & 2.61         & 1.08        & 16.1        & 0.0          & 0.0          & 4.35         & 0.56        \\
                                                          & \textit{SplitFed}                         & 1.84        & 0.0          & 5.82         & 1.41         & 2.40        & 1.88        & 0.0          & 0.00         & 10.7         & 8.17        \\
                                                          & \textit{FedMD}                            & 1.09        & 0.0          & 2.02         & 3.36         & 0.72        & 17.3        & 0.0          & 0.0          & 6.88         & 0.62        \\ \hline
\end{tabular}
}
\end{table*}

\begin{table*}[t]
\centering
\caption{\textbf{Standard Errors for Results in Table~\ref{tab:data_sel_decline_full}.} }
\label{tab:data_sel_decline_full_se}
\small
\resizebox{\textwidth}{!}
{
\begin{tabular}{cl|ccccccccccccccc}
\hline
\multicolumn{2}{c|}{\multirow{3}{*}{\textbf{\begin{tabular}[c]{@{}c@{}}CML \\ setting\end{tabular}}}} & \multicolumn{15}{c}{\textbf{Decline (\%) in model utility due to data overvaluation}}                                                                                                                                                                              \\ \cline{3-17} 
\multicolumn{2}{c|}{}                                                                                 & \multicolumn{5}{c|}{\textbf{Top-k Selection}}                                               & \multicolumn{5}{c|}{\textbf{Above-average Selection}}                                       & \multicolumn{5}{c}{\textbf{Above-median Selection}}                    \\ \cline{3-17} 
\multicolumn{2}{c|}{}                                                                                 & \textbf{SV} & \textbf{TSV} & \textbf{LOO} & \textbf{BSV} & \multicolumn{1}{c|}{\textbf{BV}} & \textbf{SV} & \textbf{TSV} & \textbf{LOO} & \textbf{BSV} & \multicolumn{1}{c|}{\textbf{BV}} & \textbf{SV} & \textbf{TSV} & \textbf{LOO} & \textbf{BSV} & \textbf{BV} \\ \hline
\multirow{3}{*}{\textit{Bank}}                            & \textit{FedAvg}                           & 0.72        & 0.0          & 0.60         & 0.05         & \multicolumn{1}{c|}{0.05}        & 0.19        & 0.0          & 1.03         & 0.18         & \multicolumn{1}{c|}{0.02}        & 0.05        & 0.0          & 0.21         & 0.04         & 0.04        \\
                                                          & \textit{SplitFed}                         & 0.64        & 0.0          & 0.50         & 0.07         & \multicolumn{1}{c|}{0.04}        & 0.10        & 0.0          & 0.49         & 0.09         & \multicolumn{1}{c|}{0.02}        & 0.15        & 0.0          & 0.05         & 0.15         & 0.05        \\
                                                          & \textit{FedMD}                            & 0.46        & 0.0          & 0.10         & 0.06         & \multicolumn{1}{c|}{0.05}        & 0.15        & 0.0          & 0.49         & 0.11         & \multicolumn{1}{c|}{0.04}        & 0.07        & 0.0          & 0.10         & 0.22         & 0.02        \\ \hline
\multirow{3}{*}{\textit{Rent}}                            & \textit{FedAvg}                           & 0.95        & 0.0          & 0.61         & 0.17         & \multicolumn{1}{c|}{1.24}        & 0.79        & 0.0          & 0.56         & 0.70         & \multicolumn{1}{c|}{0.74}        & 0.82        & 0.0          & 0.46         & 0.82         & 0.61        \\
                                                          & \textit{SplitFed}                         & 0.82        & 0.0          & 0.77         & 0.20         & \multicolumn{1}{c|}{1.02}        & 0.59        & 0.0          & 0.99         & 0.62         & \multicolumn{1}{c|}{0.47}        & 0.59        & 0.0          & 0.78         & 0.59         & 0.59        \\
                                                          & \textit{FedMD}                            & 1.27        & 0.0          & 1.35         & 0.14         & \multicolumn{1}{c|}{0.83}        & 0.63        & 0.0          & 1.23         & 0.66         & \multicolumn{1}{c|}{0.57}        & 0.61        & 0.0          & 0.60         & 0.61         & 0.61        \\ \hline
\multirow{3}{*}{\textit{MNIST}}                           & \textit{FedAvg}                           & 0.64        & 0.0          & 0.47         & 0.49         & \multicolumn{1}{c|}{0.52}        & 0.78        & 0.0          & 1.41         & 0.84         & \multicolumn{1}{c|}{1.33}        & 0.92        & 0.0          & 0.61         & 0.49         & 0.51        \\
                                                          & \textit{SplitFed}                         & 0.58        & 0.0          & 0.68         & 0.49         & \multicolumn{1}{c|}{0.99}        & 0.41        & 0.0          & 2.91         & 0.77         & \multicolumn{1}{c|}{0.31}        & 1.18        & 0.0          & 1.19         & 0.04         & 1.10        \\
                                                          & \textit{FedMD}                            & 0.74        & 0.0          & 0.66         & 0.43         & \multicolumn{1}{c|}{0.60}        & 0.86        & 0.0          & 1.15         & 0.62         & \multicolumn{1}{c|}{0.83}        & 1.79        & 0.0          & 1.16         & 0.87         & 1.43        \\ \hline
\multirow{3}{*}{\textit{FMNIST}}                          & \textit{FedAvg}                           & 0.66        & 0.0          & 1.29         & 3.65         & \multicolumn{1}{c|}{0.80}        & 0.04        & 0.0          & 0.04         & 7.95         & \multicolumn{1}{c|}{0.04}        & 0.09        & 0.0          & 0.05         & 8.92         & 0.05        \\
                                                          & \textit{SplitFed}                         & 3.99        & 0.0          & 1.36         & 3.75         & \multicolumn{1}{c|}{2.20}        & 7.22        & 0.0          & 4.08         & 0.0          & \multicolumn{1}{c|}{3.21}        & 6.73        & 0.0          & 0.09         & 9.04         & 2.19        \\
                                                          & \textit{FedMD}                            & 1.02        & 0.0          & 0.97         & 1.37         & \multicolumn{1}{c|}{0.61}        & 0.64        & 0.0          & 1.31         & 0.0          & \multicolumn{1}{c|}{1.76}        & 3.63        & 0.0          & 3.26         & 2.60         & 1.14        \\ \hline
\multirow{3}{*}{\textit{CIFAR10}}                         & \textit{FedAvg}                           & 0.39        & 0.0          & 0.20         & 0.39         & \multicolumn{1}{c|}{1.63}        & 0.93        & 0.0          & 0.0          & 0.52         & \multicolumn{1}{c|}{1.60}        & 0.68        & 0.0          & 0.0          & 2.64         & 0.0         \\
                                                          & \textit{SplitFed}                         & 2.12        & 0.0          & 0.12         & 0.44         & \multicolumn{1}{c|}{0.86}        & 5.28        & 0.0          & 0.0          & 0.0          & \multicolumn{1}{c|}{3.31}        & 4.69        & 0.0          & 0.0          & 2.52         & 0.75        \\
                                                          & \textit{FedMD}                            & 0.94        & 0.0          & 1.43         & 1.42         & \multicolumn{1}{c|}{0.29}        & 3.39        & 0.0          & 4.37         & 0.0          & \multicolumn{1}{c|}{3.44}        & 2.01        & 0.0          & 3.92         & 2.46         & 1.59        \\ \hline
\multirow{3}{*}{\textit{AGNews}}                          & \textit{FedAvg}                           & 0.14        & 0.0          & 0.31         & 0.15         & \multicolumn{1}{c|}{0.14}        & 0.97        & 0.0          & 1.18         & 0.16         & \multicolumn{1}{c|}{0.95}        & 0.72        & 0.0          & 0.04         & 0.49         & 0.78        \\
                                                          & \textit{SplitFed}                         & 0.34        & 0.0          & 0.29         & 0.13         & \multicolumn{1}{c|}{0.22}        & 0.97        & 0.0          & 0.23         & 0.70         & \multicolumn{1}{c|}{1.07}        & 0.88        & 0.0          & 0.04         & 0.64         & 0.90        \\
                                                          & \textit{FedMD}                            & 0.22        & 0.0          & 0.27         & 0.22         & \multicolumn{1}{c|}{0.17}        & 0.93        & 0.0          & 0.0          & 1.24         & \multicolumn{1}{c|}{1.22}        & 1.09        & 0.0          & 0.0          & 1.08         & 1.17        \\ \hline
\multirow{3}{*}{\textit{Yahoo}}                           & \textit{FedAvg}                           & 0.96        & 0.0          & 0.93         & 0.77         & \multicolumn{1}{c|}{0.25}        & 2.34        & 0.0          & 3.15         & 3.59         & \multicolumn{1}{c|}{2.29}        & 1.34        & 0.0          & 1.80         & 1.23         & 0.35        \\
                                                          & \textit{SplitFed}                         & 0.99        & 0.0          & 0.97         & 0.82         & \multicolumn{1}{c|}{0.84}        & 3.05        & 0.0          & 4.01         & 3.48         & \multicolumn{1}{c|}{3.16}        & 1.06        & 0.0          & 1.31         & 1.08         & 1.22        \\
                                                          & \textit{FedMD}                            & 0.85        & 0.0          & 0.81         & 0.81         & \multicolumn{1}{c|}{0.70}        & 2.82        & 0.0          & 2.59         & 3.62         & \multicolumn{1}{c|}{2.88}        & 0.85        & 0.0          & 1.04         & 0.91         & 0.86        \\ \hline
\end{tabular}
}
\end{table*}

\begin{table*}[t]
\centering
\caption{\textbf{Standard Errors for Results in Table~\ref{tab:data_sel_full}.}}
\label{tab:data_sel_full_se}
\small
\resizebox{\textwidth}{!}
{
\begin{tabular}{cl|ccccccccccccccc}
\hline
\multicolumn{2}{c|}{\multirow{3}{*}{\textbf{\begin{tabular}[c]{@{}c@{}}CML \\ setting\end{tabular}}}} & \multicolumn{15}{c}{\textbf{Model utility without data overvaluation}}                                                                                                                                                                                             \\ \cline{3-17} 
\multicolumn{2}{c|}{}                                                                                 & \multicolumn{5}{c|}{\textbf{Top-k Selection}}                                               & \multicolumn{5}{c|}{\textbf{Above-average Selection}}                                       & \multicolumn{5}{c}{\textbf{Above-median Selection}}                    \\ \cline{3-17} 
\multicolumn{2}{c|}{}                                                                                 & \textbf{SV} & \textbf{TSV} & \textbf{LOO} & \textbf{BSV} & \multicolumn{1}{c|}{\textbf{BV}} & \textbf{SV} & \textbf{TSV} & \textbf{LOO} & \textbf{BSV} & \multicolumn{1}{c|}{\textbf{BV}} & \textbf{SV} & \textbf{TSV} & \textbf{LOO} & \textbf{BSV} & \textbf{BV} \\ \hline
\multirow{3}{*}{\textit{Bank}}                            & \textit{FedAvg}                           & 0.14        & 0.17         & 0.17         & 0.14         & \multicolumn{1}{c|}{0.15}        & 0.19        & 0.20         & 0.22         & 0.19         & \multicolumn{1}{c|}{0.19}        & 0.20        & 0.18         & 0.22         & 0.18         & 0.20        \\
                                                          & \textit{SplitFed}                         & 0.28        & 0.25         & 0.27         & 0.28         & \multicolumn{1}{c|}{0.29}        & 0.22        & 0.23         & 0.21         & 0.24         & \multicolumn{1}{c|}{0.22}        & 0.23        & 0.23         & 0.25         & 0.23         & 0.20        \\
                                                          & \textit{FedMD}                            & 0.20        & 0.19         & 0.21         & 0.21         & \multicolumn{1}{c|}{0.19}        & 0.17        & 0.20         & 0.10         & 0.17         & \multicolumn{1}{c|}{0.11}        & 0.18        & 0.13         & 0.17         & 0.19         & 0.18        \\ \hline
\multirow{3}{*}{\textit{Rent}}                            & \textit{FedAvg}                           & 0.49        & 0.35         & 0.35         & 0.51         & \multicolumn{1}{c|}{0.46}        & 0.45        & 0.41         & 0.42         & 0.66         & \multicolumn{1}{c|}{0.41}        & 0.54        & 0.78         & 0.39         & 0.54         & 0.57        \\
                                                          & \textit{SplitFed}                         & 0.53        & 0.50         & 0.30         & 0.56         & \multicolumn{1}{c|}{0.54}        & 0.90        & 0.63         & 0.69         & 0.78         & \multicolumn{1}{c|}{0.77}        & 0.73        & 0.65         & 0.68         & 0.73         & 0.73        \\
                                                          & \textit{FedMD}                            & 0.20        & 0.14         & 0.10         & 0.08         & \multicolumn{1}{c|}{0.18}        & 0.16        & 0.27         & 0.36         & 0.32         & \multicolumn{1}{c|}{0.14}        & 0.16        & 0.16         & 0.10         & 0.34         & 0.09        \\ \hline
\multirow{3}{*}{\textit{MNIST}}                           & \textit{FedAvg}                           & 0.47        & 0.54         & 0.75         & 0.82         & \multicolumn{1}{c|}{0.68}        & 1.99        & 2.05         & 3.32         & 1.93         & \multicolumn{1}{c|}{0.95}        & 0.71        & 1.26         & 0.37         & 1.41         & 0.79        \\
                                                          & \textit{SplitFed}                         & 0.55        & 0.81         & 0.91         & 0.71         & \multicolumn{1}{c|}{0.95}        & 1.59        & 3.77         & 2.18         & 2.09         & \multicolumn{1}{c|}{0.59}        & 1.15        & 1.55         & 0.65         & 0.78         & 0.94        \\
                                                          & \textit{FedMD}                            & 0.91        & 0.39         & 0.33         & 0.95         & \multicolumn{1}{c|}{0.61}        & 1.06        & 1.97         & 2.18         & 1.36         & \multicolumn{1}{c|}{2.08}        & 1.06        & 0.53         & 1.29         & 1.08         & 0.68        \\ \hline
\multirow{3}{*}{\textit{FMNIST}}                          & \textit{FedAvg}                           & 0.38        & 0.72         & 0.24         & 0.25         & \multicolumn{1}{c|}{0.35}        & 0.14        & 0.14         & 0.14         & 0.84         & \multicolumn{1}{c|}{0.14}        & 0.10        & 0.10         & 0.09         & 0.28         & 0.10        \\
                                                          & \textit{SplitFed}                         & 0.30        & 0.70         & 0.25         & 0.37         & \multicolumn{1}{c|}{0.46}        & 0.10        & 0.10         & 0.52         & 1.03         & \multicolumn{1}{c|}{0.10}        & 0.08        & 0.07         & 0.13         & 0.26         & 0.06        \\
                                                          & \textit{FedMD}                            & 0.42        & 0.48         & 0.45         & 0.19         & \multicolumn{1}{c|}{0.48}        & 0.64        & 0.89         & 1.33         & 0.38         & \multicolumn{1}{c|}{0.25}        & 1.34        & 1.34         & 1.32         & 0.17         & 1.45        \\ \hline
\multirow{3}{*}{\textit{CIFAR10}}                         & \textit{FedAvg}                           & 0.68        & 0.32         & 0.19         & 0.06         & \multicolumn{1}{c|}{0.84}        & 1.39        & 1.08         & 0.32         & 0.16         & \multicolumn{1}{c|}{1.44}        & 0.23        & 0.23         & 0.32         & 0.45         & 0.38        \\
                                                          & \textit{SplitFed}                         & 0.52        & 0.34         & 0.15         & 0.26         & \multicolumn{1}{c|}{0.55}        & 1.88        & 1.88         & 0.18         & 0.14         & \multicolumn{1}{c|}{1.88}        & 0.74        & 0.74         & 0.18         & 0.30         & 0.74        \\
                                                          & \textit{FedMD}                            & 0.14        & 0.43         & 0.58         & 0.13         & \multicolumn{1}{c|}{0.20}        & 0.69        & 2.01         & 1.76         & 0.21         & \multicolumn{1}{c|}{1.42}        & 0.92        & 0.96         & 0.99         & 0.41         & 1.10        \\ \hline
\multirow{3}{*}{\textit{AGNews}}                          & \textit{FedAvg}                           & 0.30        & 0.48         & 0.41         & 0.91         & \multicolumn{1}{c|}{0.49}        & 2.13        & 0.37         & 2.27         & 1.38         & \multicolumn{1}{c|}{2.12}        & 0.51        & 0.49         & 0.69         & 0.78         & 0.65        \\
                                                          & \textit{SplitFed}                         & 0.47        & 0.22         & 0.75         & 0.65         & \multicolumn{1}{c|}{0.72}        & 2.39        & 0.83         & 1.01         & 1.34         & \multicolumn{1}{c|}{1.43}        & 0.55        & 0.79         & 0.56         & 1.49         & 0.44        \\
                                                          & \textit{FedMD}                            & 0.21        & 0.38         & 0.29         & 0.70         & \multicolumn{1}{c|}{0.45}        & 0.42        & 0.62         & 0.84         & 2.92         & \multicolumn{1}{c|}{0.30}        & 0.56        & 0.89         & 0.28         & 1.88         & 0.30        \\ \hline
\multirow{3}{*}{\textit{Yahoo}}                           & \textit{FedAvg}                           & 0.10        & 0.11         & 0.20         & 0.14         & \multicolumn{1}{c|}{0.11}        & 0.18        & 0.82         & 0.94         & 1.17         & \multicolumn{1}{c|}{0.82}        & 0.14        & 0.14         & 0.76         & 0.13         & 0.14        \\
                                                          & \textit{SplitFed}                         & 0.21        & 0.20         & 0.28         & 0.23         & \multicolumn{1}{c|}{0.18}        & 0.70        & 0.83         & 1.95         & 1.11         & \multicolumn{1}{c|}{0.76}        & 0.24        & 0.23         & 0.28         & 0.27         & 0.11        \\
                                                          & \textit{FedMD}                            & 0.11        & 0.12         & 0.19         & 0.16         & \multicolumn{1}{c|}{0.11}        & 0.66        & 1.12         & 0.77         & 1.02         & \multicolumn{1}{c|}{0.85}        & 0.08        & 0.06         & 0.10         & 0.11         & 0.08        \\ \hline
\end{tabular}
}
\end{table*}

\clearpage

\subsection{Varying Utility Metrics}
\label{appendix:exp_utility_metrics}
\begin{table*}[h]
\centering
\caption{\textbf{Varying Utility Metrics.} Reward allocation results under the data overvaluation attack, with \textbf{Improvement in Validation Loss} used as the utility metric.}
\resizebox{\textwidth}{!}
{
\begin{tabular}{cl|ccccc|ccccc|ccccc}
\hline
\multicolumn{2}{c|}{\multirow{2}{*}{\textbf{\begin{tabular}[c]{@{}c@{}}CML \\ Setting\end{tabular}}}} & \multicolumn{5}{c|}{\textbf{Change in $\phi_i$ (\%)}}                       & \multicolumn{5}{c|}{\textbf{Change in $\phi_{-i}$ (\%)}}                  & \multicolumn{5}{c}{\textbf{Client-level valuation error}}                    \\ \cline{3-17} 
\multicolumn{2}{c|}{}                                                                                 & \textbf{SV}   & \textbf{TSV} & \textbf{LOO}  & \textbf{BSV}  & \textbf{BV}  & \textbf{SV}   & \textbf{TSV} & \textbf{LOO} & \textbf{BSV} & \textbf{BV}  & \textbf{SV}   & \textbf{TSV} & \textbf{LOO}  & \textbf{BSV}  & \textbf{BV}   \\ \hline
\multirow{3}{*}{\textit{Bank}}                            & \textit{FedAvg}                           & +114          & 0.0          & +178          & +105          & +37          & -140          & 0.0          & 0.0          & -33          & -54          & 1.08          & 0.0          & 4.73          & 0.98          & 0.30          \\
                                                          & \textit{SplitFed}                         & +115          & 0.0          & +171          & +112          & +78          & -148          & 0.0          & 0.0          & -38          & -71          & 1.20          & 0.0          & 4.36          & 1.09          & 0.39          \\
                                                          & \textit{FedMD}                            & +112          & 0.0          & +143          & +121          & +70          & -140          & 0.0          & 0.0          & -52          & -34          & 1.37          & 0.0          & 7.29          & 1.23          & 0.35          \\ \hline
\multirow{3}{*}{\textit{Rent}}                            & \textit{FedAvg}                           & +70           & 0.0          & +177          & +47           & +69          & -80           & 0.0          & 0.0          & -9.0         & -110         & 4.44          & 0.0          & 16.1          & 1.71          & 3.48          \\
                                                          & \textit{SplitFed}                         & +84           & 0.0          & +183          & +56           & +93          & -124          & 0.0          & 0.0          & -12          & -153         & 5.34          & 0.0          & 8.97          & 2.13          & 4.86          \\
                                                          & \textit{FedMD}                            & +82           & 0.0          & +187          & +50           & +78          & -113          & 0.0          & 0.0          & -9.6         & -148         & 6.70          & 0.0          & 16.4          & 1.63          & 4.46          \\ \hline
\multirow{3}{*}{\textit{MNIST}}                           & \textit{FedAvg}                           & +86           & 0.0          & +100          & +188          & +17          & -116          & 0.0          & 0.0          & -76          & -18          & 2.41          & 0.0          & 0.97          & 1.81          & 0.92          \\
                                                          & \textit{SplitFed}                         & +107          & 0.0          & +101          & +189          & +40          & -180          & 0.0          & 0.0          & -80          & -43          & 3.15          & 0.0          & 0.92          & 2.01          & 1.65          \\
                                                          & \textit{FedMD}                            & +72           & 0.0          & +134          & +88           & +35          & -79           & 0.0          & 0.0          & -25          & -40          & 5.16          & 0.0          & 6.21          & 2.81          & 2.46          \\ \hline
\multirow{3}{*}{\textit{FMNIST}}                          & \textit{FedAvg}                           & +155          & 0.0          & +28           & +168          & +18          & -158          & 0.0          & 0.0          & -12          & -25          & 1.62          & 0.0          & 0.42          & 1.81          & 0.61          \\
                                                          & \textit{SplitFed}                         & +163          & 0.0          & +26           & +200          & +44          & -166          & 0.0          & 0.0          & -17          & -52          & 2.61          & 0.0          & 0.53          & 2.74          & 1.36          \\
                                                          & \textit{FedMD}                            & +82           & 0.0          & +92           & +146          & +36          & -113          & 0.0          & 0.0          & -133         & -45          & 3.18          & 0.0          & 3.97          & 4.04          & 1.39          \\ \hline
\multirow{3}{*}{\textit{CIFAR10}}                         & \textit{FedAvg}                           & +140          & 0.0          & +33           & +127          & +12          & -143          & 0.0          & 0.0          & -12          & -20          & 4.19          & 0.0          & 0.17          & 0.95          & 0.55          \\
                                                          & \textit{SplitFed}                         & +186          & 0.0          & +4.2          & +155          & +31          & -174          & 0.0          & 0.0          & -14          & -38          & 5.53          & 0.0          & 0.16          & 1.26          & 1.02          \\
                                                          & \textit{FedMD}                            & +77           & 0.0          & +88           & +125          & +44          & -122          & 0.0          & 0.0          & -90          & -49          & 2.03          & 0.0          & 1.22          & 22.3          & 0.60          \\ \hline
\multirow{3}{*}{\textit{AGNews}}                          & \textit{FedAvg}                           & +68           & 0.0          & +155          & +78           & +40          & -69           & 0.0          & 0.0          & -20          & -40          & 9.45          & 0.0          & 18.3          & 4.36          & 8.08          \\
                                                          & \textit{SplitFed}                         & +79           & 0.0          & +156          & +91           & +52          & -96           & 0.0          & 0.0          & -27          & -62          & 12.0          & 0.0          & 18.5          & 5.20          & 12.5          \\
                                                          & \textit{FedMD}                            & +81           & 0.0          & +135          & +86           & +54          & -103          & 0.0          & 0.0          & -24          & -77          & 25.3          & 0.0          & 7.41          & 6.74          & 29.6          \\ \hline
\multirow{3}{*}{\textit{Yahoo}}                           & \textit{FedAvg}                           & +74           & 0.0          & +144          & +103          & +30          & -96           & 0.0          & +0.0         & -63          & -28          & 2.78          & 0.0          & 6.59          & 1.47          & 1.04          \\
                                                          & \textit{SplitFed}                         & +125          & 0.0          & +144          & +174          & +81          & -200          & 0.0          & 0.0          & -200         & -165         & 6.93          & 0.0          & 6.47          & 7.45          & 3.73          \\
                                                          & \textit{FedMD}                            & +86           & 0.0          & +129          & +96           & +45          & -108          & 0.0          & 0.0          & -43          & -54          & 2.34          & 0.0          & 4.82          & 2.71          & 0.80          \\ \hline
\multicolumn{2}{c|}{\textbf{Average}}                                                                 & \textbf{+103} & \textbf{0.0} & \textbf{+119} & \textbf{+119} & \textbf{+48} & \textbf{-127} & \textbf{0.0} & \textbf{0.0} & \textbf{-47} & \textbf{-63} & \textbf{5.18} & \textbf{0.0} & \textbf{6.41} & \textbf{3.64} & \textbf{3.82} \\ \hline
\end{tabular}
}
\end{table*}

\begin{table*}[h]
\centering
\caption{\textbf{Varying Utility Metrics.} Data selection results under the data overvaluation attack, with \textbf{Improvement in Validation Loss} used as the utility metric.}
\small
\resizebox{\textwidth}{!}
{
\begin{tabular}{cl|ccccccccccccccc}
\hline
\multicolumn{2}{c|}{\multirow{3}{*}{\textbf{\begin{tabular}[c]{@{}c@{}}CML \\ setting\end{tabular}}}} & \multicolumn{15}{c}{\textbf{Decline (\%) in model utility due to data overvaluation}}                                                                                                                                                                                                \\ \cline{3-17} 
\multicolumn{2}{c|}{}                                                                                 & \multicolumn{5}{c|}{\textbf{Top-k Selection}}                                                     & \multicolumn{5}{c|}{\textbf{Above-average Selection}}                                             & \multicolumn{5}{c}{\textbf{Above-median Selection}}                          \\ \cline{3-17} 
\multicolumn{2}{c|}{}                                                                                 & \textbf{SV}   & \textbf{TSV} & \textbf{LOO}  & \textbf{BSV}  & \multicolumn{1}{c|}{\textbf{BV}}   & \textbf{SV}   & \textbf{TSV} & \textbf{LOO}  & \textbf{BSV}  & \multicolumn{1}{c|}{\textbf{BV}}   & \textbf{SV}   & \textbf{TSV} & \textbf{LOO}  & \textbf{BSV}  & \textbf{BV}   \\ \hline
\multirow{3}{*}{\textit{Bank}}                            & \textit{FedAvg}                           & 2.59          & 0.0          & 2.69          & 0.35          & \multicolumn{1}{c|}{0.02}          & 0.57          & 0.0          & 0.31          & 0.64          & \multicolumn{1}{c|}{0.16}          & 0.10          & 0.0          & 0.18          & 0.30          & 0.04          \\
                                                          & \textit{SplitFed}                         & 2.27          & 0.0          & 2.21          & 0.41          & \multicolumn{1}{c|}{0.07}          & 0.52          & 0.0          & 0.21          & 0.53          & \multicolumn{1}{c|}{0.22}          & 0.52          & 0.0          & 0.28          & 0.61          & 0.21          \\
                                                          & \textit{FedMD}                            & 2.38          & 0.0          & 2.63          & 0.40          & \multicolumn{1}{c|}{0.06}          & 0.39          & 0.0          & 1.12          & 0.49          & \multicolumn{1}{c|}{0.22}          & 0.33          & 0.0          & 0.58          & 0.21          & 0.09          \\ \hline
\multirow{3}{*}{\textit{Rent}}                            & \textit{FedAvg}                           & 10.3          & 0.0          & 11.2          & 1.09          & \multicolumn{1}{c|}{5.81}          & 2.69          & 0.0          & 0.42          & 1.17          & \multicolumn{1}{c|}{2.59}          & 2.75          & 0.0          & 1.79          & 2.75          & 1.77          \\
                                                          & \textit{SplitFed}                         & 10.3          & 0.0          & 11.1          & 0.94          & \multicolumn{1}{c|}{8.29}          & 2.15          & 0.0          & 2.65          & 1.73          & \multicolumn{1}{c|}{1.34}          & 1.69          & 0.0          & 1.94          & 1.69          & 1.69          \\
                                                          & \textit{FedMD}                            & 12.7          & 0.0          & 13.2          & 1.12          & \multicolumn{1}{c|}{8.49}          & 3.66          & 0.0          & 4.54          & 2.20          & \multicolumn{1}{c|}{2.26}          & 3.19          & 0.0          & 1.03          & 1.74          & 3.12          \\ \hline
\multirow{3}{*}{\textit{MNIST}}                           & \textit{FedAvg}                           & 2.90          & 0.0          & 1.99          & 2.62          & \multicolumn{1}{c|}{2.22}          & 0.95          & 0.0          & 1.99          & 0.37          & \multicolumn{1}{c|}{1.20}          & 0.51          & 0.0          & 0.55          & 0.53          & 0.61          \\
                                                          & \textit{SplitFed}                         & 3.15          & 0.0          & 2.41          & 2.93          & \multicolumn{1}{c|}{3.72}          & 2.41          & 0.0          & 3.51          & 1.56          & \multicolumn{1}{c|}{2.33}          & 1.79          & 0.0          & 3.04          & 1.14          & 2.55          \\
                                                          & \textit{FedMD}                            & 2.17          & 0.0          & 3.05          & 2.42          & \multicolumn{1}{c|}{1.72}          & 4.25          & 0.0          & 0.80          & 0.81          & \multicolumn{1}{c|}{3.10}          & 5.71          & 0.0          & 5.50          & 2.08          & 5.76          \\ \hline
\multirow{3}{*}{\textit{FMNIST}}                          & \textit{FedAvg}                           & 10.0          & 0.0          & 3.18          & 11.0          & \multicolumn{1}{c|}{2.93}          & 17.3          & 0.0          & 7.29          & 21.3          & \multicolumn{1}{c|}{5.23}          & 33.4          & 0.0          & 1.13          & 38.0          & 13.0          \\
                                                          & \textit{SplitFed}                         & 12.8          & 0.0          & 5.27          & 12.3          & \multicolumn{1}{c|}{11.5}          & 20.9          & 0.0          & 2.03          & 23.1          & \multicolumn{1}{c|}{19.0}          & 35.8          & 0.0          & 3.23          & 37.8          & 29.4          \\
                                                          & \textit{FedMD}                            & 6.46          & 0.0          & 9.76          & 5.75          & \multicolumn{1}{c|}{3.38}          & 0.0           & 0.0          & 2.52          & 3.33          & \multicolumn{1}{c|}{0.0}           & 2.66          & 0.0          & 0.54          & 6.45          & 0.0           \\ \hline
\multirow{3}{*}{\textit{CIFAR10}}                         & \textit{FedAvg}                           & 4.36          & 0.0          & 1.85          & 1.50          & \multicolumn{1}{c|}{2.31}          & 0.19          & 0.0          & 0.0           & 0.69          & \multicolumn{1}{c|}{1.36}          & 7.72          & 0.0          & 0.0           & 0.69          & 4.22          \\
                                                          & \textit{SplitFed}                         & 3.52          & 0.0          & 0.93          & 0.93          & \multicolumn{1}{c|}{2.86}          & 3.39          & 0.0          & 0.0           & 0.17          & \multicolumn{1}{c|}{4.74}          & 3.21          & 0.0          & 0.0           & 0.0           & 6.35          \\
                                                          & \textit{FedMD}                            & 4.73          & 0.0          & 5.05          & 5.68          & \multicolumn{1}{c|}{0.72}          & 0.0           & 0.0          & 12.9          & 0.0           & \multicolumn{1}{c|}{7.07}          & 11.9          & 0.0          & 8.65          & 12.4          & 4.15          \\ \hline
\multirow{3}{*}{\textit{AGNews}}                          & \textit{FedAvg}                           & 0.80          & 0.0          & 0.28          & 0.46          & \multicolumn{1}{c|}{0.44}          & 0.0           & 0.0          & 1.88          & 0.0           & \multicolumn{1}{c|}{3.87}          & 1.49          & 0.0          & 0.08          & 0.0           & 1.52          \\
                                                          & \textit{SplitFed}                         & 3.20          & 0.0          & 0.46          & 0.35          & \multicolumn{1}{c|}{1.78}          & 0.0           & 0.0          & 0.72          & 0.0           & \multicolumn{1}{c|}{4.81}          & 3.21          & 0.0          & 0.0           & 0.0           & 3.61          \\
                                                          & \textit{FedMD}                            & 1.79          & 0.0          & 2.52          & 0.80          & \multicolumn{1}{c|}{1.63}          & 2.19          & 0.0          & 8.78          & 0.0           & \multicolumn{1}{c|}{2.35}          & 4.93          & 0.0          & 0.27          & 0.0           & 4.93          \\ \hline
\multirow{3}{*}{\textit{Yahoo}}                           & \textit{FedAvg}                           & 5.58          & 0.0          & 5.21          & 2.40          & \multicolumn{1}{c|}{3.41}          & 38.0          & 0.0          & 29.7          & 9.07          & \multicolumn{1}{c|}{11.6}          & 3.40          & 0.0          & 3.37          & 3.73          & 3.92          \\
                                                          & \textit{SplitFed}                         & 6.71          & 0.0          & 6.52          & 4.19          & \multicolumn{1}{c|}{4.16}          & 34.7          & 0.0          & 29.8          & 29.1          & \multicolumn{1}{c|}{32.8}          & 5.18          & 0.0          & 4.39          & 4.20          & 4.61          \\
                                                          & \textit{FedMD}                            & 4.75          & 0.0          & 5.70          & 4.06          & \multicolumn{1}{c|}{1.58}          & 35.4          & 0.0          & 43.2          & 38.6          & \multicolumn{1}{c|}{4.91}          & 4.39          & 0.0          & 5.87          & 5.55          & 0.58          \\ \hline
\multicolumn{2}{c|}{\textbf{Average}}                                                                 & \textbf{5.40} & \textbf{0.0} & \textbf{4.63} & \textbf{2.94} & \multicolumn{1}{c|}{\textbf{3.20}} & \textbf{8.08} & \textbf{0.0} & \textbf{7.35} & \textbf{6.42} & \multicolumn{1}{c|}{\textbf{5.29}} & \textbf{6.38} & \textbf{0.0} & \textbf{2.02} & \textbf{5.71} & \textbf{4.39} \\ \hline
\end{tabular}
}
\end{table*}

\begin{table*}[h]
\centering
\caption{\textbf{Varying Utility Metrics.} Reward allocation results under the data overvaluation attack, with the \textbf{Class-wise Accuracy} as the utility metric~\cite{schoch2022cs}. Since this utility metric is designed for classification tasks, the Rent dataset for tabular data regression is missing.}
\resizebox{\textwidth}{!}
{
\begin{tabular}{cl|ccccc|ccccc|ccccc}
\hline
\multicolumn{2}{c|}{\multirow{2}{*}{\textbf{\begin{tabular}[c]{@{}c@{}}CML \\ Setting\end{tabular}}}} & \multicolumn{5}{c|}{\textbf{Change in $\phi_i$ (\%)}}                     & \multicolumn{5}{c|}{\textbf{Change in $\phi_{-i}$ (\%)}}                 & \multicolumn{5}{c}{\textbf{Client-level valuation error}}                    \\ \cline{3-17} 
\multicolumn{2}{c|}{}                                                                                 & \textbf{SV}  & \textbf{TSV} & \textbf{LOO} & \textbf{BSV}  & \textbf{BV}  & \textbf{SV}  & \textbf{TSV} & \textbf{LOO} & \textbf{BSV} & \textbf{BV}  & \textbf{SV}   & \textbf{TSV} & \textbf{LOO}  & \textbf{BSV}  & \textbf{BV}   \\ \hline
\multirow{3}{*}{\textit{Bank}}                            & \textit{FedAvg}                           & +76          & 0.0          & +115         & +56           & +47          & -66          & 0.0          & 0.0          & -14          & -30          & 0.79          & 0.0          & 3.18          & 0.52          & 0.41          \\
                                                          & \textit{SplitFed}                         & +88          & 0.0          & +87          & +81           & +71          & -63          & 0.0          & 0.0          & -16          & -38          & 0.81          & 0.0          & 1.78          & 0.66          & 0.41          \\
                                                          & \textit{FedMD}                            & +76          & 0.0          & +5.5         & +65           & +45          & -36          & 0.0          & 0.0          & -12          & -18          & 0.54          & 0.0          & 0.40          & 0.49          & 0.29          \\ \hline
\multirow{3}{*}{\textit{MNIST}}                           & \textit{FedAvg}                           & +56          & 0.0          & +56          & +87           & +23          & -48          & 0.0          & 0.0          & -25          & -21          & 2.29          & 0.0          & 0.66          & 3.21          & 0.84          \\
                                                          & \textit{SplitFed}                         & +83          & 0.0          & +65          & +107          & +54          & -109         & 0.0          & 0.0          & -39          & -62          & 3.61          & 0.0          & 0.52          & 4.90          & 1.93          \\
                                                          & \textit{FedMD}                            & +59          & 0.0          & +28          & +97           & +22          & -54          & 0.0          & 0.0          & -32          & -23          & 2.71          & 0.0          & 0.84          & 3.15          & 1.11          \\ \hline
\multirow{3}{*}{\textit{FMNIST}}                          & \textit{FedAvg}                           & +111         & 0.0          & +41          & +134          & +21          & -139         & 0.0          & 0.0          & -93          & -134         & 0.88          & 0.0          & 0.31          & 4.25          & 0.28          \\
                                                          & \textit{SplitFed}                         & +129         & 0.0          & +43          & +160          & +106         & -193         & 0.0          & 0.0          & -186         & -163         & 2.05          & 0.0          & 0.33          & 7.63          & 0.90          \\
                                                          & \textit{FedMD}                            & +69          & 0.0          & +87          & +140          & +31          & -77          & 0.0          & 0.0          & -112         & -34          & 3.28          & 0.0          & 2.20          & 4.75          & 1.87          \\ \hline
\multirow{3}{*}{\textit{CIFAR10}}                         & \textit{FedAvg}                           & +95          & 0.0          & +62          & +84           & +29          & 132          & 0.0          & 0.0          & -67          & -106         & 1.03          & 0.0          & 0.32          & 3.93          & 0.39          \\
                                                          & \textit{SplitFed}                         & +119         & 0.0          & +68          & +113          & +105         & -144         & 0.0          & 0.0          & -71          & -136         & 1.61          & 0.0          & 0.34          & 5.45          & 0.80          \\
                                                          & \textit{FedMD}                            & +63          & 0.0          & +102         & +107          & +28          & -76          & 0.0          & 0.0          & -50          & -35          & 1.27          & 0.0          & 0.72          & 4.96          & 0.53          \\ \hline
\multirow{3}{*}{\textit{AGNews}}                          & \textit{FedAvg}                           & +71          & 0.0          & +151         & +96           & +36          & -76          & 0.0          & 0.0          & -31          & -37          & 13.1          & 0.0          & 7.05          & 12.4          & 4.07          \\
                                                          & \textit{SplitFed}                         & +84          & 0.0          & +153         & +111          & +50          & -113         & 0.0          & 0.0          & -44          & -61          & 17.3          & 0.0          & 5.36          & 10.8          & 6.11          \\
                                                          & \textit{FedMD}                            & +66          & 0.0          & +57          & +101          & +42          & -67          & 0.0          & 0.0          & -35          & -40          & 7.44          & 0.0          & 0.61          & 4.45          & 4.88          \\ \hline
\multirow{3}{*}{\textit{Yahoo}}                           & \textit{FedAvg}                           & +60          & 0.0          & +103         & +91           & +25          & -76          & 0.0          & 0.0          & -37          & -22          & 2.68          & 0.0          & 4.71          & 3.39          & 0.90          \\
                                                          & \textit{SplitFed}                         & +116         & 0.0          & +103         & +164          & +84          & -200         & 0.0          & 0.0          & -200         & -172         & 7.29          & 0.0          & 3.51          & 15.3          & 4.06          \\
                                                          & \textit{FedMD}                            & +67          & 0.0          & +120         & +100          & +29          & -89          & 0.0          & 0.0          & -50          & -29          & 2.56          & 0.0          & 3.61          & 2.76          & 1.08          \\ \hline
\multicolumn{2}{c|}{\textbf{Average}}                                                                 & \textbf{+83} & \textbf{0.0} & \textbf{+80} & \textbf{+105} & \textbf{+47} & \textbf{-83} & \textbf{0.0} & \textbf{0.0} & \textbf{-62} & \textbf{-65} & \textbf{3.96} & \textbf{0.0} & \textbf{2.03} & \textbf{5.17} & \textbf{1.71} \\ \hline
\end{tabular}
}
\end{table*}

\begin{table*}[h]
\centering
\caption{\textbf{Varying Utility Metrics.} Data selection results under the data overvaluation attack, with the \textbf{Class-wise Accuracy} as the utility metric~\cite{schoch2022cs}. Since this utility metric is designed for classification tasks, the Rent dataset for tabular data regression is missing.}
\small
\resizebox{\textwidth}{!}
{
\begin{tabular}{cl|ccccccccccccccc}
\hline
\multicolumn{2}{c|}{\multirow{3}{*}{\textbf{\begin{tabular}[c]{@{}c@{}}CML \\ setting\end{tabular}}}} & \multicolumn{15}{c}{\textbf{Decline (\%) in model utility due to data overvaluation}}                                                                                                                                                                                                \\ \cline{3-17} 
\multicolumn{2}{c|}{}                                                                                 & \multicolumn{5}{c|}{\textbf{Top-k Selection}}                                                     & \multicolumn{5}{c|}{\textbf{Above-average Selection}}                                             & \multicolumn{5}{c}{\textbf{Above-median Selection}}                          \\ \cline{3-17} 
\multicolumn{2}{c|}{}                                                                                 & \textbf{SV}   & \textbf{TSV} & \textbf{LOO}  & \textbf{BSV}  & \multicolumn{1}{c|}{\textbf{BV}}   & \textbf{SV}   & \textbf{TSV} & \textbf{LOO}  & \textbf{BSV}  & \multicolumn{1}{c|}{\textbf{BV}}   & \textbf{SV}   & \textbf{TSV} & \textbf{LOO}  & \textbf{BSV}  & \textbf{BV}   \\ \hline
\multirow{3}{*}{\textit{Bank}}                            & \textit{FedAvg}                           & 2.31          & 0.0          & 2.60          & 0.18          & \multicolumn{1}{c|}{0.19}          & 0.15          & 0.0          & 0.91          & 0.14          & \multicolumn{1}{c|}{0.02}          & 0.10          & 0.0          & 0.37          & 0.06          & 0.04          \\
                                                          & \textit{SplitFed}                         & 1.70          & 0.0          & 1.87          & 0.36          & \multicolumn{1}{c|}{0.09}          & 0.26          & 0.0          & 1.26          & 0.36          & \multicolumn{1}{c|}{0.03}          & 0.35          & 0.0          & 0.08          & 0.39          & 0.10          \\
                                                          & \textit{FedMD}                            & 0.93          & 0.0          & 0.46          & 0.12          & \multicolumn{1}{c|}{0.11}          & 0.15          & 0.0          & 0.41          & 0.09          & \multicolumn{1}{c|}{0.04}          & 0.12          & 0.0          & 0.24          & 0.11          & 0.04          \\ \hline
\multirow{3}{*}{\textit{MNIST}}                           & \textit{FedAvg}                           & 3.69          & 0.0          & 1.67          & 2.55          & \multicolumn{1}{c|}{2.39}          & 1.74          & 0.0          & 1.95          & 0.89          & \multicolumn{1}{c|}{1.50}          & 2.40          & 0.0          & 2.28          & 0.49          & 1.14          \\
                                                          & \textit{SplitFed}                         & 3.52          & 0.0          & 2.42          & 2.76          & \multicolumn{1}{c|}{4.17}          & 0.40          & 0.0          & 5.52          & 1.32          & \multicolumn{1}{c|}{0.49}          & 2.35          & 0.0          & 2.75          & 0.77          & 2.28          \\
                                                          & \textit{FedMD}                            & 3.00          & 0.0          & 1.99          & 2.44          & \multicolumn{1}{c|}{2.44}          & 1.15          & 0.0          & 2.52          & 1.01          & \multicolumn{1}{c|}{2.34}          & 4.02          & 0.0          & 1.70          & 2.08          & 2.55          \\ \hline
\multirow{3}{*}{\textit{FMNIST}}                          & \textit{FedAvg}                           & 2.70          & 0.0          & 6.32          & 12.8          & \multicolumn{1}{c|}{2.51}          & 0.06          & 0.0          & 0.06          & 20.5          & \multicolumn{1}{c|}{0.06}          & 0.25          & 0.0          & 0.08          & 25.4          & 0.10          \\
                                                          & \textit{SplitFed}                         & 12.9          & 0.0          & 6.72          & 12.5          & \multicolumn{1}{c|}{10.6}          & 6.16          & 0.0          & 0.18          & 0.0           & \multicolumn{1}{c|}{5.42}          & 17.4          & 0.0          & 0.11          & 26.2          & 6.55          \\
                                                          & \textit{FedMD}                            & 5.51          & 0.0          & 3.75          & 7.36          & \multicolumn{1}{c|}{2.05}          & 9.67          & 0.0          & 1.85          & 7.61          & \multicolumn{1}{c|}{5.36}          & 17.6          & 0.0          & 10.1          & 6.70          & 1.40          \\ \hline
\multirow{3}{*}{\textit{CIFAR10}}                         & \textit{FedAvg}                           & 0.36          & 0.0          & 0.34          & 1.28          & \multicolumn{1}{c|}{2.95}          & 0.24          & 0.0          & 0.0           & 0.49          & \multicolumn{1}{c|}{5.65}          & 1.22          & 0.0          & 0.0           & 7.68          & 0.45          \\
                                                          & \textit{SplitFed}                         & 6.72          & 0.0          & 0.30          & 2.19          & \multicolumn{1}{c|}{2.49}          & 9.67          & 0.0          & 0.0           & 0.0           & \multicolumn{1}{c|}{9.00}          & 12.1          & 0.0          & 0.0           & 12.8          & 2.31          \\
                                                          & \textit{FedMD}                            & 3.46          & 0.0          & 5.28          & 5.27          & \multicolumn{1}{c|}{1.05}          & 11.5          & 0.0          & 13.8          & 0.0           & \multicolumn{1}{c|}{5.06}          & 8.01          & 0.0          & 8.40          & 11.2          & 4.56          \\ \hline
\multirow{3}{*}{\textit{AGNews}}                          & \textit{FedAvg}                           & 0.48          & 0.0          & 0.23          & 0.85          & \multicolumn{1}{c|}{0.29}          & 2.87          & 0.0          & 0.0           & 0.20          & \multicolumn{1}{c|}{0.51}          & 0.33          & 0.0          & 0.08          & 0.79          & 0.56          \\
                                                          & \textit{SplitFed}                         & 1.30          & 0.0          & 0.58          & 0.76          & \multicolumn{1}{c|}{0.91}          & 4.15          & 0.0          & 0.23          & 1.03          & \multicolumn{1}{c|}{3.19}          & 4.42          & 0.0          & 0.06          & 1.03          & 4.44          \\
                                                          & \textit{FedMD}                            & 1.41          & 0.0          & 1.17          & 1.18          & \multicolumn{1}{c|}{0.91}          & 5.06          & 0.0          & 0.0           & 2.99          & \multicolumn{1}{c|}{4.00}          & 3.04          & 0.0          & 0.0           & 1.99          & 4.81          \\ \hline
\multirow{3}{*}{\textit{Yahoo}}                           & \textit{FedAvg}                           & 5.16          & 0.0          & 5.51          & 4.34          & \multicolumn{1}{c|}{2.45}          & 35.1          & 0.0          & 19.7          & 31.1          & \multicolumn{1}{c|}{6.61}          & 5.26          & 0.0          & 8.06          & 5.78          & 2.39          \\
                                                          & \textit{SplitFed}                         & 6.29          & 0.0          & 5.62          & 4.48          & \multicolumn{1}{c|}{4.55}          & 35.1          & 0.0          & 24.1          & 33.3          & \multicolumn{1}{c|}{32.9}          & 7.08          & 0.0          & 7.47          & 5.46          & 5.89          \\
                                                          & \textit{FedMD}                            & 5.02          & 0.0          & 4.44          & 4.46          & \multicolumn{1}{c|}{4.10}          & 38.7          & 0.0          & 41.4          & 27.7          & \multicolumn{1}{c|}{15.7}          & 5.40          & 0.0          & 6.26          & 5.25          & 5.44          \\ \hline
\multicolumn{2}{c|}{\textbf{Average}}                                                                 & \textbf{3.69} & \textbf{0.0} & \textbf{2.85} & \textbf{3.66} & \multicolumn{1}{c|}{\textbf{2.46}} & \textbf{9.01} & \textbf{0.0} & \textbf{6.33} & \textbf{7.15} & \multicolumn{1}{c|}{\textbf{5.44}} & \textbf{5.08} & \textbf{0.0} & \textbf{2.67} & \textbf{6.34} & \textbf{2.50} \\ \hline
\end{tabular}
}
\end{table*}

\clearpage

\subsection{Varying Reward Functions}
\label{appendix:exp_reward}

\begin{table*}[h]
\centering
\caption{\textbf{Varying Reward Functions.}}
\resizebox{\textwidth}{!}
{
\begin{tabular}{cl|ccccc|ccccc}
\hline
\multicolumn{2}{c|}{\multirow{2}{*}{\textbf{\begin{tabular}[c]{@{}c@{}}CML \\ Setting\end{tabular}}}} & \multicolumn{5}{c|}{\textbf{Change in balanced reward $R^{in}_i$ (\%)}}     & \multicolumn{5}{c}{\textbf{Change in proportional reward $R^{ex}_i$ (\%)}}  \\ \cline{3-12} 
\multicolumn{2}{c|}{}                                                                                 & \textbf{SV}   & \textbf{TSV} & \textbf{LOO}  & \textbf{BSV}  & \textbf{BV}   & \textbf{SV}  & \textbf{TSV} & \textbf{LOO} & \textbf{BSV} & \textbf{BV}  \\ \hline
\multirow{3}{*}{\textit{Bank}}                            & \textit{FedAvg}                           & +77           & 0.0          & +124          & +102          & +55           & +66          & 0.0          & +53          & +53          & +33          \\
                                                          & \textit{SplitFed}                         & +103          & 0.0          & +109          & +118          & +58           & +75          & 0.0          & +51          & +70          & +50          \\
                                                          & \textit{FedMD}                            & +69           & 0.0          & +18           & +92           & +41           & +65          & 0.0          & +22          & +56          & +33          \\ \hline
\multirow{3}{*}{\textit{Rent}}                            & \textit{FedAvg}                           & +150          & 0.0          & +197          & +160          & +165          & +28          & 0.0          & +93          & +32          & +22          \\
                                                          & \textit{SplitFed}                         & +168          & 0.0          & +194          & +170          & +178          & +47          & 0.0          & +94          & +38          & +37          \\
                                                          & \textit{FedMD}                            & +142          & 0.0          & +196          & +160          & +179          & +24          & 0.0          & +100         & +33          & +37          \\ \hline
\multirow{3}{*}{\textit{MNIST}}                           & \textit{FedAvg}                           & +164          & 0.0          & +119          & +176          & +132          & +38          & 0.0          & +34          & +45          & +15          \\
                                                          & \textit{SplitFed}                         & +169          & 0.0          & +118          & +179          & +163          & +57          & 0.0          & +42          & +54          & +38          \\
                                                          & \textit{FedMD}                            & +163          & 0.0          & +77           & +171          & +147          & +42          & 0.0          & +9.6         & +54          & +16          \\ \hline
\multirow{3}{*}{\textit{FMNIST}}                          & \textit{FedAvg}                           & +67           & 0.0          & +58           & +178          & +34           & +90          & 0.0          & +13          & +78          & +14          \\
                                                          & \textit{SplitFed}                         & +138          & 0.0          & +58           & +183          & +122          & +107         & 0.0          & +12          & +92          & +77          \\
                                                          & \textit{FedMD}                            & +132          & 0.0          & +174          & +181          & +162          & +27          & 0.0          & +50          & +79          & +28          \\ \hline
\multirow{3}{*}{\textit{CIFAR10}}                         & \textit{FedAvg}                           & +69           & 0.0          & +61           & +118          & +45           & +34          & 0.0          & +10          & +45          & +57          \\
                                                          & \textit{SplitFed}                         & +89           & 0.0          & +61           & +186          & +87           & +90          & 0.0          & +28          & +63          & +80          \\
                                                          & \textit{FedMD}                            & +65           & 0.0          & +113          & +186          & +59           & +29          & 0.0          & +73          & +63          & +26          \\ \hline
\multirow{3}{*}{\textit{AGNews}}                          & \textit{FedAvg}                           & +194          & 0.0          & +190          & +194          & +180          & +47          & 0.0          & +84          & +54          & +21          \\
                                                          & \textit{SplitFed}                         & +194          & 0.0          & +190          & +194          & +184          & +61          & 0.0          & +84          & +61          & +32          \\
                                                          & \textit{FedMD}                            & +192          & 0.0          & +124          & +189          & +185          & +47          & 0.0          & +42          & +57          & +24          \\ \hline
\multirow{3}{*}{\textit{Yahoo}}                           & \textit{FedAvg}                           & +140          & 0.0          & +187          & +193          & +169          & +15          & 0.0          & +61          & +56          & +24          \\
                                                          & \textit{SplitFed}                         & +188          & 0.0          & +187          & +197          & +185          & +92          & 0.0          & +61          & +88          & +75          \\
                                                          & \textit{FedMD}                            & +93           & 0.0          & +193          & +192          & +173          & +7.2         & 0.0          & +71          & +60          & +29          \\ \hline
\multicolumn{2}{c|}{\textbf{Average}}                                                                 & \textbf{+132} & \textbf{0.0} & \textbf{+131} & \textbf{+168} & \textbf{+129} & \textbf{+52} & \textbf{0.0} & \textbf{+52} & \textbf{+59} & \textbf{+37} \\ \hline
\end{tabular}
}
\end{table*}

\subsection{Data Overvaluation with Incomplete Knowledge}
\label{appendix:exp_incomplete_knowledge}

\begin{table*}[h]
\centering
\caption{\textbf{Incomplete Knowledge.} Reward allocation results under the data overvaluation attack with incomplete knowledge of the block subset $\datasubset$.}
\resizebox{\textwidth}{!}
{
\begin{tabular}{cl|ccccc|ccccc}
\hline
\multicolumn{2}{c|}{\multirow{2}{*}{\textbf{\begin{tabular}[c]{@{}c@{}}CML \\ Setting\end{tabular}}}} & \multicolumn{5}{c|}{\textbf{Change in balanced reward $R^{in}_i$ (\%)}}     & \multicolumn{5}{c}{\textbf{Change in proportional reward $R^{ex}_i$ (\%)}}  \\ \cline{3-12} 
\multicolumn{2}{c|}{}                                                                                 & \textbf{SV}   & \textbf{TSV} & \textbf{LOO}  & \textbf{BSV}  & \textbf{BV}   & \textbf{SV}  & \textbf{TSV} & \textbf{LOO} & \textbf{BSV} & \textbf{BV}  \\ \hline
\multirow{3}{*}{\textit{Bank}}                            & \textit{FedAvg}                           & +77           & 0.0          & +124          & +102          & +55           & +66          & 0.0          & +53          & +53          & +33          \\
                                                          & \textit{SplitFed}                         & +103          & 0.0          & +109          & +118          & +58           & +75          & 0.0          & +51          & +70          & +50          \\
                                                          & \textit{FedMD}                            & +69           & 0.0          & +18           & +92           & +41           & +65          & 0.0          & +22          & +56          & +33          \\ \hline
\multirow{3}{*}{\textit{Rent}}                            & \textit{FedAvg}                           & +150          & 0.0          & +197          & +160          & +165          & +28          & 0.0          & +93          & +32          & +22          \\
                                                          & \textit{SplitFed}                         & +168          & 0.0          & +194          & +170          & +178          & +47          & 0.0          & +94          & +38          & +37          \\
                                                          & \textit{FedMD}                            & +142          & 0.0          & +196          & +160          & +179          & +24          & 0.0          & +100         & +33          & +37          \\ \hline
\multirow{3}{*}{\textit{MNIST}}                           & \textit{FedAvg}                           & +164          & 0.0          & +119          & +176          & +132          & +38          & 0.0          & +34          & +45          & +15          \\
                                                          & \textit{SplitFed}                         & +169          & 0.0          & +118          & +179          & +163          & +57          & 0.0          & +42          & +54          & +38          \\
                                                          & \textit{FedMD}                            & +163          & 0.0          & +77           & +171          & +147          & +42          & 0.0          & +9.6         & +54          & +16          \\ \hline
\multirow{3}{*}{\textit{FMNIST}}                          & \textit{FedAvg}                           & +67           & 0.0          & +58           & +178          & +34           & +90          & 0.0          & +13          & +78          & +14          \\
                                                          & \textit{SplitFed}                         & +138          & 0.0          & +58           & +183          & +122          & +107         & 0.0          & +12          & +92          & +77          \\
                                                          & \textit{FedMD}                            & +132          & 0.0          & +174          & +181          & +162          & +27          & 0.0          & +50          & +79          & +28          \\ \hline
\multirow{3}{*}{\textit{CIFAR10}}                         & \textit{FedAvg}                           & +69           & 0.0          & +61           & +118          & +45           & +34          & 0.0          & +10          & +45          & +57          \\
                                                          & \textit{SplitFed}                         & +89           & 0.0          & +61           & +186          & +87           & +90          & 0.0          & +28          & +63          & +80          \\
                                                          & \textit{FedMD}                            & +65           & 0.0          & +113          & +186          & +59           & +29          & 0.0          & +73          & +63          & +26          \\ \hline
\multirow{3}{*}{\textit{AGNews}}                          & \textit{FedAvg}                           & +194          & 0.0          & +190          & +194          & +180          & +47          & 0.0          & +84          & +54          & +21          \\
                                                          & \textit{SplitFed}                         & +194          & 0.0          & +190          & +194          & +184          & +61          & 0.0          & +84          & +61          & +32          \\
                                                          & \textit{FedMD}                            & +192          & 0.0          & +124          & +189          & +185          & +47          & 0.0          & +42          & +57          & +24          \\ \hline
\multirow{3}{*}{\textit{Yahoo}}                           & \textit{FedAvg}                           & +140          & 0.0          & +187          & +193          & +169          & +15          & 0.0          & +61          & +56          & +24          \\
                                                          & \textit{SplitFed}                         & +188          & 0.0          & +187          & +197          & +185          & +92          & 0.0          & +61          & +88          & +75          \\
                                                          & \textit{FedMD}                            & +93           & 0.0          & +193          & +192          & +173          & +7.2         & 0.0          & +71          & +60          & +29          \\ \hline
\multicolumn{2}{c|}{\textbf{Average}}                                                                 & \textbf{+132} & \textbf{0.0} & \textbf{+131} & \textbf{+168} & \textbf{+129} & \textbf{+52} & \textbf{0.0} & \textbf{+52} & \textbf{+59} & \textbf{+37} \\ \hline
\end{tabular}
}
\end{table*}

\begin{table*}[h]
\centering
\caption{\textbf{Incomplete Knowledge.} Reward allocation results under the data overvaluation attack with incomplete knowledge of the block subset $\datasubset$.}
\label{tab:reward_alloc_incomplete}
\resizebox{\textwidth}{!}
{
\begin{tabular}{cl|ccccc|ccccc|ccccc}
\hline
\multicolumn{2}{c|}{\multirow{2}{*}{\textbf{\begin{tabular}[c]{@{}c@{}}CML \\ Setting\end{tabular}}}} & \multicolumn{5}{c|}{\textbf{Change in $\phi_i$ (\%)}}                    & \multicolumn{5}{c|}{\textbf{Change in $\phi_{-i}$ (\%)}}                 & \multicolumn{5}{c}{\textbf{Client-level valuation error}}                    \\ \cline{3-17} 
\multicolumn{2}{c|}{}                                                                                 & \textbf{SV}  & \textbf{TSV} & \textbf{LOO} & \textbf{BSV} & \textbf{BV}  & \textbf{SV}  & \textbf{TSV} & \textbf{LOO} & \textbf{BSV} & \textbf{BV}  & \textbf{SV}   & \textbf{TSV} & \textbf{LOO}  & \textbf{BSV}  & \textbf{BV}   \\ \hline
\multirow{3}{*}{\textit{Bank}}                            & \textit{FedAvg}                           & +66          & 0.0          & +119         & +69          & +81          & -27          & 0.0          & 0.0          & -7.0         & +46          & 0.48          & 0.0          & 3.18          & 0.57          & 1.12          \\
                                                          & \textit{SplitFed}                         & +75          & 0.0          & +95          & +86          & +99          & -35          & 0.0          & 0.0          & -9.9         & +45          & 0.58          & 0.0          & 1.79          & 0.70          & 1.03          \\
                                                          & \textit{FedMD}                            & +65          & 0.0          & +6.5         & +72          & +75          & -24          & 0.0          & 0.0          & -5.5         & +43          & 0.45          & 0.0          & 0.41          & 0.53          & 0.90          \\ \hline
\multirow{3}{*}{\textit{Rent}}                            & \textit{FedAvg}                           & +28          & 0.0          & +177         & +46          & +109         & -17          & 0.0          & 0.0          & -5.3         & +83          & 1.31          & 0.0          & 15.4          & 1.66          & 7.26          \\
                                                          & \textit{SplitFed}                         & +47          & 0.0          & +183         & +57          & +127         & -35          & 0.0          & 0.0          & -6.6         & +83          & 2.33          & 0.0          & 9.09          & 2.17          & 8.12          \\
                                                          & \textit{FedMD}                            & +24          & 0.0          & +187         & +49          & +109         & -15          & 0.0          & 0.0          & -4.6         & +60          & 1.25          & 0.0          & 17.5          & 1.59          & 6.39          \\ \hline
\multirow{3}{*}{\textit{MNIST}}                           & \textit{FedAvg}                           & +38          & 0.0          & +58          & +70          & +37          & -25          & 0.0          & 0.0          & -7.2         & +14          & 1.69          & 0.0          & 0.70          & 2.44          & 1.13          \\
                                                          & \textit{SplitFed}                         & +57          & 0.0          & +66          & +84          & +64          & -49          & 0.0          & 0.0          & -10          & +3.2         & 2.61          & 0.0          & 0.56          & 3.38          & 1.45          \\
                                                          & \textit{FedMD}                            & +42          & 0.0          & +29          & +85          & +34          & -30          & 0.0          & 0.0          & -7.6         & +10          & 1.83          & 0.0          & 0.87          & 2.20          & 1.22          \\ \hline
\multirow{3}{*}{\textit{FMNIST}}                          & \textit{FedAvg}                           & +90          & 0.0          & +44          & +117         & +37          & -49          & 0.0          & 0.0          & -44          & +27          & 0.48          & 0.0          & 0.36          & 3.37          & 0.26          \\
                                                          & \textit{SplitFed}                         & +107         & 0.0          & +44          & +141         & +115         & -157         & 0.0          & 0.0          & -82          & +5.0         & 1.27          & 0.0          & 0.38          & 4.79          & 0.68          \\
                                                          & \textit{FedMD}                            & +27          & 0.0          & +86          & +129         & +42          & -17          & 0.0          & 0.0          & -27          & -3.6         & 0.93          & 0.0          & 1.99          & 3.59          & 1.27          \\ \hline
\multirow{3}{*}{\textit{CIFAR10}}                         & \textit{FedAvg}                           & +34          & 0.0          & +62          & +77          & +82          & -66          & 0.0          & 0.0          & -34          & +24          & 0.64          & 0.0          & 0.36          & 2.87          & 0.55          \\
                                                          & \textit{SplitFed}                         & +90          & 0.0          & +65          & +100         & +119         & -101         & 0.0          & 0.0          & -32          & +30          & 0.90          & 0.0          & 0.38          & 3.88          & 0.98          \\
                                                          & \textit{FedMD}                            & +29          & 0.0          & +107         & +106         & +40          & -17          & 0.0          & 0.0          & -12          & +0.4         & 0.45          & 0.0          & 0.80          & 3.67          & 0.36          \\ \hline
\multirow{3}{*}{\textit{AGNews}}                          & \textit{FedAvg}                           & +47          & 0.0          & +153         & +87          & +59          & -37          & 0.0          & 0.0          & -9.8         & +27          & 7.09          & 0.0          & 7.23          & 7.06          & 6.87          \\
                                                          & \textit{SplitFed}                         & +61          & 0.0          & +154         & +98          & +68          & -56          & 0.0          & 0.0          & -13          & +17          & 8.92          & 0.0          & 5.53          & 6.35          & 6.38          \\
                                                          & \textit{FedMD}                            & +47          & 0.0          & +58          & +92          & +64          & -36          & 0.0          & 0.0          & -10          & +28          & 3.86          & 0.0          & 0.62          & 2.84          & 8.21          \\ \hline
\multirow{3}{*}{\textit{Yahoo}}                           & \textit{FedAvg}                           & +15          & 0.0          & +110         & +96          & +44          & -8.3         & 0.0          & 0.0          & -3.1         & +5.5         & 0.44          & 0.0          & 5.42          & 2.95          & 0.85          \\
                                                          & \textit{SplitFed}                         & +92          & 0.0          & +110         & +155         & +93          & -147         & 0.0          & 0.0          & -37          & -60          & 4.29          & 0.0          & 4.06          & 9.63          & 3.09          \\
                                                          & \textit{FedMD}                            & +7.2         & 0.0          & +129         & +106         & +48          & -3.8         & 0.0          & 0.0          & -1.4         & +1.1         & 0.18          & 0.0          & 4.21          & 2.36          & 0.81          \\ \hline
\multicolumn{2}{c|}{\textbf{Average}}                                                                 & \textbf{+52} & \textbf{0.0} & \textbf{+97} & \textbf{+92} & \textbf{+74} & \textbf{-45} & \textbf{0.0} & \textbf{0.0} & \textbf{-18} & \textbf{+23} & \textbf{2.00} & \textbf{0.0} & \textbf{3.85} & \textbf{3.27} & \textbf{2.81} \\ \hline
\end{tabular}
}
\end{table*}

\begin{table*}[h]
\centering
\caption{\textbf{Incomplete Knowledge.} Data selection results under the data overvaluation attack with incomplete knowledge of the block subset $\datasubset$.}
\label{tab:data_sel_decline_incomplete}
\small
\resizebox{\textwidth}{!}
{
\begin{tabular}{cl|ccccccccccccccc}
\hline
\multicolumn{2}{c|}{\multirow{3}{*}{\textbf{\begin{tabular}[c]{@{}c@{}}CML \\ setting\end{tabular}}}} & \multicolumn{15}{c}{\textbf{Decline (\%) in model utility due to data overvaluation}}                                                                                                                                                                                                \\ \cline{3-17} 
\multicolumn{2}{c|}{}                                                                                 & \multicolumn{5}{c|}{\textbf{Top-k selection}}                                                     & \multicolumn{5}{c|}{\textbf{Above-average selection}}                                             & \multicolumn{5}{c}{\textbf{Above-median selection}}                          \\ \cline{3-17} 
\multicolumn{2}{c|}{}                                                                                 & \textbf{SV}   & \textbf{TSV} & \textbf{LOO}  & \textbf{BSV}  & \multicolumn{1}{c|}{\textbf{BV}}   & \textbf{SV}   & \textbf{TSV} & \textbf{LOO}  & \textbf{BSV}  & \multicolumn{1}{c|}{\textbf{BV}}   & \textbf{SV}   & \textbf{TSV} & \textbf{LOO}  & \textbf{BSV}  & \textbf{BV}   \\ \hline
\multirow{3}{*}{\textit{Bank}}                            & \textit{FedAvg}                           & 0.25          & 0.0          & 2.61          & 0.29          & \multicolumn{1}{c|}{0.37}          & 0.09          & 0.0          & 1.03          & 0.20          & \multicolumn{1}{c|}{0.20}          & 0.09          & 0.0          & 0.35          & 0.06          & 0.13          \\
                                                          & \textit{SplitFed}                         & 0.34          & 0.0          & 1.94          & 0.97          & \multicolumn{1}{c|}{0.27}          & 0.19          & 0.0          & 1.26          & 0.33          & \multicolumn{1}{c|}{0.14}          & 0.37          & 0.0          & 0.08          & 1.46          & 0.28          \\
                                                          & \textit{FedMD}                            & 0.71          & 0.0          & 0.40          & 1.36          & \multicolumn{1}{c|}{0.32}          & 0.19          & 0.0          & 0.49          & 0.09          & \multicolumn{1}{c|}{0.06}          & 0.10          & 0.0          & 0.21          & 0.30          & 0.13          \\ \hline
\multirow{3}{*}{\textit{Rent}}                            & \textit{FedAvg}                           & 1.37          & 0.0          & 11.0          & 2.04          & \multicolumn{1}{c|}{11.4}          & 0.49          & 0.0          & 1.05          & 1.80          & \multicolumn{1}{c|}{8.39}          & 1.23          & 0.0          & 1.28          & 2.75          & 3.89          \\
                                                          & \textit{SplitFed}                         & 2.14          & 0.0          & 11.1          & 2.10          & \multicolumn{1}{c|}{11.9}          & 1.28          & 0.0          & 2.92          & 1.73          & \multicolumn{1}{c|}{5.64}          & 1.61          & 0.0          & 1.73          & 1.69          & 4.52          \\
                                                          & \textit{FedMD}                            & 1.94          & 0.0          & 12.7          & 4.10          & \multicolumn{1}{c|}{13.6}          & 0.98          & 0.0          & 4.56          & 2.79          & \multicolumn{1}{c|}{5.80}          & 2.84          & 0.0          & 2.20          & 2.39          & 3.02          \\ \hline
\multirow{3}{*}{\textit{MNIST}}                           & \textit{FedAvg}                           & 3.32          & 0.0          & 1.56          & 3.38          & \multicolumn{1}{c|}{3.35}          & 1.72          & 0.0          & 1.95          & 1.13          & \multicolumn{1}{c|}{1.84}          & 2.40          & 0.0          & 1.38          & 0.49          & 1.96          \\
                                                          & \textit{SplitFed}                         & 3.92          & 0.0          & 2.39          & 3.72          & \multicolumn{1}{c|}{5.23}          & 0.83          & 0.0          & 5.52          & 1.24          & \multicolumn{1}{c|}{0.82}          & 2.63          & 0.0          & 2.75          & 0.04          & 3.88          \\
                                                          & \textit{FedMD}                            & 2.74          & 0.0          & 2.00          & 3.40          & \multicolumn{1}{c|}{3.65}          & 1.43          & 0.0          & 2.52          & 1.74          & \multicolumn{1}{c|}{2.52}          & 4.03          & 0.0          & 1.70          & 1.84          & 4.38          \\ \hline
\multirow{3}{*}{\textit{FMNIST}}                          & \textit{FedAvg}                           & 4.46          & 0.0          & 6.34          & 13.7          & \multicolumn{1}{c|}{3.93}          & 0.06          & 0.0          & 0.06          & 21.0          & \multicolumn{1}{c|}{0.06}          & 0.24          & 0.0          & 0.08          & 25.4          & 0.10          \\
                                                          & \textit{SplitFed}                         & 12.9          & 0.0          & 6.70          & 12.2          & \multicolumn{1}{c|}{11.7}          & 17.9          & 0.0          & 4.77          & 0.0           & \multicolumn{1}{c|}{4.91}          & 19.2          & 0.0          & 0.11          & 26.2          & 5.62          \\
                                                          & \textit{FedMD}                            & 5.74          & 0.0          & 3.55          & 7.15          & \multicolumn{1}{c|}{1.46}          & 3.88          & 0.0          & 1.93          & 8.40          & \multicolumn{1}{c|}{6.51}          & 12.0          & 0.0          & 8.33          & 7.47          & 1.14          \\ \hline
\multirow{3}{*}{\textit{CIFAR10}}                         & \textit{FedAvg}                           & 1.37          & 0.0          & 0.32          & 2.21          & \multicolumn{1}{c|}{4.65}          & 8.52          & 0.0          & 0.0           & 0.49          & \multicolumn{1}{c|}{3.56}          & 1.65          & 0.0          & 0.0           & 11.6          & 0.0           \\
                                                          & \textit{SplitFed}                         & 7.15          & 0.0          & 0.22          & 2.23          & \multicolumn{1}{c|}{6.27}          & 19.7          & 0.0          & 0.0           & 0.0           & \multicolumn{1}{c|}{8.94}          & 14.1          & 0.0          & 0.0           & 12.8          & 5.00          \\
                                                          & \textit{FedMD}                            & 1.86          & 0.0          & 5.45          & 9.33          & \multicolumn{1}{c|}{1.09}          & 3.95          & 0.0          & 15.8          & 0.0           & \multicolumn{1}{c|}{3.07}          & 5.36          & 0.0          & 8.18          & 11.2          & 4.56          \\ \hline
\multirow{3}{*}{\textit{AGNews}}                          & \textit{FedAvg}                           & 0.28          & 0.0          & 0.85          & 0.91          & \multicolumn{1}{c|}{1.32}          & 2.01          & 0.0          & 2.33          & 0.20          & \multicolumn{1}{c|}{2.84}          & 0.0           & 0.0          & 0.08          & 0.41          & 3.91          \\
                                                          & \textit{SplitFed}                         & 1.23          & 0.0          & 0.58          & 1.27          & \multicolumn{1}{c|}{2.50}          & 3.67          & 0.0          & 0.23          & 1.03          & \multicolumn{1}{c|}{4.18}          & 4.03          & 0.0          & 0.06          & 1.47          & 4.44          \\
                                                          & \textit{FedMD}                            & 0.72          & 0.0          & 1.33          & 1.27          & \multicolumn{1}{c|}{1.77}          & 0.39          & 0.0          & 0.0           & 3.06          & \multicolumn{1}{c|}{6.45}          & 0.09          & 0.0          & 0.0           & 0.85          & 6.05          \\ \hline
\multirow{3}{*}{\textit{Yahoo}}                           & \textit{FedAvg}                           & 2.16          & 0.0          & 5.50          & 5.36          & \multicolumn{1}{c|}{3.59}          & 5.58          & 0.0          & 19.7          & 31.1          & \multicolumn{1}{c|}{7.84}          & 2.38          & 0.0          & 9.33          & 5.04          & 4.64          \\
                                                          & \textit{SplitFed}                         & 6.10          & 0.0          & 5.93          & 5.75          & \multicolumn{1}{c|}{5.93}          & 35.1          & 0.0          & 25.8          & 33.3          & \multicolumn{1}{c|}{32.9}          & 8.43          & 0.0          & 6.63          & 6.18          & 5.89          \\
                                                          & \textit{FedMD}                            & 0.92          & 0.0          & 4.51          & 4.84          & \multicolumn{1}{c|}{4.54}          & 7.48          & 0.0          & 41.4          & 29.4          & \multicolumn{1}{c|}{14.9}          & 1.45          & 0.0          & 6.26          & 5.25          & 5.56          \\ \hline
\multicolumn{2}{c|}{\textbf{Average}}                                                                 & \textbf{2.93} & \textbf{0.0} & \textbf{4.14} & \textbf{4.17} & \multicolumn{1}{c|}{\textbf{4.71}} & \textbf{5.50} & \textbf{0.0} & \textbf{6.35} & \textbf{6.62} & \multicolumn{1}{c|}{\textbf{5.79}} & \textbf{4.01} & \textbf{0.0} & \textbf{2.42} & \textbf{5.95} & \textbf{3.29} \\ \hline
\end{tabular}
}
\end{table*}

\end{document}